%% file: main.tex
\documentclass[a4paper,UKenglish,cleveref, thm-restate]{lipics-v2021}

\pdfoutput=1 
\hideLIPIcs  
\nolinenumbers

\usepackage[textsize=footnotesize,color=green!40]{todonotes}
\usepackage{xspace}
\usepackage{complexity}
\usepackage{subcaption}
\usepackage[textsize=footnotesize,color=green!40]{todonotes}
\usepackage{thmtools}
\usepackage{amsmath}
\usepackage{tikz}
\usetikzlibrary{shapes, calc, graphs, fit, positioning, backgrounds, arrows.meta}
\usepackage[most]{tcolorbox}
\usepackage{adjustbox}

\input{preamble}

\title{Structural parameterizations of Geodetic Set on directed (acyclic) graphs} 

\titlerunning{Parameterizations of Geodetic Set on DAGs} 

\author{Laurent Beaudou}{Université Clermont-Auvergne, CNRS, Mines de Saint-Étienne, Clermont-Auvergne-INP, LIMOS, 63000 Clermont-Ferrand, France\and\url{https://perso.isima.fr/~labeaudo/}}{laurent.beaudou@uca.fr}{https://orcid.org/0000-0003-1959-6855}{}
\author{Florent Foucaud}{Université Clermont Auvergne, CNRS, Mines Saint-Étienne, Clermont Auvergne INP, LIMOS, 63000 Clermont-Ferrand, France\and \url{https://perso.limos.fr/ffoucaud}}{florent.foucaud@uca.fr}{https://orcid.org/0000-0001-8198-693X}{This author received support from the ANR project GRALMECO (ANR-21-CE48-0004), the French government IDEX-ISITE initiative 16-IDEX-0001 (CAP 20-25), the International Research Center ``Innovation Transportation and Production Systems'' of the I-SITE CAP 20-25, and the CNRS IRL ReLaX.}
\author{Lucas Lorieau}{Université Clermont-Auvergne, CNRS, Mines de Saint-Étienne, Clermont-Auvergne-INP, LIMOS, 63000 Clermont-Ferrand, France\and \url{https://perso.limos.fr/~lulorieau/}}{lucas.lorieau@limos.fr}{https://orcid.org/0009-0006-5084-5118}{This author has received financial support from the CNRS through the MITI interdisciplinary programs.}
\author{Prafullkumar Tale}{Indian Institute of Science Education and Research Pune, Pune, India\and\url{https://pptale.github.io/}}{prafullkumar@iiserpune.ac.in}{https://orcid.org/0000-0001-9753-0523}{ARG-MATRICS Grant titled MaPLE (ARGM/2026/0089)}

\authorrunning{L. Beaudou, F. Foucaud, L. Lorieau, P. Tale} 

\Copyright{Laurent Beaudou, Florent Foucaud, Lucas Lorieau, Prafullkumar Tale} 

\ccsdesc[100]{Theory of computation $\to$ Design and analysis of algorithms} 

\keywords{Geodetic Set, Directed Graphs, NP-hardness, Parameterized Complexity} 

\EventEditors{John Q. Open and Joan R. Access}
\EventNoEds{2}
\EventLongTitle{42nd Conference on Very Important Topics (CVIT 2016)}
\EventShortTitle{CVIT 2016}
\EventAcronym{CVIT}
\EventYear{2016}
\EventDate{December 24--27, 2016}
\EventLocation{Little Whinging, United Kingdom}
\EventLogo{}
\SeriesVolume{42}
\ArticleNo{23}

\begin{document}

\maketitle

\begin{abstract}
In \gsfull, we are given a (directed) graph and seek a small solution set 
\(S \subseteq V(G)\) such that every vertex lies on a shortest directed path 
between two vertices in \(S\).
While most prior work on \gsfull has focused on undirected graphs, 
in this article we study the problem on directed graphs from the perspective 
of parameterized complexity.

It is known that the problem is \W[2]-hard when parameterized by the solution 
size \(k\), even on directed acyclic graphs (DAGs). 
We investigate 
structural parameterizations of the problem.

Our first result is a kernel of size \(2^{O(\vcn)}\) for \gsfull on general digraphs, 
where \(\vcn\) denotes the vertex cover number of the underlying (undirected) graph. 
This implies an algorithm running in time \(2^{O(\vcn^2)} \cdot n^{O(1)}\).
Furthermore, we prove that, assuming 
the \ETH, the problem does not admit an 
algorithm running in time \(2^{o(\vcn^2)} \cdot n^{O(1)}\). Such a tight quadratic 
exponential lower bound in the parameter is relatively uncommon in parameterized 
complexity. These results generalize earlier work on undirected graphs by Foucaud 
et al.~[STACS 2025], and complements a recent result on directed graph by Foucaud et al.~[CALDAM 2026], 
that showed that the problem is \textsf{para}-\NP-hard 
for the pathwidth and feedback vertex set number of the underlying graph.

Next, we show that on general digraphs, \gsfull admits a natural kernel of size 
\((k\Delta)^{O(\mfd)}\), where \(\Delta\) is the maximum degree and \(\mfd\) denotes 
the reachability diameter of the digraph (a natural analogue of diameter of
undirected graphs). This yields an algorithm running in time 
\((k\Delta)^{O(\mfd \cdot k)}\cdot n^{O(1)}\). We further prove that, assuming the \ETH, 
the problem does not admit an algorithm running in time 
\((k\Delta)^{o(\mfd \cdot k)} \cdot n^{O(1)}\).
Finally, we justify the necessity of combining parameters by establishing the 
following hardness results for \gsfull:
\begin{enumerate}
    \item It is \W[2]-hard parameterized by \(k\), even on digraphs 
    of maximum degree \(3\).
    \item It is \textsf{para}-\NP-hard parameterized by 
    maximum degree and reachability diameter.
\end{enumerate}
One can infer that the problem remains \W[2]-hard when 
parameterized by \(k\), even on graphs of reachability diameter \(3\)
from Ara\'ujo and Arraes~[DAM 2022]. 

All our conditional lower bounds and hardness results hold even when the input 
digraph is restricted to be a DAG.
\end{abstract}


\input{introduction}

\input{prelims}

\input{vertex-cover-MFCS}

\section{Solution size, maximum degree and reachability diameter}
\label{sec-sol-size-max-deg-mfd}
We restate the result proved in this section.
\thmkernel*
\input{kernel-algo}

\input{fpt-lower-bound}

\input{subcubic-solution-size-reduction}

\input{mfd+max-degree-reduction}

\input{conclusion}

\bibliography{references1}



\end{document}

%% file: preamble.tex
\usepackage[dvipsnames]{xcolor}

\newcommand{\gsfull}{\textsc{Directed Geodetic Set}\xspace}

\newcommand{\partsat}{\textsc{3-Partitioned 3-SAT}\xspace}

\newcommand{\vcn}{\textsf{vcn}\xspace}

\newcommand{\yes}{\texttt{yes}\xspace}
\newcommand{\no}{\texttt{no}\xspace}
\newcommand{\TRUE}{\texttt{true}\xspace}
\newcommand{\FALSE}{\texttt{false}\xspace}
\newcommand{\mfd}{\textsf{rdiam}\xspace}
\newcommand{\ETH}{$\mathsf{ETH}$\xspace}

\newcommand{\problemdc}[4]{
	\begin{center}
		\noindent\framebox{\begin{minipage}{#4\textwidth}
				#1\\
				\textbf{Input:} #2\\ 
				\textbf{Task:} #3
		\end{minipage}}
	\end{center}
}

\crefname{observation}{observation}{observations}
\Crefname{observation}{Observation}{Observations}
\crefname{myclaim}{claim}{claims}
\Crefname{myclaim}{Claim}{Claims}

\crefname{itemi}{item}{items}
\Crefname{itemi}{Item}{Items}

\theoremstyle{remark}
\newtheorem{reducrule}[theorem]{Reduction rule}
\newcommand{\Rulename}{Rule}
\newcommand{\Rulesname}{Rules}
\newcommand{\rulename}{rule}
\newcommand{\rulesname}{rules}
\crefname{reducrule}{\rulename}{\rulesname}
\Crefname{reducrule}{\Rulename}{\Rulesname}

%% file: introduction.tex
\section{Introduction}
\label{sec:intro}

We study \textsc{Geodetic Set} on directed graphs (digraphs for short), which we thus call \gsfull. \textsc{Geodetic Set} is a classic network design problem introduced in~\cite{harary1993} based on the notion of convexity for the shortest path metric. See the books~\cite{BOOKaraujo2025,bookGC} for more on graph convexity, for which \textsc{Geodetic Set} is a central representative problem. In \textsc{(Directed) Geodetic Set}, one asks for a minimum-size set $S$ of vertices such that each vertex in the input (di)graph is part of some shortest directed path between two vertices of $S$. \textsc{(Directed) Geodetic Set} models natural network design problems such as selecting hubs in a transportation network~\cite{floCALDAM20,bus}. It is extensively studied in the literature, and serves as one of the key problems in the area of metric-based algorithmic graph theory, together with its close relative \textsc{Metric Dimension}~\cite{BDM24,floICALP24}. \gsfull is formally stated as follows.

\problemdc{\gsfull}{A directed graph $D$ and an 
integer \(k\).}{Determine whether there exists a subset
\(S \subseteq V(D)\) of size at most \(k\) such that 
every vertex in \(V(D)\) lies on some directed shortest path
between two vertices in \(S\).}{0.95}

\textsc{Geodetic Set} was extensively studied on undirected graphs, on which its algorithmic complexity was almost exhaustively determined for standard graph classes~\cite{wellpart,floISAAC20,floCALDAM20,DBLP:journals/tcs/ChakrabortyGR23,DBLP:journals/iandc/ChakrabortyDFGL26,dourado2010,ekim2012,mezzini2018} and under the lenses of parameterized complexity~\cite{floISAAC20,floICALP24,floSTACS25,KK22,T25}, approximation algorithms~\cite{floCALDAM20,DIT21} and enumeration complexity~\cite{BDM24}.

On digraphs, several studies on \gsfull exist regarding upper and lower bounds~\cite{chang2004geodetic,Chartrand2003,chartrand2000geodetic,dong2009upper,Farrugia05,lu2007geodetic}, see also the book~\cite[Chapter 6]{bookGC}. However, \gsfull was only recently studied on directed graphs from an algorithmic perspective. In~\cite{AraujoA22}, \gsfull was shown to be \NP-hard on DAGs (even when the underlying graph is either split, bipartite or co-bipartite) but polynomial-time solvable on oriented cactus graphs (without 2-cycles). (For a digraph, its \emph{underlying graph} is the undirected graph obtained by forgetting the orientations of arcs, possibly deleting multiple edges.) In a very recent paper~\cite{DBLP:conf/caldam/Foucaud26}, \gsfull was shown to be \NP-hard even on DAGs whose underlying graph has small feedback vertex set number, but it is polynomial-time solvable on digraphs whose underlying graph is a tree, and it is \FPT\ on DAGs for the feedback edge set number of the underlying graph.

Our goal is to continue the study of \gsfull and exhibit parameters for which we can obtain \FPT\ algorithms, and prove their optimality under standard complexity assumptions. One natural parameter is the \emph{vertex cover number} $\vcn$ of the underlying graph 
of the input digraph. Note that there exists an \FPT\ 
algorithm for \gsfull with respect to $\vcn$ on undirected 
graphs~\cite{floSTACS25}. Another relevant parameter is the \emph{reachability diameter}: the maximum directed distance between any pair of vertices in the digraph that are reachable. This parameter was recently introduced in~\cite{DBLP:conf/sosa/HaeuplerJS26} (see also~\cite{bals2026revisitingdiameterdirectedgraphs}) and earlier under the name \emph{maximum finite distance}~\cite{DBLP:journals/algorithmica/FoucaudGPS21} (unlike the classical diameter of a digraph, it is always finite). This metric-based parameter is an analogue of the diameter of a connected undirected graph, and greatly restricts the possible distances in the input digraph.

\subparagraph*{Our results.}

We first extend a result of~\cite{floSTACS25} (for \textsc{Geodetic Set} on undirected graphs) to general directed graphs, by giving a kernel of size $2^{O(\vcn)}$, and deriving from it an \FPT\ algorithm. We also prove that, under the Exponential-Time Hypothesis (\ETH), this algorithm is essentially optimal.

\begin{restatable}{theorem}{thmvc}\label{thm-vc-algo}
    \gsfull admits an algorithm running in time $2^{O(\vcn^2)}\cdot n^{O(1)}$. Moreover, assuming the \ETH, it admits no algorithm running in time $2^{o(\vcn^2)}\cdot n^{O(1)}$, even on DAGs.
\end{restatable}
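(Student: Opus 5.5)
The proof has two parts, an algorithm and a lower bound. For the algorithm, the plan is to build a kernel with $2^{O(\vcn)}$ vertices and then brute-force over it, following the undirected approach of~\cite{floSTACS25}.

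\textbf{Kernel.} Compute a vertex cover $X$ of the underlying graph with $|X| \le 2\vcn$ by the standard 2-approximation. Then $I = V(D)\setminus X$ is independent in the underlying graph. Each $v\in I$ has a \emph{type} $(N^-(v), N^+(v))$, a pair of subsets of $X$, so there are at most $4^{|X|}$ types.

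Two vertices of the same type are false twins in the directed sense: they have identical in- and out-neighbourhoods. Hence swapping them is an automorphism, and they have equal distances to every other vertex.

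Next we bound how many vertices of one type matter. Fix a type class $T$ with $|T|\ge 3$ and a vertex $v\in T$.
\begin{itemize}
\item If $v$ has both an in-neighbour and an out-neighbour, a shortest path through $v$ enters and leaves via $X$.
\item If $v$ is a source or a sink in $D$, then $v$ must belong to every geodetic set, because it cannot be an internal vertex of any path.
\end{itemize}
In the first case, if some solution uses two twins $u,w\in T$, the path $u\to\cdots\to w$ certifies every vertex on shortest $u$--$w$ paths. Since twins are interchangeable, I expect a reduction rule of the following form: keep a constant number of vertices of each type, say $3$, delete the others, and adjust $k$ appropriately. Vertices of $T$ that are not in the solution must be covered. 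Because twins share all distances, a twin lies on a shortest $a$--$b$ path if and only if any other twin does, as long as $a,b\notin T$. If instead the only covering pairs involve twins themselves, a pair $(u,w)$ with $d(u,w)=2$ covers nothing else inside $T$; this is what limits the number of twins that can be absorbed.

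The delicate step, which I expect to be the main obstacle, is handling directed twins that are themselves endpoints of covering paths. Concretely, one must show that the optimum changes in a controlled way, namely that the solution size on large classes is a function of a bounded representative set. The fix is to keep $\min(|T|, c)$ copies of each type and reduce $k$ by the number of forced source/sink copies. After this, the kernel has $O(4^{2\vcn})=2^{O(\vcn)}$ vertices.

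\textbf{Brute force on the kernel.} The brute force must be smarter than enumerating all subsets of the kernel, which would cost $2^{2^{O(\vcn)}}$. A solution is determined by its intersection with $X$, which gives $2^{O(\vcn)}$ choices. Its intersection with $I$ is described, up to twin symmetry, by how many vertices it takes of each type; with at most $c$ kept per type, this gives $(c+1)^{2^{O(\vcn)}}$ choices, which is still too many. The better accounting goes as follows. A solution vertex in $I$ is only useful as an endpoint, and the pairs that matter are determined by their types. Every non-$X$ vertex is either forced (source/sink) or is covered as an internal vertex between two $X$-neighbours. The key observation to prove is that an optimal solution contains at most $O(\vcn^2)$ non-forced vertices of $I$: each such vertex needs to contribute a covering pair, and the relevant $X$ vertex pairs number $O(\vcn^2)$. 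Guessing $O(\vcn^2)$ types out of $2^{O(\vcn)}$ gives $2^{O(\vcn^2)}$ total time.

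\textbf{Lower bound.} Reduce from \partsat{} with $n$ variables, which under the \ETH{} has no $2^{o(n)}$ algorithm. Build a DAG with a vertex cover of size $O(\sqrt n)$, so that a $2^{o(\vcn^2)}$ algorithm would contradict the \ETH. Following the undirected construction of~\cite{floSTACS25}, the steps are:
\begin{enumerate}
\item Split each variable part into $\sqrt n$ blocks of $\sqrt n$ variables each.
\item Encode the assignment of a block by a choice among pairs of cover vertices. This works because pairs of $O(\sqrt n)$ cover vertices can index $\Theta(n)$ independent-set vertices, and each independent vertex is addressed by its in- and out-neighbourhood pair in $X$.
\item Represent each clause by independent vertices with carefully chosen arcs, such that a clause vertex is covered by a shortest path if and only if some chosen literal satisfies it.
\item Orient all arcs along a layering to obtain a DAG, and attach forced source/sink gadgets to fix endpoints.
\end{enumerate}
The main difficulty here is making the distance-2 paths through independent vertices encode exactly the intended satisfying pairs without spurious shortcuts. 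Acyclicity should help with this, since a topological layering forbids unintended paths.
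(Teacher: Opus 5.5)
\paragraph{Algorithm.} Your running-time accounting does not give the claimed bound. Choosing $O(\vcn^2)$ vertices or types, even as a multiset, out of $2^{O(\vcn)}$ costs $(2^{O(\vcn)})^{O(\vcn^2)}=2^{O(\vcn^3)}$, not $2^{O(\vcn^2)}$. Your $O(\vcn^2)$ bound is also not justified.

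The missing observation is that an optimal solution contains at most $|X|=O(\vcn)$ non-forced vertices. Let $S\subseteq I$ be the vertices of $I$ that are sources, sinks \emph{or transitive}. All of them lie in every geodetic set by \Cref{lem-sink-sources}. Moreover, $X\cup S$ is already a geodetic set: each $v\in I\setminus S$ has an in-neighbour $x$ and an out-neighbour $y\neq x$, both in $X$, with $xy\notin A(D)$, so $(x,v,y)$ is a shortest path. Enumerating the at most $|X|$ remaining vertices over a kernel with $2^{O(\vcn)}$ vertices then costs $2^{O(\vcn^2)}$. This is the paper's route: an $n^{O(\vcn)}$ algorithm run on the kernel.

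Your dichotomy ``forced source/sink, or covered as an internal vertex between two $X$-neighbours'' misses transitive vertices of $I$. These have both in- and out-neighbours but never lie inside a shortest path. Consequently your rule ``keep $c$ copies and reduce $k$ only by the forced source/sink copies'' is unsound on a large class of transitive twins. There you must delete copies \emph{and} decrease $k$, as the paper does for shortcutting classes.

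For non-shortcutting classes you leave open exactly the step you call the main obstacle. The paper closes it with an exchange argument (\Cref{lemma-accessible}). If an optimal solution contained twins $t_1,\dots,t_5$ of such a class $T$, replace $t_1,t_2,t_3$ by some $x\in N^-(T)$ and $y\in N^+(T)$ with $xy\notin A(D)$. The result is still geodetic, since each $t_i$ lies on $(x,t_i,y)$ and $t_4,t_5$ take over the roles of the removed twins. So optimal solutions use at most four twins, and the class can be cut to five copies without changing $k$.

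Once you have the $|X|$ bound, your own twin-symmetry idea even lets you skip the kernel. Enumerate a subset of $X$ and a multiset of at most $|X|$ non-forced types; that is $2^{O(\vcn^2)}$ options, and representatives are interchangeable by automorphisms.

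\paragraph{Lower bound.} This part is only an outline, and its one concrete encoding step cannot work. A choice among pairs of $O(\sqrt n)$ cover vertices has $O(n)$ outcomes, whereas a block of $\sqrt n$ variables has $2^{\sqrt n}$ assignments. The paper instead puts in the independent set, for each block, one vertex per assignment. Each such vertex's out-arcs to $t_j$ or $f_j$ spell out its assignment. For each of the three parts, $V,T,F,U,W$ together with the $\alpha/\beta$ vertices form a vertex cover of size $O(\sqrt n)$. Two mechanisms are then essential, and your plan has neither.

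First, exactly one vertex per block must be selected. No shortest path between sources and sinks covers a clause vertex. Also, $v_i$ can only be covered by a path starting in $A^i\cup\{v_i\}$, because the source $\alpha_{A^i}$ has direct arcs to every $r_j$. The budget is the number of sources and sinks plus one vertex per block, so exactly one vertex per block is free.

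Second, the clause test uses length-$4$ paths $a^i_p\to t_j\to c_\ell\to u_i\to w_i$, where $u_i$ is indexed by the block of the literal, not distance-$2$ paths. A selected vertex of another block $i'$ has the length-$3$ shortcut $a^{i'}_{p'}\to v_{i'}\to r_i\to w_i$. The arc $v_ir_i$ is deliberately missing, so only the selected vertex of the literal's own block can cover $c_\ell$. This is where the partition property (at most one literal per part in each clause) is used. Extra arcs $a^{S^i}_p u^{S'}_j$ for $S\neq S'$, and arcs out of the $\alpha$ vertices, block coverage across parts. So shortcuts are not just something to avoid: they are the tool that filters out the wrong blocks.
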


We then give another positive result by showing that if we combine as parameters the solution size, reachability diameter, and maximum degree of the input digraph, we obtain a kernel and an \FPT\ algorithm. Moreover, we show that its running time is essentially optimal under the \ETH.

\begin{restatable}{theorem}{thmkernel}
\label{thm-kernel}
\gsfull admits a kernel of size $(k\Delta)^{O(\mfd)}$, and an algorithm running in time $(k\Delta)^{O(\mfd\cdot k)}\cdot n^{O(1)}$. Moreover, assuming the \ETH, it admits no algorithm running in time $(k\Delta)^{o(\mfd\cdot k)} \cdot n^{O(1)}$, even on DAGs with $\mfd=2$.
\end{restatable}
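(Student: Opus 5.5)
Three parts: a counting argument for the kernel, brute force on top of it for the algorithm, and an ETH reduction for the lower bound.

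\textbf{Kernel.} I take a hypothetical solution $S$ with $|S|\le k$. Every vertex $v$ lies on a shortest directed path from some $s\in S$ to some $t\in S$. That path has length at most $\mfd$, since $t$ is reachable from $s$.

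So $v$ is within directed distance $\mfd$ of some vertex of $S$. Any vertex $s$ has at most $\sum_{i\le \mfd}\Delta^i\le 2\Delta^{\mfd}$ vertices within out-distance $\mfd$. Hence a yes-instance satisfies $n\le k(1+2\Delta^{\mfd})=(k\Delta)^{O(\mfd)}$.

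The kernelization algorithm is then trivial. If $n$ exceeds this bound, output a constant-size no-instance; otherwise output the instance unchanged. The assumption that $\Delta$ and $\mfd$ are parameters is fine here, because they are computed from the input in polynomial time.

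\textbf{Algorithm.} After kernelizing, I enumerate all subsets of size at most $k$ of the $N=(k\Delta)^{O(\mfd)}$ remaining vertices. There are $N^{k}=(k\Delta)^{O(\mfd\cdot k)}$ of them. For each one, I check the geodetic property in polynomial time using BFS distances: $v$ lies on a shortest $s$–$t$ path iff $d(s,v)+d(v,t)=d(s,t)<\infty$.

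\textbf{Lower bound.} This is the main obstacle. The target is a reduction from a problem with a known ETH lower bound of the form $f^{o(k)}$, specifically \textsc{$k\times k$ Grid Tiling} or \textsc{Multicolored Clique}, which have no $n^{o(k)}$ algorithm. The output should be a DAG with $\mfd=2$ in which $k'=O(k)$ and $\Delta=n^{O(1)}$.

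With $\mfd=2$ fixed, $(k'\Delta)^{o(\mfd k')}=n^{o(k)}$, so ruling that out matches the lower bound for the source problem. The point of the quadratic-in-$\mfd$-free statement is that $\mfd$ is a constant, so the degree must carry the hardness.

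The construction I plan is layered (three layers, giving $\mfd\le 2$):

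\begin{itemize}
\item For each colour class $i$, I add a source gadget consisting of its vertex set $V_i$ as candidate sources. Forced sinks and sources, such as pendant sources, sinks, and isolated structures, must be in every solution, and I use them to fix the budget.
\item For each pair $i<j$, I add middle vertices representing non-edges, or edge-checking vertices $w$.
\item The arcs are arranged so that $w$ is covered iff the chosen sources $u\in V_i$ and $v\in V_j$ have a length-2 path through $w$, encoding the adjacency of $u$ and $v$.
\end{itemize}

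The hard part is that arcs of length 1 from $u$ directly to a sink can create unwanted shortest paths. In a DAG with $\mfd=2$, a vertex $w$ on a path $s\to w\to t$ is covered iff $s\to t$ is not an arc. I would use exactly this: "covered iff not directly adjacent". That means encoding the selection via non-arcs, which is the complement-graph trick.

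Concretely, I would reduce from \textsc{Multicolored Independent Set}, which is equivalent to clique under complementation:

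\begin{itemize}
\item Each source vertex $x_u$ for $u\in V_i$ has arcs to all middle vertices of its colour group.
\item Sink vertices $y_v$ for a second copy, together with arcs $x_u\to y_v$ for edges $uv$, kill the coverage of middle vertices.
\item Consistency gadgets force the $i$-th source and the $i$-th sink choices to represent the same vertex. These are vertices $c_{u}$ reachable from $x_{u'}$ and reaching $y_{u''}$ exactly when $u'=u''$, but blocked for $u'\ne u''$ by direct arcs.
\end{itemize}

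I would verify both directions, and check that the budget $k'=2k+O(1)$ forces exactly one source and one sink per colour class. That budget argument is where I expect the most care to be needed.
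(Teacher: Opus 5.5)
Your kernel and brute-force algorithm are correct. Your kernel bound argues directly from a hypothetical solution: every $v$ satisfies $d(s,v)\le d(s,t)\le\mfd$ for some solution vertex $s$. The paper instead notes that each source strongly connected component must contain a solution vertex, and counts the ball of radius $\mfd$ around one representative per such component. Both give the same bound, and yours is slightly more direct.

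The gap is the ETH lower bound, which is the substantive part of the statement and which you only outline. Your framing is right: keep $\mfd=2$, $k'=O(k)$ and $\Delta$ polynomial, and exploit that a length-$2$ path $s\to w\to t$ is shortest iff $st$ is not an arc. But the construction as described does not work, and the repairs are not done.
\begin{itemize}
\item Taken literally, the ``three-layer'' picture makes every $x_u$ a source and every $y_v$ a sink. By \Cref{lem-sink-sources} they then all lie in every geodetic set, and nothing is being selected.
\item Once you attach pendant sources $a_i$ and sinks $b_j$ to avoid this, as you hint, the path $a_i\to x_u\to w\to y_v\to b_j$ has length $4$. So $\mfd=2$ fails unless you add shortcut arcs such as $a_i\to y_v$ and $x_u\to b_j$. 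These shortcuts decide which $x$'s, $y$'s, middle vertices and consistency vertices are already covered by the forced vertices, and none of this is specified or checked.
\item The middle vertices must be oriented asymmetrically. Suppose $w_{ij}$ had in-arcs from the $x$'s of both classes and out-arcs to the $y$'s of both classes. Then the consistent pair $x_{u_i},y_{u_i}$, which by design has no arc between them, would cover $w_{ij}$ regardless of adjacency.
\item The budget argument you defer is not routine. A solution may put a consistency or middle vertex into $S$ directly, or take two $x$'s and no $y$ in some class. You need an exchange or counting argument excluding this.
\end{itemize}

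As it stands, the lower bound is a plan rather than a proof; it can probably be completed along these lines, but that is where all the work is. The paper avoids the two-copy and consistency machinery altogether by reducing from $\ell\times\ell$-\textsc{Independent Set} with a single copy of $V(H)$:
\begin{itemize}
\item For each edge $uv$ between $V_i$ and $V_j$ there is a vertex $f_{uv}$ with in-arcs from every vertex of $V_i\cup V_j$ except $u$ and $v$, and an arc to one common sink $d$.
\item There are sources $a_i$ with arcs $a_i\to V_i$ and a shortcut arc $a_i\to d$.
\item There are sinks $b_i$ with arcs $V_i\to b_i$.
\item Vertices $c_i$ with $V_i\to c_i\to d$ force one chosen vertex per class.
\end{itemize}
Then $f_{uv}$ lies on the shortest path $(w,f_{uv},d)$ exactly for $w\in(V_i\cup V_j)\setminus\{u,v\}$, so it is covered iff $u$ and $v$ are not both chosen. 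This gives $k=3\ell+1$, $\Delta\le\ell^4$ and $\mfd=2$. The budget argument reduces to replacing a chosen $c_i$ by a vertex of $V_i$.
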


Unfortunately, there is little hope to generalize the above result by restricting the parameter combination. Indeed, we show in the two next theorems that one cannot omit neither the reachability diameter, nor the solution size, under standard complexity theory assumptions. 

\begin{restatable}{theorem}{thmWhard}
\label{thm-W-hard}
\gsfull remains \W[2]-hard for solution size $k$, even on DAGs with maximum degree~3.
\end{restatable}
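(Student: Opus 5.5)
We give a parameterized reduction from \textsc{Set Cover} (equivalently \textsc{Hitting Set}), which is \W[2]-hard for the solution size, to \gsfull on DAGs of maximum degree~3. The starting point is the known \W[2]-hardness on DAGs. Our reduction is the standard one from \textsc{Set Cover}: each set is a source and each element a sink, with an arc from a set to each element it contains. We then replace every high-degree vertex by a binary tree of subdivided arcs so that all degrees drop to at most~3. Throughout, we keep control of which vertices are forced into any geodetic set (sources and sinks of a DAG must belong to every geodetic set) and of which shortest paths exist.

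\textbf{Construction.} Let $(U,\mathcal{F},k)$ be an instance of \textsc{Set Cover} with $U=\{u_1,\dots,u_n\}$ and $\mathcal{F}=\{F_1,\dots,F_m\}$.

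\begin{itemize}
\item \emph{Set gadgets.} For each set $F_j$ we create a vertex $f_j$ and an out-branching (binary tree directed away from $f_j$) whose leaves are copies $x_{j,i}$, one for each $u_i\in F_j$.
\item \emph{Element gadgets.} For each element $u_i$ we create a vertex $e_i$ and an in-branching (binary tree directed towards $e_i$) whose leaves are copies $y_{i,j}$, one for each $F_j\ni u_i$.
\item \emph{Incidence arcs.} We add the arc $x_{j,i}\to y_{i,j}$.
\item \emph{Padding.} All trees are padded to a common depth $L=\lceil\log(n+m)\rceil$ using subdivided paths. This makes every $f_j$--$e_i$ path with $u_i\in F_j$ have the same length $2L+1$.
\item \emph{Forced sources and sinks.} We add one global source and one global sink, attached through two further binary trees of depth $L'$: the source reaches every $f_j$, and every $e_i$ reaches the sink. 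These two vertices are forced into the solution.
\end{itemize}

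Each $f_j$ then has in-degree~1 and out-degree at most~2, and the same bound holds at every internal tree node. The resulting graph is a DAG of maximum degree~3. We set the budget $k'=k+2$.

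\textbf{Intended correspondence.} The geodetic set should consist of the global source, the global sink, and $k$ further vertices that encode the chosen sets.

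\begin{itemize}
\item \emph{Forward direction.} Given a set cover $\mathcal{C}$ of size $k$, we want a shortest path from the source through each $f_j$ with $F_j\in\mathcal{C}$ to the sink. A single global pair of endpoints only covers one path, however. We therefore take the solution to be $\{f_j : F_j\in\mathcal{C}\}\cup\{\text{sink}\}$ together with a way of covering the unused set gadgets.
\item \emph{Fixing the unused set gadgets.} To make this work, we give each element gadget its own sink $t_i$ hanging below $e_i$. We collapse these into the global sink through the in-tree only if distances stay equal.
\end{itemize}

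Getting the covering right is the delicate part. The cleaner design is as follows. Every vertex of gadget $F_j$ together with its incident element paths lies on a shortest path from $f_j$ to the sink. Every vertex on the source side lies on a shortest path from the source to some chosen $f_j$. Two additional devices are needed:
\begin{itemize}
\item a \emph{bypass}: the source reaches the sink by a shortest path that covers all the tree vertices of the source and sink trees;
\item \emph{alternative routes}: an element path $y_{i,j}$ of an unchosen set must still be covered, via a shortest path from some chosen $f_{j'}$ with $u_i\in F_{j'}$ through a detour.
\end{itemize}

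Making the alternative routes work requires that internal vertices of unchosen gadgets not be needed. To achieve this we attach each unchosen gadget's vertices to the sink with equal-length ``shortcut'' arcs coming from the source tree. These shortcuts are themselves routed through degree-3 binary splits and cost no extra budget.

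\textbf{Main obstacle.} The hardest step is the backward direction: showing that any geodetic set of size $k+2$ yields a set cover of size $k$. Two points must be established. First, vertices of the gadget trees can be replaced by the corresponding $f_j$ without loss (a normalization argument using in/out-branchings). Second, equal-length padding guarantees that $e_i$ (or the $t_i$ below it) lies on a shortest path between solution vertices only if some selected $f_j$ has $u_i\in F_j$. I would first try to prove this normalization via a domination lemma, namely that every shortest path through a tree vertex extends to one through the root $f_j$. The equal-depth padding is exactly what should make that lemma hold. The reduction keeps $k'=k+O(1)$ and runs in polynomial time, so \W[2]-hardness on DAGs of maximum degree~3 follows.
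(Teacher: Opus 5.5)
Your proposal has two genuine gaps: it never fixes a construction, and it misses the idea that keeps source--sink shortest paths away from the element vertices.

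\textbf{The base reduction.} The one version you state in full does not work. Suppose a global source feeds every $f_j$ through a tree, every $e_i$ feeds a global sink through a tree, and all trees are padded to equal depth. Then every vertex lies on a shortest source--sink path. So these two forced vertices already form a geodetic set, and every \textsc{Set Cover} instance is mapped to a \yes-instance. Your remark that one pair of endpoints ``only covers one path'' is mistaken: $I(u,v)$ contains every vertex of every shortest $u$--$v$ path. Also, making each set a source, as in your first sentence, would force all $m$ set vertices into every solution by \Cref{lem-sink-sources}.

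The missing idea is the one in the Ara\'ujo--Arraes reduction that the paper starts from. It uses one source $s$ with arcs $sf_i$, two sinks with arcs $f_it_1$ and $u_jt_2$, and the extra arc $st_2$. Since $d(s,t_2)=1$, no shortest path from the forced vertex $s$ to $t_2$ meets an element vertex. Meanwhile the paths $(s,f_i,t_1)$ cover all set vertices independently of the solution. Hence $u_j$ is covered if and only if the solution contains $u_j$ or some $f_i$ with $j\in F_i$, and the budget is $k+3$. Your ``bypass'', ``detours'' and shortcuts attached to ``each unchosen gadget'' are never specified. The last cannot be part of a reduction at all, since which gadgets are unchosen depends on the solution. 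In the paper, the vertices you want them to handle (out-trees of unchosen sets, leaves $y_{i,j}$) are instead swept by the choice-independent sink $t_1$, re-attached at the roots $u_j^-$ of the element in-trees.

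\textbf{The degree reduction.} The step you single out as the main obstacle is the easy one: a vertex in the out-tree of $f_j$ or the in-tree of $e_i$ only imitates $f_j$ or $e_i$. The real danger is the internal vertices of the trees hanging off the \emph{forced} vertices, and equal-depth padding is exactly what makes them dangerous. Every path from the source to the root of the sink's in-tree runs through element vertices, and all such paths have the same length, so this single extra vertex covers every element. Symmetrically, the root of the source's out-tree reaches all $f_j$ with no shortcut to the sink. A bypass starting at the source itself does not prevent this. Giving every internal tree vertex its own shortcut is not free either, since such a vertex already has degree~$3$.

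You should know that the paper's degree-$3$ step, as written, has exactly this defect. In $D'$ the arc $st_2$ shortcuts only $s$. Combined with the paper's own claim that $s$--$t_1$ paths cover everything outside $Y$ and $T_{t_2}^{\mathrm{in}}$, the set $\{s,t_1,t_2,t_2^-\}$ (or $\{s,s^+,t_1,t_2\}$) is a geodetic set of size~$4$ whenever $k\ge 1$. Its normalization only moves solution vertices out of element gadgets and out-trees of the $f_i$.

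One repair is to use chains instead of balanced trees at the forced vertices:
\begin{itemize}
\item a chain $s=q_0\to q_1\to\dots\to q_m$, with a long padding path from $q_i$ to $f_i$;
\item a chain $w_1\to\dots\to w_n\to t_2$, entered at $w_j$ by a padding path from $u_j$;
\item the single shortcut arc $q_mw_1$;
\item an analogous chain into $t_1$, fed from a vertex placed just before each $u_j$.
\end{itemize}
Choose the paddings so that all $s$--$t_1$ routes have equal length, all routes from a fixed $f_i$ to $t_2$ have equal length, and every route from a $q_i$ through an element is longer than $m+n+1$. Then each $q_i$ reaches all $w_j$ and $t_2$ fastest through $q_mw_1$, and all degrees are at most~$3$. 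Moreover, every shortest path through $u_j$ between solution vertices has an endpoint in one of three places: on the padding to, at, or in the out-tree of $f_i$ for some $F_i\ni j$; or in the private part of $u_j$. With that, your normalization argument goes through.
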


Note that \gsfull is trivially in \XP\ for the solution size~$k$, since one can check (in polynomial time), for each vertex subset of size at most $k$, whether it is a valid solution.

\begin{restatable}{theorem}{thmmaxdeg}
\label{thm-maxdeg}
\gsfull remains \NP-complete on DAGs with maximum degree~3 and reachability diameter~14.
\end{restatable}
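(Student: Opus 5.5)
Membership in \NP\ is immediate: given a candidate set $S$, we compute all pairwise distances by BFS and check for each vertex $v$ whether $d(s,v)+d(v,t)=d(s,t)$ for some $s,t\in S$. For hardness, the plan is a polynomial reduction from a bounded-occurrence \NP-complete problem, such as \textsc{3-SAT} with each variable occurring at most a constant number of times. The target is a DAG in which every vertex has total degree at most~3 and every directed path has length at most~14.

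The construction has three kinds of gadgets. Each variable $x$ gets a small gadget with two candidate vertices $x^T,x^F$. A pendant-type structure forces exactly one of these, or a designated equivalent, into any solution of the budgeted size. In a DAG, every source and every sink must belong to the solution, because a source can only be an endpoint of a path through it, and similarly for sinks. So we plant forced sources and sinks (count $f$) and set $k=f+n$, one extra choice per variable. Each clause $C$ gets a clause vertex $c_C$. It must lie on a shortest path between two solution vertices, and such a path exists only through a literal vertex chosen as true. Concretely, each literal vertex $\ell$ has a short directed path to a common \enquote{collector}, passing through the clause vertices of the clauses containing $\ell$. To keep degrees at most~3, fan-out is realised by binary out-trees of constant depth, which is possible because occurrences are bounded. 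Distances are balanced so that the designated path through $c_C$ is geodesic, and competing routes between forced terminals are made strictly longer or impossible by acyclicity.

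Degree~3 is kept throughout by replacing every high-degree hub with a binary tree. This is exactly the point where an unbounded-occurrence reduction would fail: bounded depth needs bounded fan-out, and bounded fan-out needs bounded occurrences. The constant~14 then comes out as the sum of the gadget depths, namely variable-gadget depth, plus literal fan-out tree depth, plus clause paths, plus the collector tree. We would fix explicit depths and verify that the longest path is 14.

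Correctness has two directions. For a satisfying assignment, the forced terminals together with the true literal vertices cover everything. Internal tree vertices are covered by paths between forced sources and sinks, and clause vertices are covered via the true literals. Conversely, the budget forces one choice per variable gadget, and this choice is consistent (it cannot cover both $x^T$-side and $x^F$-side clause paths). Covering each $c_C$ forces a true literal in $C$.

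The main obstacle is this converse direction. We must rule out \enquote{unintended} geodesics, namely shortest paths between two forced terminals, or between terminals of different gadgets, that pass through a clause vertex without a true literal being selected. Sharing binary trees and a collector to save degree creates many such source–sink pairs. We will handle this by giving each clause its own private sink, or a sink reached only through literal paths. Path lengths will be padded with subdivisions so that any path reaching $c_C$ from a forced source must start at a literal vertex that is itself a solution vertex. We would enforce this by making literal vertices the only in-neighbours leading into the clause region, and making literal vertices themselves sources of that region, so that they are chosen exactly when selected.
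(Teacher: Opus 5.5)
\textbf{There is a genuine gap: the outline stops at the step that carries the proof, and your proposed fixes for it fail.} Your template matches the paper's: reduction from bounded-occurrence \textsc{3-SAT}, all sources and sinks forced, budget equal to the number of terminals plus $n$, and small trees to keep degree~$3$.

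In a DAG every vertex, in particular every clause vertex $c_C$, lies on some directed path from a source to a sink, and all these terminals are in the solution. So $c_C$ escapes coverage by the forced terminals only if every terminal-to-terminal path through $c_C$ is beaten by a strictly shorter path between the same endpoints that avoids $c_C$. Your remedies do not supply this:
\begin{itemize}
\item A private clause sink reached only through $c_C$ does the opposite. Every path from a source to it passes through $c_C$, so $c_C$ is covered whatever the assignment.
\item Padding by subdivisions only matters relative to an alternative route avoiding $c_C$, and you never provide one.
\item Making literal vertices ``sources of the clause region'' is self-defeating. If they are genuine sources, \Cref{lem-sink-sources} forces both $x^T$ and $x^F$ into every geodetic set and the budget argument collapses. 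If they have an in-neighbour in the variable gadget, the forced source above them reaches $c_C$ and you are back to the first problem.
\end{itemize}
The variable gadget has the same problem. Pendant vertices are themselves forced, and nothing in your outline produces a vertex that no pair of terminals covers and that only a vertex of the gadget can cover. That is what forces exactly one extra solution vertex per variable.

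\textbf{The missing idea is a shortcut that branches off \emph{before} the selectable vertex}, so the forced source can use it but a solution vertex further along cannot. The paper builds a chain $s^x\to\pi^x_1\to\cdots\to\pi^x_7$ for each literal of $x$. Consider the $p$-th occurrence of $x$, in clause $c$ at position $q$:
\begin{itemize}
\item An early arc $\pi^x_p\to m^{L^c}_q$, followed by $m^{L^c}_q\to t^c_2$, gives $s^x$ a short route to the clause sink.
\item A late arc $\pi^x_{3+p}\to m^{\bar L^c}_q$ enters a strictly longer route to $t^c_2$ through $u^{\bar L^c}$, $u^{L^c}$ and the dividing gadget $L^c$.
\end{itemize}
Hence $u^{\bar L^c}$, $u^{L^c}$ and the $a^{L^c}_i$ are not in $I(V_0)$ (\Cref{prop-delta+mfd-extremalcover}). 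The solution vertex $\pi^x_4$ lies after every early exit $\pi^x_1,\pi^x_2,\pi^x_3$ and at or before every late entry $\pi^x_4,\pi^x_5,\pi^x_6$. From it the long route is the only route, hence geodesic, and it covers the clause gadget.

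The variable gadget uses the same trick. The arc $s^xt^x_2$ bypasses $a^x$, so $a^x\notin I(V_0)$. Since $a^x$ is reachable only from $V^x$, some non-extremal vertex of $V^x$ must be selected.

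Without such a mechanism, and a concrete construction on which the degree bound~$3$ and reachability diameter~$14$ can be checked, your converse direction remains unproved. You rightly flagged it as the main obstacle.
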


Finally, it is known that \gsfull is \NP-complete even on DAGs with reachability diameter~3, and the proof actually implies \W[2]-hardness for solution size on such DAGs~\cite{AraujoA22}. However, the produced inputs have large maximum degree. 

\begin{figure}
    \centering

\begin{tikzpicture}[
  node distance=.8cm,
  every node/.style={draw, rectangle, rounded corners, align=center, minimum width=2.5cm},
  arr/.style={-{Stealth}, thick},
  darr/.style={-{Stealth}, thick, dashed}
]

\node[fill=green!50] (vc) {$\vcn$};
\node[right=of vc,fill=green!50] (mfdDk) {$\mfd+\Delta+k$};



\node[below=of vc, fill=red!50] (mfdD) {$\mfd+\Delta$};
\node[right=of mfdD, fill=orange!50] (mfdk) {$\mfd+k$};
\node[right=of mfdk, fill=orange!50] (Dk) {$\Delta+k$};


\draw[arr] (mfdD) -- (mfdDk);
\draw[arr] (mfdk) -- (mfdDk);
\draw[arr] (Dk) -- (mfdDk);

\draw[darr] (mfdD) -- (vc);
\end{tikzpicture}
    \caption{The parameters studied in this paper, and their relations, where $\vcn$ and $\Delta$ denote the vertex cover number and maximum degree of the underlying undirected graph, $\mfd$ is the reachability diameter, and $k$ is the solution size. An arc indicates that the lower parameter is upper-bounded by a function of the upper parameter. The dashed arc denotes that only \(\mfd\) is directly upper bounded by a function of \(\vcn\), but \(\Delta\) is not. However, with a simple pre-processing on the \gsfull instance, one can obtain
    that \(\Delta \le 2^{O(\vcn)}\). 
    For the top two parameters, \gsfull is \FPT\ (and our algorithms are asymptotically optimal under the \ETH). For the bottom left parameter, \gsfull is para-NP-hard; for the two bottom right parameters, it is $\W[2]$-hard and in \XP.}
    \label{fig:placeholder}
\end{figure}

Our results show that \gsfull remains hard even on very restrictive classes of digraphs. In that sense, it differs from other metric-based parameters such as \textsc{Broadcast Domination}, which is \NP-hard but \FPT\ for solution size on DAGs and also \FPT\ for solution size and maximum degree on general digraphs~\cite{DBLP:journals/algorithmica/FoucaudGPS21}. Note that, interestingly, \gsfull also differs from the related \textsc{Metric Dimension} problem, which is \FPT\ for solution size plus reachability diameter.\footnote{Indeed, in this problem, every vertex needs a distinct distance-vector to the solution vertex set of size $k$ (where the distance between two vertices may be infinite), and there can be at most $k+(\mfd+1)^k$ such distinct vectors in a YES-instance), yielding a trivial kernel of this size.}

\subparagraph*{Outline of the paper.}
We start by a preliminary section in \Cref{sec-prelims}. In \Cref{sec-vcn-para}, we study the tractability of \gsfull with respect to the parameter vertex cover number. \Cref{sec-sol-size-max-deg-mfd,sec-w2-hardness-sol-degree,sec-np-hard-max-deg-mfd} each deals with combinations of solution size, maximum degree and reachability diameter. They contain the proofs of \Cref{thm-kernel,thm-W-hard,thm-maxdeg}, respectively. We end the paper by presenting some open questions in \Cref{sec-conclusion}.

%% file: prelims.tex
\section{Preliminaries}
\label{sec-prelims}
We present general definitions and results that will be used throughout the paper. We refer to the book~\cite{DBLP:books/sp/CyganFKLMPPS15} for terminology and details on parameterized complexity.

A {\it directed graph} (digraph for short) $D$ consists of vertex set $V(D)$ and arc set $A(D)$, where each arc is an ordered pair of two different vertices. An arc from vertex $v$ to vertex $u$ is denoted by $vu$, $v$ is its \emph{tail} and $u$ is its \emph{head}. The \emph{underlying undirected graph} (or simply \emph{underlying graph}) of some digraph $D$ is the graph obtained by removing the orientation of each arc of $D$. An \emph{oriented path} of a digraph $D$ is a subgraph of $D$ whose underlying graph is a path. A \emph{directed path} (or \emph{dipath}) is an oriented path for which all arcs are oriented in the same direction. In the following, except when explicitly stated otherwise, we will simply call a path any directed path. The \emph{distance} from $u$ to $v$ is the length (number of arcs) of a shortest directed path from $u$ to $v$ (if $u=v$, the distance is~$0$). If no such path from $u$ to $v$ exists, then the distance to $u$ to $v$ is defined as $+\infty$. The \emph{reachability diameter} of a digraph $D$ is the largest distance from any vertex to another vertex of $D$ that is not $+\infty$. A digraph is called {\it strongly connected} if any two distinct vertices are connected by two directed paths (one in each direction). A (maximal) \emph{strongly connected component} of digraph $D$ is a maximal vertex subset of $V(D)$ that induces a strongly connected subdigraph of $D$. The \emph{in-neighborhood} of vertex $u$ is denoted by $N^-(u)$ and its {\it out-neighborhood} is denoted by $N^+(u)$. The in-neighborhood of a subset $S$ of $V(D)$ is $N^-(S)=(\cup_{u\in S}N^-(u))\setminus S$, and similarly the out-neighborhood of $S$ is defined. A vertex $x$ is called a {\it source}, if $N^-(x)=\emptyset$, and it is called a {\it sink} if $N^+(x)=\emptyset$. A vertex that is a source or a sink is called \emph{extremal}. For two vertices $u$ and $v$, the set of vertices that lie in some shortest path from $u$ to $v$ is denoted by $I(u,v)$ and for a subset $S$ of $V$, the {\it geodetic closure} of $S$, denoted by $I(S)$, is the set of all vertices which lie in some shortest path between two vertices of $S$. In other words, $I(S)=\cup_{u,v \in S} \big(I(u,v)\cup I(v,u)\big)$. We also say that a vertex $v$ \emph{is covered} by two vertices $u$ and $w$ if $v\in I(u,w)$. In a path $P$ from $u$ to $v$, vertices $u$ and $v$ are called the \emph{tail} and the \emph{head} of $P$, respectively. The vertices of $P$ which are neither the tail nor the head of $P$ are called its \emph{inner vertices}. A {\it directed acyclic graph} (DAG for short) is a digraph which does not contain any directed cycle. A vertex $v$ is \emph{transitive} if, for every in-neighbor $u_1$ and out-neighbor $u_2$ of $v$, either the arc $u_1u_2$ exists, or $u_1=u_2$.\footnote{Note that the latter condition was not present in the definition from~\cite{AraujoA22}, since the authors only considered digraphs without 2-cycles.} A vertex of a digraph is called a \emph{leaf} if, in the underlying undirected graph, it has degree~1. Note that any leaf of a digraph is either a sink, a source, or transitive.

We conclude this section with the following lemma.
\begin{lemma}[\cite{AraujoA22}]\label{lem-sink-sources}
	   In any digraph $D$, every source, sink and transitive vertex of $D$ (in particular, every leaf of $D$) belongs to every geodetic set of $D$.
\end{lemma}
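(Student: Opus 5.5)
The plan is a direct argument by contradiction on the position of the vertex inside a covering shortest path. Let $S$ be any geodetic set of $D$ and let $v$ be a source, a sink, or a transitive vertex, and suppose $v\notin S$. Since $S$ is geodetic, there are $u,w\in S$ with $v\in I(u,w)$. Since $v\notin S$, we have $v\neq u$ and $v\neq w$, so $v$ is an inner vertex of some shortest directed path $P$ from $u$ to $w$. Let $u_1$ be the predecessor and $u_2$ the successor of $v$ on $P$. Then $u_1v$ and $vu_2$ are arcs of $D$.

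We then distinguish the three cases. If $v$ is a source, the arc $u_1v$ contradicts $N^-(v)=\emptyset$. Symmetrically, if $v$ is a sink, the arc $vu_2$ contradicts $N^+(v)=\emptyset$.

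If $v$ is transitive, we first observe that $u_1\neq u_2$. A shortest path is simple: a repeated vertex would allow us to excise a closed subwalk and obtain a shorter walk, hence a shorter path, from $u$ to $w$. By transitivity, the arc $u_1u_2$ therefore exists. Replacing the subpath $u_1 v u_2$ of $P$ by the single arc $u_1u_2$ yields a directed path from $u$ to $w$ of length one less than that of $P$, contradicting that $P$ is shortest. Hence $v\in S$ in all three cases.

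For the parenthetical claim about leaves, let $v$ be a leaf with unique neighbour $x$ in the underlying graph. If only the arc $xv$ is present, $v$ is a sink; if only $vx$ is present, $v$ is a source. If both arcs are present (a 2-cycle), then every in-neighbour and every out-neighbour of $v$ equals $x$, so $v$ is transitive by the convention $u_1=u_2$ in the definition. The only mild subtlety of the whole argument is this case, where $v$ has a 2-cycle to $x$. It is handled because such $v$ is transitive by the clause $u_1=u_2$ of the definition, and, as shown above, the situation $u_1=u_2$ cannot arise for an inner vertex of a shortest path.
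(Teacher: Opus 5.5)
Your proof is correct. The paper gives no proof of its own and simply cites \cite{AraujoA22}; your argument is the standard one: a source or sink cannot be an inner vertex of any path, and a transitive inner vertex of a shortest path can be bypassed. Your observation that $u_1\neq u_2$ on a shortest path is what extends the argument to the paper's broader definition of transitivity, which allows 2-cycles (see the paper's footnote), and it also covers the case of a leaf lying on a 2-cycle.
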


%% file: vertex-cover-MFCS.tex
\section{Vertex cover number parameterization}
\label{sec-vcn-para}

We restate the main theorem of this section.

\thmvc*

\subsection{A kernel and an algorithm}

Two vertices of a digraph $D$ are said to be \emph{false twins} if they share the same incoming and outgoing open neighborhoods. Note that if two false twins share an arc, then one of them has to admit a loop, which is not possible in the considered setting. A set $S$ of vertices is called a \emph{false twin set}, or a \emph{set of false twins}, if all vertices of $S$ are pairwise false twins.
This notion of twins is central in the kernalization algorithm we present in the following. Indeed, since twins share exactly the same neighborhood, they behave exactly in the same manner with respect to the rest of the graph. 
We also need to define the following notion. 

\begin{definition}
The neighborhood of a set $S$ of vertices is said to be \emph{shortcutting} if, for all vertices $v\in S$, $v$ is transitive in $D[(V\setminus (S-v)]$.
\end{definition}

 In particular, if $S$ is an independent set and its neighborhood is shortcutting, then any vertex in $S$ is transitive. The following lemma will be useful. 

\begin{lemma}\label{lemma-accessible}
If a digraph $D$ contains a set $T$ of false twins whose neighborhood is not shortcutting, then any optimal geodetic set of $D$ contains at most four vertices of $T$.
\end{lemma}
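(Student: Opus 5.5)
Let $S$ be an optimal geodetic set of $D$ and suppose $|S\cap T|\ge 5$. The plan is to build a strictly smaller geodetic set, contradicting optimality. Since the neighborhood of $T$ is not shortcutting, there is a vertex $t\in T$ that is not transitive in $D':=D[V\setminus(T-t)]$. Because all vertices of $T$ share the same in- and out-neighbourhoods, which lie outside $T$, every vertex of $T$ fails transitivity in the same way. Concretely, there is an in-neighbour $a$ and an out-neighbour $b$ of $T$ with $a\neq b$, $ab\notin A(D)$, and $a,b\notin T$. Then $d(a,b)=2$, and every $t'\in T$ lies on a shortest path $a\,t'\,b$. So the pair $\{a,b\}$ covers all of $T$ at once.

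The replacement will be $S'=(S\setminus T')\cup\{a,b\}$, where $T'\subseteq S\cap T$ is chosen so that $|S'|<|S|$ and only one or two twins of $S\cap T$ are kept. We must check that every vertex $x$ covered by $S$ is still covered by $S'$. The key tool is twin symmetry: for $t_1,t_2\in T$ and any $y\notin T$, we have $d(t_1,y)=d(t_2,y)$ and $d(y,t_1)=d(y,t_2)$. Moreover, $d(t_1,t_2)$ equals $2$ whenever some out-neighbour of $T$ has an arc to some in-neighbour of $T$, or is itself an in-neighbour of $T$ (i.e.\ $T$ lies on a $2$-cycle through it); otherwise it is larger or infinite. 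A vertex $x$ covered by a pair in $S$ falls into one of three cases.
\begin{itemize}
\item If the pair is $(s,t_1)$ or $(t_1,s)$ with $s\notin T$, we keep one twin $t_0\in S\cap T$. Then $(s,t_0)$ covers the same non-$T$ vertices, and the vertices of $T$ are covered by $(a,b)$ anyway.
\item If the pair is $(t_1,t_2)$ with both in $T$, the covered vertices lie on shortest paths $t_1\to N^+(T)\to\cdots\to N^-(T)\to t_2$. Keeping two twins $t_0,t_0'\in S\cap T$ reproduces all such inner vertices, by symmetry, via the pair $(t_0,t_0')$.
\item Pairs not involving $T$ are untouched.
\end{itemize}
So keeping two twins and adding $a,b$ changes the size by $-(|S\cap T|-2)+2$, which is negative exactly when $|S\cap T|\ge 5$. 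This accounts for the constant four. If $a$ or $b$ is already in $S$, the bound only improves.

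The main obstacle is making the case analysis airtight. One issue is that $a$ or $b$ may coincide with an element whose coverage we rely on. Another is confirming that a pair $(t_1,t_1)$ is degenerate and covers only $t_1$ itself. The most delicate point is the pair $(t_1,t_2)$: a path from $t_1$ to $t_2$ cannot pass through a third twin as an inner vertex, since that twin could be bypassed using the shared neighbourhoods. Hence all inner vertices of such a path lie outside $T$, and the path with endpoints replaced by $t_0,t_0'$ is again a shortest path of the same length. I would first settle this twin-replacement claim as a short separate claim, and then do the counting.
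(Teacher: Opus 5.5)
Your proposal is correct and is essentially the paper's proof: like the paper, you remove three of the at least five twins in $S$, keep two twins to re-route any coverage involving $T$ by twin symmetry, and add a non-adjacent pair $x\in N^-(T)$, $y\in N^+(T)$ so that the removed twins are covered by the paths $x\,t_i\,y$. Your extra check that no shortest path to, from or between twins has an inner vertex in $T$ is a detail the paper leaves implicit. One harmless slip is your aside on $d(t_1,t_2)$: an arc from an out-neighbour of $T$ to an in-neighbour of $T$ only gives $d(t_1,t_2)\le 3$, not $2$, but you never use this.
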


\begin{proof}
Let $D$ be a digraph and $T = \{t_1, \dots, t_k\}$ a set of false twins whose neighborhood is not shortcutting. Then, there exist two vertices $x\in N^-(T)$ and $y\in N^+(T)$ so that $(x,y)\not\in A(D)$. Suppose for a contradiction that there exists a minimum geodetic set $S$ of $D$ with at least 5 vertices in $T$. Without loss of generality, say that vertices $t_1, \dots, t_5$ belong to $S$. Define then $S' = \left(S\setminus \{t_1, t_2, t_3\}\right)\cup\{x,y\}$. We claim that $S'$ is a geodetic set of $D$, thus contradicting the fact that $S$ is a minimum-size geodetic set. Let $v\in V(D)\setminus S$ be a vertex that was covered by $S$. If $v$ was covered by a shortest path with endpoints in $S\setminus T$, then it is still covered by the same shortest path. If $v$ was covered by a shortest path from a vertex $w\in S\setminus T$ to another vertex $t\in T\cap S$ (respectively from $t$ to $w$), then it is covered by a shortest path from $w$ to $t_4$ (respectively from $t_4$ to $w$). Finally, if $v$ was covered by a shortest path with both endpoints in $T\cap S$, then it is also covered by a shortest path with endpoints $t_4$ and $t_5$ because there is no arc between them by assumption. It remains to verify that vertices $t_1, t_2$ and $t_3$ are covered. Consider the paths $g_i = (x, t_i, y)$ for some $i\in[3]$. Since by assumption there is no arc between $x$ and $y$ in $D$, $g_i$ is a shortest path for any $i\in[3]$, and vertices $t_1$, $t_2$ and $t_3$ are covered.
\end{proof}

The two following reduction rules remove some vertices of  false twin sets. \Cref{prop-kernel} shows that this is enough to get a kernel for the parameter \vcn. 

\begin{reducrule}\label{rule-shortcutting}
Let $(D,k)$ be an instance of \gsfull and $T$ be a set of at least three  false twins in $D$. If their neighborhood is shortcutting, then delete one vertex of $T$ and decrease $k$ by one.
\end{reducrule}

\begin{reducrule}\label{rule-not-shortcutting}
Let $(D,k)$ be an instance of \gsfull and $T$ be a set of at least six  false twins in $D$. If their neighborhood is not shortcutting, then delete one vertex of $T$ without decreasing $k$.
\end{reducrule}

\begin{proposition}\label{prop-kernel}
The algorithm applying exhaustively \Cref{rule-not-shortcutting,rule-shortcutting} in a sequential manner on an instance $(D,k)$ returns an equivalent instance $(D',k')$ with $|V(D')| = 2^{O(\vcn)}$.
\end{proposition}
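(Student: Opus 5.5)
The proof has two parts: the two rules are safe, and once neither applies the digraph has $2^{O(\vcn)}$ vertices. For the size bound, fix a minimum vertex cover $C$ of the underlying graph, with $|C|=\vcn$. The set $I=V(D)\setminus C$ is independent, so each $v\in I$ is determined, up to false twinhood, by the pair $(N^-(v),N^+(v))$ of subsets of $C$. Both sets lie in $C$ and there are no loops, so there are at most $4^{\vcn}$ such types (at most $3^{\vcn}$ if 2-cycles are excluded). After exhaustive application of the rules, each type class in $I$ has at most five vertices: at most two if its neighbourhood is shortcutting, and at most five otherwise. Hence $|V(D')|\le \vcn+5\cdot 4^{\vcn}=2^{O(\vcn)}$. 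The deletions do not increase the vertex cover number, since $C\cap V(D')$ still covers, so the bound holds with respect to the original parameter. Note that shortcutting is defined relative to the whole class $T$, and deleting a vertex of $T$ can change whether the remaining class is shortcutting. This is harmless, because we only bound the final size, where one of the two rules must be inapplicable to each class. Each rule application deletes a vertex, so there are polynomially many steps.

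\textbf{Safeness of \Cref{rule-not-shortcutting}.} Let $|T|\ge 6$ with non-shortcutting neighbourhood, and let $D'=D-t$ for some $t\in T$. Deleting a twin preserves distances between the remaining vertices, since any shortest path through $t$ can be rerouted through another twin of $T$. If $S$ is a minimum geodetic set of $D$, then by \Cref{lemma-accessible} it contains at most four vertices of $T$, so some $t'\in T\setminus S$ exists; by symmetry we may delete $t'$. Then $S$ is a geodetic set of $D'$, because shortest paths through $t'$ reroute via another twin. Conversely, take a geodetic set $S'$ of $D'$. By the same lemma applied in $D'$ (where $T-t$ still has at least five vertices and, by the false-twin property, the same neighbourhood $N^-(T)$, $N^+(T)$), some twin $t''\in T-t$ is not in $S'$ and is covered by a shortest path through $x\,t''\,y$. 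Because the twins share neighbourhoods, the same pair of endpoints also covers $t$ in $D$ via $x\,t\,y$. So $S'$ is geodetic in $D$.

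\textbf{Safeness of \Cref{rule-shortcutting}.} Here every $v\in T$ is transitive in $D-(T-v)$. I expect this to be the main obstacle, because transitivity is not being claimed in $D$ itself. First I would argue that every geodetic set contains all of $T$ except possibly one vertex. Suppose $t_1,t_2\in T\setminus S$, and $t_1$ lies on a shortest path $a\to\dots\to x\,t_1\,y\to\dots\to b$. If $x\neq y$, the arc $xy$ shortcuts the path, which is a contradiction. If $x=y$, that would be a closed walk, impossible on a shortest path. So $t_1$ can only be an endpoint, and hence every vertex of $T$ is in $S$. This actually shows that all of $T$ lies in every geodetic set, as in \Cref{lem-sink-sources}.

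Then, given a solution $S$ of $D$ of size at most $k$, take $S-t$ in $D-t$. Paths with an endpoint at $t$ are replaced by paths from the twin $t'$ to the same endpoints, and the coverage of $t$ itself disappears. Conversely, given $S'$ of $D-t$ of size at most $k-1$, we know $T-t\subseteq S'$, so $S'\cup\{t\}$ is geodetic in $D$: $t$ covers itself and distances are unchanged. We need $|T|\ge 3$ so that at least two twins remain; this covers the case where the shortest path between two members of $T$ is used, with twins at distance 2 through $x$, $y$.
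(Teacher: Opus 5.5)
Your proof is correct and follows essentially the paper's route. Both rules are shown safe by rerouting shortest paths through a remaining twin, together with Lemma~\ref{lemma-accessible} and the fact that vertices of a shortcutting class lie in every solution, and the size bound $\vcn + 5\cdot 4^{\vcn}$ comes from grouping the independent side into false twin classes. Your use of a third twin when a shortest path joins two vertices of $T$ in \Cref{rule-shortcutting} is in fact more careful than the paper.

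Two small remarks. In the converse direction of \Cref{rule-not-shortcutting}, you should take $S'$ to be a \emph{minimum} geodetic set of $D'$ before invoking Lemma~\ref{lemma-accessible}, since that lemma only speaks about optimal solutions. Also, false twins are never adjacent, so $N^-(v)$ and $N^+(v)$ are the same in $D$ and in $D-(T-v)$. Hence each vertex of a shortcutting class is already transitive in $D$, so the paper simply applies \Cref{lem-sink-sources}, and deleting a twin never changes whether the remaining class is shortcutting.
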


\begin{proof}
Let $(D,k)$ be an instance of \gsfull. Consider \Cref{rule-shortcutting} and assume that $D$ contains three vertices $u$, $v$ and $w$ that are  false twins with a shortcutting neighborhood. Suppose that $D$ contains a geodetic set $S$ of size (at most) $k$ (\emph{i.e.} $(D,k)$ is a \yes-instance). Vertices $u$, $v$ and $w$ are transitive since their neighborhood is shortcutting, 
and hence they belong to $S$. Apply \Cref{rule-shortcutting} by deleting $u$ from $D$ to create a new digraph $D'$, and define $S' = S\setminus\{u\}$. Any vertex that was covered by a shortest path in $D$ with terminal vertices $u$ and $z\in S$ is covered by a shortest path in $D'$ with terminal vertices $v$ and $z$ (because $u$ and $v$ are twins in $D$), thus $(D', k-1)$ is a \yes-instance for \gsfull. Conversely, if $D'$ admits a geodetic set $S'$ of size at most $k-1$, then $S'\cup\{u\}$ is a geodetic set of size at most $k$ in $D$. Applying \Cref{rule-shortcutting} thus creates an equivalent instance each time it is used.

Now, consider \Cref{rule-not-shortcutting} and assume that there exists a set $V' = \{v_1, \dots, v_i\}$ of at least 6 vertices that is a strict false twin set and whose neighborhood is not shortcutting. Suppose first that $D$ admits a minimum-size geodetic set $S$ of size (at most) $k$. By \Cref{lemma-accessible}, $S$ contains at most four vertices of $V'$. Without loss of generality, suppose $v_1, \dots, v_4 \in S$. Delete $v_6$ from $D$ to construct a new digraph $D'$. Then, it is clear that $S$ is a geodetic set of $D'$ and $(D', k)$ is a \yes-instance for \gsfull. Conversely, suppose that $D'$ admits a geodetic set of size at most $k$. Again by \Cref{lemma-accessible}, we can assume that $S$ contains at most four vertices from $\{v_1, \dots, v_5\}$ and that those vertices are $v_1, \dots, v_4$. This means in particular that there exists a shortest path $P$ covering $v_5$ in $D'$. Since $v_5$ and $v_6$ are twins in $D$, the path obtained by replacing $v_5$ by $v_6$ in $P$ is a valid path in $D$, and is of same length as $P$. It is thus a shortest path covering $v_6$ and $S$ is a geodetic set of $D$ of size (at most) $k$. This means that applying \Cref{rule-not-shortcutting} creates an equivalent instance each time it is used.

The previous arguments prove that the kernelization algorithm using \Cref{rule-shortcutting,rule-not-shortcutting} is correct. Now, let us compute the number of remaining vertices of the digraph $D$ after applying the aforementioned algorithm. Consider an instance $D'$ where no previous rule can be applied. If $k \leq 0$, we return a trivial \no-instance (for example a single vertex). If $k> 0$, let us denote by $X$ a minimum-size vertex cover of $D'$. Because none of the rules can be applied on $D'$, any  false twin set is of size at most five. By partitioning the independent set $V\setminus X$ into inclusion-wise maximal  false twin sets, we obtain:
\[
|V(D')| \leq |X| + 5\cdot 4^{|X|} = 2^{O(\vcn)}.\qedhere
\]
\end{proof}

The next proposition states the existence of the algorithm to be applied on the above kernel. It is the last element necessary for the proof of \Cref{thm-vc-algo}.

\begin{proposition}\label{prop-algo-exp-DGS}
\gsfull admits an algorithm running in time $n^{O(\vcn)}$.
\end{proposition}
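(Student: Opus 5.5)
We aim for an algorithm that guesses how a minimum solution meets a vertex cover and then completes the guess greedily. We first compute a minimum vertex cover $X$ of the underlying graph in time $2^{O(\vcn)}\cdot n^{O(1)}$. Let $I=V(D)\setminus X$, which is an independent set. We partition $I$ into inclusion-wise maximal false twin classes. Each class is determined by its in-neighborhood and out-neighborhood inside $X$, so there are at most $4^{|X|}$ classes. The whole algorithm is a guess over a structured description of the solution. Each guess is then verified or completed in polynomial time, and the number of guesses will be bounded by $n^{O(\vcn)}$.

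\textbf{What we guess.} We first guess $S\cap X$, which gives at most $2^{|X|}$ options. For the part of the solution inside $I$, we use the twin structure:
\begin{itemize}
\item \emph{Shortcutting classes.} If a class has a shortcutting neighborhood, all its vertices are transitive, so by \Cref{lem-sink-sources} they belong to every geodetic set. We take them all into $S$, with no guessing.
\item \emph{Non-shortcutting classes.} By \Cref{lemma-accessible}, an optimal solution contains at most four vertices of such a class, and by symmetry only the number chosen matters. Naively this still costs $5^{4^{|X|}}$ guesses, which is too many.
\end{itemize}

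\textbf{Reducing the guess.} The key observation is that, for coverage purposes, a solution vertex in $I$ acts only through its class. Beyond one or two representatives, additional vertices of the same class only help to cover themselves or each other. Two solution vertices of the same class cover each other's classmates via paths of length $2$, through some $x\in N^-$ and $y\in N^+$ with $xy\notin A(D)$. I therefore plan the following guesses:
\begin{enumerate}
\item Guess a set $R\subseteq I$ of \emph{representative} solution vertices, at most two per class that matters. Only $O(|X|^2)$ pairs of representatives are needed to cover the vertices of $X$, since each vertex of $X$ needs a single covering pair. So at most $O(|X|^2)$ vertices of $I$ need to be guessed, giving $n^{O(|X|^2)}$ options.
\item Refine this. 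For every $v\in X\setminus S$, guess a covering pair from $S$. Since a covering pair is determined by two vertices, each involving at most a class and a count, we get $n^{O(|X|)}$ guesses in total, which matches the claimed $n^{O(\vcn)}$.
\end{enumerate}

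\textbf{Completing the solution.} Once $X$ is fully covered by the guessed vertices, the remaining task is to cover the uncovered vertices of $I$. A vertex $t\in I$ lies on a shortest path between solution vertices exactly when it is an inner vertex of such a path, which goes $x\,t\,y$ locally with $x\in N^-(t)$ and $y\in N^+(t)$. Whether $t$ is covered depends only on its class and on which vertices of $X\cup$(representatives) lie at appropriate distances. Adding more $I$-vertices of other classes never changes distances within $X$, because $I$ is independent. Hence, for each class, coverage by the fixed guessed set is determined. The still-uncovered vertices of a class can be handled independently and optimally:
\begin{itemize}
\item A class whose vertices can be covered by pairs involving guessed vertices contributes nothing.
\item Otherwise, the class must contribute its own vertices, namely all of them, or two of them if two twins mutually cover the rest.
\end{itemize}
This per-class local optimization is polynomial, and we accept the guess if the total size is at most $k$.

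\textbf{Main obstacle.} The hard step is justifying that only $O(|X|)$ solution vertices of $I$ need to be guessed, i.e.\ that each class's contribution is otherwise forced by local rules. The danger is that a class's solution vertices serve as endpoints covering vertices of other classes, creating interactions between classes. I would handle this by an exchange argument. Any endpoint $s\in I$ used to cover a vertex in another class $t$ must be at finite distance from $t$, via $X$. So we can replace $s$ by its $X$-neighbor on the path, or charge $s$ to a guessed representative of its class, since twins are interchangeable. This shows a solution exists whose cross-class roles are played by guessed vertices only.
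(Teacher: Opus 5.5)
There is a genuine gap, and it sits exactly where you flag the ``main obstacle''. Your completion phase assumes that, once the guessed vertices are fixed, coverage of each twin class is determined and the remaining classes can be optimized independently. That is false in general. A vertex $s\in I$ added to the solution for its own class can also be the endpoint of shortest paths $s\,x\cdots t\cdots$ that cover vertices $t$ of other classes, so the per-class choices interact.

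Your exchange argument does not repair this. Replacing $s$ by its $X$-neighbor on one such path leaves $s$ itself possibly uncovered. It also breaks the other paths that start at $s$ through different out-neighbors. ``Charging $s$ to a representative'' presupposes the bound on non-representative solution vertices that you are trying to prove.

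The local rule ``two twins mutually cover the rest'' is also wrong as stated. A length-$2$ path $x\,t\,y$ through a twin $t$ has its endpoints $x,y$ in $X$, not in the class. Two solution vertices of the same independent twin class generally do not cover a third twin at all. As written, the proposal therefore contains no correct bound on the number of non-forced solution vertices lying in $I$, and that bound is what the $n^{O(\vcn)}$ running time needs.

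The missing idea is a simple global upper bound that makes all the twin-class machinery unnecessary. Let $S$ be the set of transitive vertices of $I$; these lie in every geodetic set by \Cref{lem-sink-sources}. Then $X\cup S$ is already a geodetic set. Indeed, any $v\in I\setminus S$ has all its neighbors in $X$. Being non-transitive, it has an in-neighbor $x$ and an out-neighbor $y\neq x$ with $xy\notin A(D)$, so $(x,v,y)$ is a shortest path.

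Hence a minimum geodetic set consists of $S$ together with at most $|X|$ further vertices. You can enumerate all $S'\subseteq V(D)\setminus S$ with $|S'|\le|X|$ and test whether $S\cup S'$ is geodetic, in total time $n^{O(\vcn)}$. This is the paper's proof. Your first steps (computing $X$ and taking all shortcutting classes) are compatible with it, but the guessing of representatives and covering pairs is neither needed nor, as written, correct.
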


\begin{proof}
Let $(D,k)$ be an instance of \gsfull. Compute in time $2^{O(\vcn)}\cdot n^{O(1)}$ a minimum-size vertex cover $X$ of $D$ using some \FPT\ algorithm (see \cite{harris_et_al2024}). Let $I=V(D) \setminus X$. Compute then in polynomial time the set $S$ of all vertices in $I$ whose neighborhood is shortcutting. Those vertices belong to any geodetic set of $D$. Now, notice that $X\cup S$ is a geodetic set of $D$: any vertex $v\in I\setminus S$ has two different neighbors $x\in N^-(v)$ and $y\in N^+(v)$ that both belong to $X$, so that the arc $xy$ does not belong to $D$. Indeed, the vertices in $I$ for which no such neighbors can be found are exactly the ones with a shortcutting neighborhood, \emph{i.e.} vertices of $S$. Thus $v$ is covered by the shortest path $(x, v, y)$. With this observation, we can enumerate all subsets $S'$ of $V(D) \setminus S$ with size at most $|X|$ and check whether the set $S\cup S'$ is a geodetic set in polynomial time. If at some point, one is a geodetic set of size at most $k$, answer \yes, and answer \no otherwise. The total running time of this algorithm is then $2^{O(\vcn)}\cdot n^{O(1)} + n^{O(1)} + n^{O(\vcn)} = n^{O(\vcn)}$.
\end{proof}

This implies the proof of the first part of \Cref{thm-vc-algo}, which we state below.

\begin{proof}[Proof of \Cref{thm-vc-algo} (algorithm)]
Let $(D,k)$ be an instance of \gsfull and denote by $\vcn$ the vertex cover number of $D$. First, we apply the kernelization algorithm presented before to obtain an equivalent instance $(D',k')$ of size $|V(D')| = 2^{O(\vcn)}$ in polynomial time. Then, we apply the algorithm presented in the proof of \Cref{prop-algo-exp-DGS} on instance $(D', k')$ and return the answer obtained. Correctness of the algorithm is ensured by \Cref{prop-kernel,prop-algo-exp-DGS}. This last step is executed in time $(2^{O(\vcn)})^{O(\vcn)} = 2^{O(\vcn^2)}$, thus ending the proof.
\end{proof}

\subsection{\texorpdfstring{\ETH}{ETH}-based lower bound}\label{sub-vcn}

In this subsection, we prove the second part of \Cref{thm-vc-algo}, \emph{i.e.} that the algorithm obtained in \Cref{prop-algo-exp-DGS} is optimal under \ETH \cite{ETH1999}. To this end, we present a reduction from the following variant of \textsc{3-SAT} defined in \cite{lampis2025}.

\problemdc{\partsat}{A formula $\varphi$ in 3-CNF form, together with a partition of the set of its variables $\mathcal X$ into three disjoint sets $X, Y, Z$, with $|X| = |Y| = |Z| = N$, such that no clause contains more than one variable from each $X, Y$ and $Z$.}{Determine whether $\varphi$ is satisfiable.}{0.95} 

\begin{proposition}[\cite{lampis2025}]\label{prop-ETH-3SAT}
Unless \ETH fails, \partsat does not admit an algorithm running in time $2^{o(N)}$.
\end{proposition}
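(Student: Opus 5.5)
The plan is a direct linear-size reduction from \textsc{3-SAT} to \partsat in which the parameter $N$ equals the number $n$ of variables of the original formula. The \ETH\ lower bound then transfers immediately. Recall that, under the \ETH, \textsc{3-SAT} on $n$ variables admits no $2^{o(n)}$ algorithm; by the Sparsification Lemma we may also assume $m=O(n)$ clauses, though this is not even needed here.

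Let $\psi$ be a 3-CNF formula over variables $x_1,\dots,x_n$. I would create three copies of each variable, $x_i^X$, $x_i^Y$, $x_i^Z$, and set $X=\{x_i^X\}$, $Y=\{x_i^Y\}$, $Z=\{x_i^Z\}$, so that $|X|=|Y|=|Z|=n=N$. The formula $\varphi$ consists of two kinds of clauses.
\begin{itemize}
\item For each clause $C=(\ell_1\vee\ell_2\vee\ell_3)$ of $\psi$, with its literals in some fixed order, add the clause $C'$ obtained by replacing the variable of $\ell_1$ by its $X$-copy, that of $\ell_2$ by its $Y$-copy and that of $\ell_3$ by its $Z$-copy, keeping the signs. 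Clauses with fewer than three literals are handled the same way on the available positions.
\item For each $i$, add the equality constraints $x_i^X\leftrightarrow x_i^Y$ and $x_i^Y\leftrightarrow x_i^Z$, written as the four 2-clauses $(\neg x_i^X\vee x_i^Y)$, $(x_i^X\vee\neg x_i^Y)$, $(\neg x_i^Y\vee x_i^Z)$ and $(x_i^Y\vee\neg x_i^Z)$.
\end{itemize}
By construction every clause uses at most one variable from each of $X$, $Y$, $Z$. In particular, a clause of $\psi$ that repeats a variable is split across distinct copies, so it causes no problem.

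For correctness, a satisfying assignment of $\psi$ extends to $\varphi$ by giving all three copies of $x_i$ the value of $x_i$. Conversely, the equality clauses force the three copies of each variable to agree in any satisfying assignment of $\varphi$. Reading off the common value therefore satisfies every original clause, because each $C'$ is just $C$ with variables renamed to equal-valued copies.

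The construction is polynomial and has $N=n$. Hence a $2^{o(N)}$ algorithm for \partsat would decide \textsc{3-SAT} in time $2^{o(n)}$, contradicting the \ETH. The only step requiring care is ensuring the partition condition holds simultaneously for the original clauses and for the consistency clauses. Chaining the equalities $X$–$Y$ and $Y$–$Z$, rather than enforcing them all within one group, is what achieves this.
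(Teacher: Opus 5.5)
Your reduction is correct. The paper gives no proof of this proposition: it imports it from \cite{lampis2025}, so there is no in-paper argument to compare against. What you provide is a self-contained proof of the imported fact, via the standard route for such partitioned variants. You triplicate each variable, send the literals of each clause to copies in distinct parts, and enforce consistency with 2-clauses that themselves respect the partition.

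The step that genuinely needs care is making the consistency clauses respect the partition, and chaining $X$--$Y$ and $Y$--$Z$ handles it. A cyclic chain of implications $x_i^X\to x_i^Y\to x_i^Z\to x_i^X$ would achieve the same with three clauses per variable instead of four.

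Your remark that sparsification is unnecessary is also right. The output has $N=n$ and size polynomial in $n$, since after removing duplicate clauses there are only $O(n^3)$ of them. Any polynomial overhead is therefore absorbed into $2^{o(n)}$, and a $2^{o(N)}$ algorithm for \partsat would contradict the \ETH.

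One small point on the definition. The formal statement only forbids two variables from the same part in one clause, which your 2-clauses satisfy. The paper's reduction text does at one point speak of each clause containing \emph{exactly} one variable from each part. If you wanted that stronger form, you could pad any clause $(a\vee b)$ that lacks a $Z$-variable into $(a\vee b\vee z)\wedge(a\vee b\vee\neg z)$ for some $z\in Z$. Do the same for any other missing part. This preserves satisfiability and leaves $N$ unchanged.
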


\subparagraph*{Reduction description}

Consider an instance $\mathcal I = (X, Y, Z,\mathcal C)$ of 3-SAT where $\mathcal X = X\cup Y\cup Z$ is the variable set, with $|X| = |Y| = |Z| = N$. By adding dummy variables, if necessary, we can assume that \(N = n^2\) for some integer \(n\).
Let $\mathcal C = \{C_1, \dots, C_m\}$ be the set of clauses of $\mathcal I$. We create a digraph $\mathcal D=(V, A)$  through the following procedure. The construction is highly symmetric with respect to the partition of $\mathcal X$, so we describe it only for the set $X$, for the sake of clarity.

\begin{itemize}
    \item Partition the variables of $X$ into buckets $X^1, \dots, X^{n}$, such that each bucket contains $n$ variables. Without loss of generality, we assume an arbitrary but fixed ordering of buckets and variables inside each bucket.
    We use the notation \(X^i_j\) to denote $j$-th variable of the \(i\)-th bucket. Throughout the construction, we follow the notation that superscript will corresponds the index of the bucket while 
    subscript denotes the vertex index itself.
    
    \item For every bucket $X^i$, add to $V(\mathcal D)$ a set $A^i$ of $2^{ n}$ vertices corresponding to each possible assignment of the variables of the bucket. Denote those vertices by $a^i_{p}$ for $p\in [2^{n}]$.
    
    \item For each $i\in [n]$, add two new vertices $\alpha_{A ^i}$ and $\beta_{A^i}$. For each \(i \in [n]\), add the arcs $\alpha_{A^i}a^i_{p}$ and $a^i_{p}\beta_{A^i}$ for each $p\in [2^{n}]$.
    
    \item Add vertex sets $V$, $R$, $T$ and $F$ on $n$ vertices each. Denote by $v_i$ (respectively $r_i$, $t_i$, $f_i$) the vertices of $V$ (respectively $R$, $T$, $F$) for $i\in[n]$. Add the arcs $a^i_{p}v_i$ for each $i\in [n]$ and $p\in[2^{ n}]$. Add also the arcs $v_i r_{j}$ for all $i,j\in [n]$ such that $i\not=j$. Finally, add the arcs $\alpha_{A_i}v_j$ and $\alpha^i_{p}r_j$ for all $i,j\in [n]$ and $p\in[2^{ n}]$.
    
    \item For every \(i \in [n]\) and every \(p \in [2^n] \), connect $a^i_{p}$ with vertices in $T\cup F$ such that the connection represents the values of variables set $X^i$ by the assignment corresponding to $a^i_{p}$. Formally, the reduction adds the following arcs.
    \begin{itemize}
        \item If the partial assignment associated to $a^i_{p}$ is setting variable $X^i_j$ to \TRUE, then we add the arc $a^i_{p}t_j$.
        See the green arcs in Figure~\ref{fig-reduction}.
        \item If the partial assignment associated to $a^i_{p}$ is setting variable $X^i_j$ to \FALSE, then we add the arc $a_{i,p}f_j$.
        See the red arcs in Figure~\ref{fig-reduction}.
    \end{itemize}
    Note that $a^i_{p}$ is connected to exactly \(n\) vertices in \(T \cup F\). More precisely for every \(j \in [n]\), $a^i_{p}$ is connected to \(t_j\) if and only if it is \emph{not} connected to \(f_j\), and vice versa.
    \item Add vertex sets $U$ and $W$, both on $n$ vertices. Denote by $u_i$ (respectively $w_i$) the vertices of the set $U$ (respectively $W$) for $i\in [ n]$. Add the arcs $u_iw_i$ and $r_iw_i$ for all $i\in [n]$.

    \item For all sets $S$ in $\{T, F\}$, add two vertices $\alpha_{S}$ and $\beta_{S}$, as well as a vertex $\alpha_{U}$. For $S\in\{T,F,U\}$, add the arcs $\alpha_Ss_i$ and $s_i\beta_S$ (when $\beta_S$ exists) for all $i\in [n]$, where $s_i$ designates the vertex of $S$ with index $i$.
    
\end{itemize}

The previous construction is duplicated twice, to encode variables of $Y$ and $Z$ as well. When the naming of vertices of the previously described gadgets is ambiguous, we add a superscript to indicate the variable set it corresponds to. For instance, the vertex $v_i^Y$ corresponds to the vertex $v_i$ of the above construction for the variable set $Y$. For sets $A^i$ and vertices $a^i_p$, instead of using a single index as superscript, we give directly the bucket to which it is associated. For instance, vertex $a^{Z^3}_5$ corresponds to the fifth assignment of bucket $Z^3$. Additionally, call $B$ the set of all vertices $\alpha_S$ and $\beta_S$ mentioned during the construction. 

The end of the construction focuses on encoding the clauses as follows. Recall that $\mathcal I$ is an instance of \partsat, each clause of $\mathcal C$ contains exactly one variable from $X$, one from $Y$ and one from $Z$.
\begin{itemize}
    \item Consider a clause \(C_{\ell} \in \mathcal C\), and denote literals in \(C_\ell\) (when they exist) by $X^{i_x}_{j_x}$, $Y^{i_y}_{j_y}$ and $Z^{i_z}_{j_z}$ respectively.  
Add a vertex \(c_{\ell}\) in the graph and add the following arcs.
\begin{itemize}
    \item If assigning \TRUE (respectively \FALSE) to variable $X^{i_x}_{j_x}$ (when it exists) satisfies the clause $C_\ell$, then add the arc $t_{j_x}c_\ell$ (respectively $f_{j_x}c_\ell$) and the arc $c_\ell u_{i_x}$.
    \item When they exist, perform a similar operation with variables $Y^{i_y}_{j_y}$ and $Z^{i_z}_{j_z}$.
\end{itemize}
\end{itemize}

Let \(C\) be the set of all the vertices added with respect to some clause. Note
that for every vertex in \(C\), the number of its in-neighbors is equal to the number of its out-neighbors, which is equal to number of literals in it. We will argue that a vertex \(c_{\ell}\) is in a shortest path from a vertex  \(a^i_p\) , corresponding to a (partial) assignment, and some sink vertex  if and only if the (partial) assignment corresponding to \(a^i_p\) satisfies the clause. We conclude the construction by adding some arcs to prevent `unwanted coverage'.
\begin{itemize}

    \item Add for all $i\in [n]$ and all $S,S'\in \{X,Y,Z\}$ the arcs $\alpha_{A^{S^i}}v_i^{S'}$, $\alpha_{T^S}u_i^{S'}$ and $\alpha_{F^S}u_i^{S'}$. These arcs will prevent any coverage from a source vertex associated with the variable set $S$ to cover clauses with a path going to a sink associated with variable set $S'$.
    \item Add for each $S,S'\in \{X, Y, Z\}$ such that $S \not = S'$ the arcs  $a^{S^i}_{p}u^{S'}_{j}$ for all $i, j\in [n]$ and all $p\in [2^n]$. These arcs will prevent any coverage from a selected vertex associated with the variable set $S$ to cover clauses with a path going to a sink associated with variable set $S'$.

\end{itemize}
This ends the construction of $\mathcal D$. The instance returned by this reduction is $(\mathcal D,k)$ with $k=12 n + 15$. \Cref{fig-reduction} is a partial representation of the obtained instance.

\begin{figure}
    \begin{adjustbox}{scale=1, center}
        \input{images/reduc-vc-MFCS}
    \end{adjustbox}
    \caption{Partial representation of the instance of \gsfull{} obtained from \partsat{} by the reduction described in \Cref{sub-vcn} for $n=4$. 
    If an arc goes from one vertex $v$ to a vertex set $S$, it represents all arcs $vv'$ with $v'\in S$ (similarly for an arc from one vertex set to a set).
    Sources and sinks are represented with squares, while empty circles represent vertices that are \emph{not} covered by shortest paths from sources to sinks. 
For the sake of clarity, we only show representative arcs. For example, only vertices associated to variables of $X$ and a single clause vertex $c_\ell$ are represented.
  The arc from \(\alpha_{A_1}\) represent the arcs of the form \(\alpha_{A_i}\) to \(V\) and \(\alpha_{A_i}\) to \(R\).
   The arcs added while encoding clauses from vertices associated to variables $X$ to vertices associated to other variable sets (namely $Y$ and $Z$) are \emph{not} represented. Arcs between $V$ and $R$ are represented only for $v_1$ and $v_3$ as starting vertices and dotted arcs represent missing arcs.} 
    \label{fig-reduction}
\end{figure}

\subparagraph*{Correctness of the reduction}

In Lemma~\ref{lemma-SAT-to-DGS}, we prove 
that if $\mathcal I$ is satisfiable, then $\mathcal D$ admits a geodetic set of size $k$.
In this section, we present an outline of the
proof in the reverse direction.

We first present an informal overview. It is easy to see that all sources and sinks must belong to any geodetic set of $\mathcal D$ (\Cref{lem-sink-sources}). We argue that the shortest paths between the sinks and sources cover all the vertices in $V(\mathcal D)$, except the vertices in $C$ and $\bigcup_{S\in\{X,Y,Z\}}V^S$. For example, consider vertex $c_{\ell}$ and set $V$ in Figure~\ref{fig-reduction}. We use the simplified notation from the figure to complete the overview.

To cover vertices $v_1, v_2, v_3, v_4$, and vertex $c_{\ell}$, any geodetic set must include \emph{at least one} vertex from the sets $A^1, A^2, A^3$, and $A^4$, respectively. The cardinality constraint on the geodetic set ensures that it contains \emph{exactly} one vertex from each set. This selection naturally corresponds to a valid assignment of the variables in the instance $\mathcal I$. We will argue that this must indeed be a satisfying assignment.

For vertices corresponding to assignments (such as $a^1_p$ and $a^3_q$ in Figure~\ref{fig-reduction}), their variable assignments are evident from their adjacencies to vertices in $T$ and $F$. 
These two sets relay the assignment information to the clause vertices. 
Note that both $a^1_p$ and $a^3_q$ set the fourth variable in their respective buckets to \TRUE. 
Consider the clause $c_{\ell}$, which contains the variable $X^3_4$. To obtain a valid assignment, we must ensure that the shortest path between $a^3_q$ and some sink vertex covers the vertex $c_{\ell}$. Moreover, the shortest path between $a^1_p$ and the sink vertices must \emph{not} cover $c_{\ell}$. 
The relevant sink vertices correspond to the set $W$.

These conditions are enforced by the vertices in sets $V$ and $R$, along with their interconnecting arcs. Consider the two paths $a^1_p-t_4-c_{\ell}-u_3-w_3$ and $a^3_q-t_4-c_{\ell}-u_3-w_3$.
Note that the first path is \emph{not} a shortest path because there is a shorter alternative: $a^1_p-v_1-r_3-w_3$. However, there is no analogous shortcut path $a^3_q-v_3-r_3-w_3$ because the arc from $v_3$ to $r_3$ does not exist. This distinction forms the crux of the proof of correctness.

In the remainder of this section, we formalize this intuition. 
Before proceeding, we remark that this missing arc 
and the reliance on these connections to bypass 
$V$ for shortest paths critically depends on the fact that clause $c_{\ell}$ contains at most one literal from the set $X$.

In the considered construction, we define set 
\(\mathcal S_0\) as the collection of sinks
and sources, i.e.,  $S_0 := B\cup \bigcup_{S\in \{X,Y,Z\}}W^S$. 
In the next two observations, we argue that vertices of $\mathcal S^0$ cannot cover vertices in $C$ and that vertices of $\bigcup_{S\in\{X,Y,Z\}}V^S$ can only be covered by paths starting at specific vertices not belonging to $\mathcal S^0$.

\begin{lemma}\label{lemma-SAT-to-DGS}
If $\mathcal I$ is satisfiable, then $\mathcal D$ admits a geodetic set of size $k$.
\end{lemma}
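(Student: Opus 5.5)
Given a satisfying assignment $\sigma$ of $\mathcal I$, the plan is to build a geodetic set $S^\star$ from the forced extremal vertices plus one assignment vertex per bucket, and then verify coverage of every vertex of $\mathcal D$.

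\textbf{Construction.} Take $S^\star = \mathcal S_0 \cup \{a^{S^i}_{p(S,i)} : S\in\{X,Y,Z\},\, i\in[n]\}$. Here $a^{S^i}_{p(S,i)}$ is the vertex of $A^{S^i}$ encoding the restriction of $\sigma$ to the bucket $S^i$. Then count:
\begin{itemize}
\item per variable set there are $n$ sinks $w_i$;
\item per variable set there are $2n$ vertices $\alpha_{A^i},\beta_{A^i}$, together with the $5$ vertices $\alpha_T,\beta_T,\alpha_F,\beta_F,\alpha_U$;
\item the chosen assignment vertices contribute $3n$ in total.
\end{itemize}
This gives $3(3n+5)+3n = 12n+15 = k$. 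If the actual sink/source bookkeeping differs slightly, I will pad $S^\star$ with arbitrary vertices to reach exactly $k$.

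\textbf{Vertices covered by $\mathcal S_0$.} These are the easy cases, which I check gadget by gadget:
\begin{itemize}
\item each $a^i_p$ lies on the path $\alpha_{A^i}\,a^i_p\,\beta_{A^i}$, which is shortest because there is no arc $\alpha_{A^i}\beta_{A^i}$;
\item each $t_j$, $f_j$ lies on $\alpha_T t_j \beta_T$ or $\alpha_F f_j\beta_F$;
\item each $u_i$ lies on $\alpha_U u_i w_i$;
\item each $r_j$ lies on $\alpha_{A^i} r_j w_j$.
\end{itemize}
For each of these I need a distance check: no arc joins the two endpoints in the relevant direction, so the length-2 path is shortest. For $r_j$, there is no arc $\alpha_{A^i}w_j$ since the only in-neighbours of $w_j$ are $u_j$ and $r_j$. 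The sources and sinks lie in $S^\star$ themselves, as do the chosen $a$'s.

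\textbf{Vertices $v_i$.} Consider the path $\alpha_{A^i}$? No: $\alpha_{A^i}v_i$ is an arc, so paths from $\alpha_{A^i}$ through $v_i$ are useless. Instead I use the chosen vertex $a^i_{p}$ and the path $a^i_p\, v_i\, r_j\, w_j$ for some $j\ne i$. I must check that $d(a^i_p,w_j)=3$. The out-neighbours of $a^i_p$ are $\beta_{A^i}$, $v_i$, $T\cup F$ and the vertices $u^{S'}$ of other variable sets; none of them is $r_j$ or $u_j$ of the same set. Hence no path of length $2$ from $a^i_p$ to $w_j$ exists, and the distance is indeed $3$.

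\textbf{Clause vertices (main step).} For $c_\ell$, pick a literal of $C_\ell$ satisfied by $\sigma$, say on $X^{i}_{j}$ with value \TRUE. The chosen vertex $a^{i}_{p}$ then has the arc $a^i_p t_j$, and $t_jc_\ell$ and $c_\ell u_i$ are arcs, so I take the path $a^i_p\, t_j\, c_\ell\, u_i\, w_i$ of length $4$. I must show $d(a^i_p,w_i)=4$, i.e.\ that no path of length at most $3$ exists. The routes into $w_i$ are via $r_i$ or $u_i$:
\begin{itemize}
\item $r_i$ is entered only from the $v_{i'}$ with $i'\ne i$ and from the $\alpha$'s; from $a^i_p$ the only reachable $v$ is $v_i$, and the arc $v_ir_i$ is absent, which is exactly the missing arc;
\item $u_i$ of the same set is entered from $\alpha$'s and clause vertices only, and $a^i_p$ reaches a clause vertex in $2$ steps at best.
\end{itemize}
This gives the distance $4$. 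The analogous \FALSE case and the $Y$, $Z$ cases are symmetric.

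I expect the main obstacle to be making this distance computation airtight against the cross arcs $a^{S^i}_p u^{S'}_j$ and the $\alpha$-arcs. I will verify that they only touch $u$'s of other variable sets or originate at sources, so that they cannot create shortcuts from $a^i_p$ to $w^S_i$ within the same set $S$.
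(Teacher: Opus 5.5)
Your proposal is correct and follows essentially the same route as the paper: the same set $\mathcal S^0$ plus one assignment vertex per bucket, the same length-2 paths through the $\alpha/\beta$ gadgets and $\alpha_U u_i w_i$, the path $a^i_p\,v_i\,r_j\,w_j$ for $V$, and the length-4 path $a^i_p\,\gamma_j\,c_\ell\,u_i\,w_i$ for clause vertices. You cover $R$ via $\alpha_{A^i}r_jw_j$ rather than $a^i_p\,v_i\,r_j\,w_j$, which also works, and your distance check for the clause path is more thorough than the paper's one-line remark about $V$. One small imprecision: $u_i$ also has in-neighbours $a^{Y^{i'}}_q$ and $a^{Z^{i'}}_q$ via the cross arcs, but these are entered only from their own sources and so are unreachable from $a^i_p$, hence $d(a^i_p,w_i)=4$ still holds.
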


\begin{proof}
Suppose $\varphi : X \to \{\text{\TRUE, \FALSE}\}$ is a satisfying assignment for $\mathcal I$. We construct a geodetic set $\mathcal S$ of size $k = 12n + 15$ based on this assignment. First define the set  $\mathcal S^0 = B\cup \left(\bigcup_{s\in\{X,Y,Z\}} W^S\right)$. With this, $|\mathcal S^0| = 9n+15$, and we will add $3 n$ new vertices to obtain our set $\mathcal S$. Consider a bucket of variables $X_i$ for some $i\in [n]$ and denote the partial assignment of its variables in $\varphi$ by $\varphi_i^X$. By construction of $\mathcal D$, there exists in $A_i^X$ a vertex $a_{i,p}^X$ corresponding to this assignment $\varphi_i^X$ with $p\in [2^{ n}]$. We add all such vertices to $\mathcal S^0$, and obtain the set $\mathcal S$ such that  $|\mathcal S| = 12n+15$.

Next, we prove that $\mathcal S$ defined above is a valid geodetic set of $\mathcal D$. Let us describe some shortest paths and check that each vertex of $\mathcal D$ is covered (we do not check vertices of $\mathcal S^0$ since they belong to $\mathcal S$). In the following, we only check vertices associated to variables in $X$, as the proof is similar for variables in $Y$ and $Z$. If a single vertex or a vertex set is named without any superscript set, we consider it is associated with $X$.

\begin{itemize}
    \item The shortest paths $\left\{(\alpha_{A^i},  a^i_p, \beta_{A^i}) \mid i\in [n],\;p\in[2^n]\right\}$ cover all vertices of sets $A^i$ for all $i\in[n]$.
    \item The shortest paths $\left\{(\alpha_{T}, t_i, \beta_{T}) \mid i\in[n]\right\}$ cover all vertices of $T$ . The same goes for paths $\left\{(\alpha_{F}, f_i, \beta_{F}) \mid i\in[n]\right\}$ that cover vertices of $F$.
    \item Recall that for each $i\in[n]$, there exists a vertex $a^i_{p}\in S\cap A^i$ by construction of $\mathcal S$. Then, the shortest paths $\left\{(a^i_{p}, v_i, r_{j}, w_{j}) \mid i,j\in[n],\; i\not = j\right\}$ of size $3$ cover all vertices of $V$ and $R$.
    \item The paths $\{(\alpha_U,u_i, w_i)\mid i\in [n]\}$ of size 2 cover all vertices of $U$. 
    \item Consider a clause $C_\ell\in \mathcal C$ and its associated vertex $c_\ell$ for some $\ell\in[m]$. Because $\varphi$ is satisfying $\mathcal I$, there exists a variable $\chi\in \mathcal X$ that satisfies $C_\ell$ under the assignment $\varphi$. Without loss of generality, say that $\chi = X^i_{j}\in X^i$ for some $j\in[n]$. Because of this observation, the vertex $a^i_{p}$ of $\mathcal S$ belonging to $A^i$ is associated with a partial assignment $\varphi^i$ 
    that sets the variable $\chi$ in a way that satisfies $C_\ell$. 
    This means that there exists a vertex $\gamma_j\in T\cup F$ so that the arcs $a^i_{p}\gamma_j$,  $\gamma_jc_\ell$  and $c_\ell u_i$ are present in $\mathcal D$. Then, the path $(a^i_{p}, \gamma_j, c_\ell, u_i, w_i)$ is a valid path of size 4 in $\mathcal D$. 
    Furthermore, notice that by construction of $\mathcal D$, no path from $a^i_{p}$ can reach $w_i$ by going through the set $V$. This means that the path described previously is a shortest path in $\mathcal D$, and vertices of $C$ are all covered. 
\end{itemize}
Since all vertices of $\mathcal D$ are covered, $S$ is a geodetic set of $\mathcal D$ of size $k = 12 n+15$.
\end{proof}

\begin{observation}\label{obs-C-not-covered}
Let $S\in\{X,Y,Z\}$. No shortest path between vertices of $\mathcal S^0$ covers vertices of $C$.
\end{observation}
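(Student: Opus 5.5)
The plan is to show that for any two vertices $s,s'\in\mathcal S^0$ and any clause vertex $c_\ell$, either $c_\ell$ lies on no directed path from $s$ to $s'$, or every such path through $c_\ell$ is strictly longer than $d(s,s')$. I first determine which pairs in $\mathcal S^0$ can possibly have $c_\ell$ between them. The in-neighbours of $c_\ell$ lie in $T^S\cup F^S$ for $S\in\{X,Y,Z\}$, and its out-neighbours lie in $U^{S}$. The in-neighbours of $T\cup F$ are the assignment vertices $a^i_p$ and the sources $\alpha_T,\alpha_F$. The vertices $U$ only lead to $W$. So the relevant sources in $B$ that can reach $c_\ell$ are $\alpha_{T^S}$ and $\alpha_{F^S}$, since the $a^i_p$ are not in $\mathcal S^0$. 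One should also check whether $\alpha_{A^{S^i}}$ can reach $c_\ell$ through $a^i_p$. The relevant sinks reachable from $c_\ell$ are only the vertices $w^{S'}_j$, because $U$ has out-arcs only to $W$, and $\beta_T,\beta_F$ are not reachable from $c_\ell$.

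The case analysis then runs as follows. For a source $\alpha_{T^S}$ (symmetrically $\alpha_{F^S}$) and a sink $w^{S'}_j$, the path through $c_\ell$ has the form $\alpha_{T^S}, t_{j'}, c_\ell, u^{S'}_j, w^{S'}_j$, of length $4$. The construction added the arcs $\alpha_{T^S}u^{S'}_j$ for all $S,S'$, so $d(\alpha_{T^S},w^{S'}_j)\le 2$ and $c_\ell$ is not covered. For a source $\alpha_{A^{S^i}}$, any path through $c_\ell$ goes $\alpha_{A^{S^i}}, a^i_p, t_{j'}\text{ or } f_{j'}, c_\ell, u^{S'}_j, w^{S'}_j$, of length $5$. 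Here I use the arcs $\alpha_{A^i}r_j$ together with $r_jw_j$ (giving length $2$ when $S=S'$), and also the added arcs $\alpha_{A^{S^i}}v^{S'}_\cdot$ and $v\to r\to w$ (giving length at most $3$ when $S\neq S'$). The $v\to r$ step needs an index different from $j$, so I should route through $v_{j''}$ with $j''\neq j$, which is possible since all arcs $\alpha_{A^{S^i}}v^{S'}_{i'}$ are present for every $i'$. Either way the distance is strictly below $5$.

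The remaining sources, $\alpha_U$ and $\beta$-type vertices, are sinks or only reach $U\cup W$, so they never precede $c_\ell$. Pairs with both endpoints sinks, or with the source after $c_\ell$, are impossible in this graph.

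The main obstacle is bookkeeping the exact arc set, especially the cross-set arcs and the indexing of $\alpha_{A^i}\to r_j$. It must be confirmed that a shortcut of length below the clause route exists for every combination of $S$, $S'$, $i$ and $j$. If the $\alpha_{A^i}r_j$ arcs are only meant for the same variable set, I fall back on the $\alpha\to v^{S'}\to r^{S'}\to w^{S'}$ route of length $3$, which still beats $5$.
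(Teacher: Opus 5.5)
Your strategy is the paper's: find which sources of $\mathcal S^0$ can reach $C$ (only $\alpha_{T^S}$, $\alpha_{F^S}$ and $\alpha_{A^{S^i}}$), note that the only sinks reachable from a clause vertex are vertices $w^{S'}_j$, and exhibit a shortcut for every such source--sink pair. You handle $\alpha_{T^S},\alpha_{F^S}$ correctly, via the arcs $\alpha_{T^S}u^{S'}_j$ and $\alpha_{F^S}u^{S'}_j$. You also handle $\alpha_{A^{S^i}}$ with $S'=S$ correctly, via $\alpha_{A^{S^i}}r^S_jw^S_j$.

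The gap is the case $\alpha_{A^{S^i}}$ with $S'\neq S$. You assert that the arcs $\alpha_{A^{S^i}}v^{S'}_{i'}$ exist for every $i'$. The construction only adds $\alpha_{A^{S^i}}v^{S'}_{i}$, with the \emph{same} bucket index $i$; arcs from $\alpha_{A^i}$ to all of $V$ and $R$ exist only inside one variable set. Since $v^{S'}_ir^{S'}_i$ is not an arc, your route $\alpha_{A^{S^i}},v^{S'},r^{S'}_j,w^{S'}_j$ exists only for $j\neq i$. Your fallback in the last paragraph is the same route, so it does not help.

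The subcase $j=i$ really occurs. Suppose $c_\ell$ has an $X$-literal whose in-neighbour is $\gamma\in T^X\cup F^X$, and a $Y$-literal on a variable of bucket $Y^i$. Then for each $p$ with $a^{X^i}_p\gamma$ an arc (half of them), $\alpha_{A^{X^i}},a^{X^i}_p,\gamma,c_\ell,u^Y_i,w^Y_i$ is a directed path of length $5$. Moreover, $w^Y_i$ is not reachable from $\alpha_{A^{X^i}}$ through $V^Y\cup R^Y$ at all, since $r^Y_i$ is unreachable from it.

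The missing ingredient is the second family of cross-set arcs, $a^{S^i}_pu^{S'}_j$ for $S\neq S'$ and all $i,j,p$, which you never use. With them, $\alpha_{A^{S^i}},a^{S^i}_p,u^{S'}_j,w^{S'}_j$ has length $3<5$ for every $j$. Equivalently, the subpath of your length-$5$ path starting at $a^{S^i}_p$ has length $4$, while $d(a^{S^i}_p,w^{S'}_j)\le 2$, so the whole path is not a shortest path. Substituting this shortcut for your $V^{S'}$-route closes the case, and the rest of your argument goes through.

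The paper's own proof of this case also cites the arcs $\alpha_{A^{\Sigma^i}}v^{\Sigma'}_i$. However, the bound $d(\alpha_{A^{\Sigma^i}},u^{\Sigma'}_i)=2$ it relies on holds, for $\Sigma'\neq\Sigma$, because of the arcs $a^{\Sigma^i}_pu^{\Sigma'}_i$, not the $v$-arcs.
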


\begin{proof}
    Observe that all vertices of $\mathcal S^0$ are either sources or sinks in $\mathcal D$, and the only sources are vertices $\alpha_S$  where $S \in \{A^{\Sigma^i}, T^{\Sigma}, F^{\Sigma}, U^{\Sigma}\mid \Sigma\in \{X,Y,Z\},\; i\in [n]\}$. Similarly, sinks are vertices of $W^{S}$ for all $S\in \{X,Y,Z\}$ and vertices $\beta_S$ where $S \in \{A^{\Sigma^i}, T^{\Sigma}, F^{\Sigma}\mid \Sigma\in \{X,Y,Z\},\; i\in[n]\}$. Vertices $\alpha_{U^{\Sigma}}$ cannot reach any vertex of $C$, so they cannot give rise to such a covering shortest path for vertices of $C$. Consider now some $\alpha_{T^{\Sigma}}$. The vertices that can be reached from it are exactly the vertex $\beta_{T^{\Sigma}}$ through a path of size two, and vertices $u_i^{\Sigma'}$ for all $i\in [n]$ and $\Sigma'\in \{X,Y,Z\}$ through a path of size one. Since vertices of $C$ are not adjacent to $\alpha_{T^{\Sigma}}$, no vertex of $C$ can be covered by such described paths. The same applies to vertices $\alpha_{F^{\Sigma}}$ through similar arguments. Consider now $\alpha_{A^{\Sigma^i}}$, and notice that no path from it to a vertex $\beta_{S}$ with $S \in \{A^{\Sigma^i}, T^{\Sigma}, F^{\Sigma}\mid \Sigma\in \{X,Y,Z\}\}$ crosses a vertex of $C$. Finally, since for all $\Sigma,\Sigma'\in \{X,Y,Z\}$ and all $i\in [n]$, the arc $\alpha_{A^{\Sigma^i}}v_i^{\Sigma'}$ is present in $\mathcal D$, we have $d(\alpha_{A^{\Sigma^i}}, u_i^{\Sigma'}) = 2$ and by construction, $d(\alpha_{A^{\Sigma^i}}, c_\ell) \geq 3$ for all $\ell\in [m]$. This means no path between vertices of $\mathcal S^0$ can cover a vertex of  $C$.
\end{proof}

\begin{observation}\label{prop-cover-of-V}
Let $S\in\{X,Y,Z\}$. The vertex $v_i^S$ can only be covered by a path that starts from a vertex of $A^{S^i}\cup\{v_i^S\}$.
\end{observation}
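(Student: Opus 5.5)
The plan is to read off the in-neighbourhood of $v_i^S$ from the construction, and then use the fact that any shortest path through $v_i^S$ that does not start at $v_i^S$ must enter it through an in-neighbour. Listing all arcs created by the reduction whose head is $v_i^S$ gives exactly two kinds. The first kind is the arcs $a^{S^i}_p v_i^S$ for $p\in[2^n]$, added when $V$ is attached to the assignment vertices. The second kind is the arcs $\alpha_{A^{\Sigma^i}}v_i^{S}$, added for every $\Sigma\in\{X,Y,Z\}$, both in the basic construction and in the last step that prevents unwanted coverage. No vertex of $R$, $T$, $F$, $U$, $W$ or $C$, and no vertex $a^{\Sigma^{i'}}_p$ with $(\Sigma,i')\neq(S,i)$, has an arc to $v_i^S$. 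In particular, the cross arcs $a^{S^i}_p u_j^{S'}$ only go to $U$. So $N^-(v_i^S)=A^{S^i}\cup\{\alpha_{A^{\Sigma^i}}\mid \Sigma\in\{X,Y,Z\}\}$.

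Next I would take a shortest path $P$ from $x$ to $y$ that contains $v_i^S$. If $v_i^S$ is the tail of $P$, then $x=v_i^S$ and we are done. If $v_i^S$ is an inner vertex or the head, its predecessor on $P$ lies in $N^-(v_i^S)$. If that predecessor is in $A^{S^i}$, then either it is $x$, or $x$ lies further back. In the latter case, since the only in-neighbour of $a^{S^i}_p$ is $\alpha_{A^{S^i}}$, we get $x=\alpha_{A^{S^i}}$. So every remaining case reduces to showing that a path starting at a source $\alpha_{A^{\Sigma^i}}$ cannot be a shortest path through $v_i^S$ other than with $v_i^S$ as its second vertex. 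I would then interpret the observation as concerning paths that start at a vertex of $A^{S^i}\cup\{v_i^S\}$ or at these $\alpha$-sources. The main point is that the $\alpha$-sources are dominated: the arc $\alpha_{A^{\Sigma^i}}v_i^S$ makes $d(\alpha_{A^{\Sigma^i}},v_i^S)=1$. Hence any shortest path from $\alpha_{A^{\Sigma^i}}$ through $v_i^S$ is $(\alpha_{A^{\Sigma^i}},v_i^S,\dots)$, and its end lies in $N^+(v_i^S)\subseteq R^S$ or beyond.

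It remains to check which endpoints such a path can have. From $v_i^S$ the reachable vertices are $r_j^S$ and then $w_j^S$, for $j\neq i$. But $\alpha_{A^{\Sigma^i}}$ has a direct arc to every $r_j$: this is the arc added as $\alpha_{A_i}r_j$, and I would read it as present for all $\Sigma$, or at least for $\Sigma=S$. So $d(\alpha_{A^{\Sigma^i}},r_j^S)=1$ and $d(\alpha_{A^{\Sigma^i}},w_j^S)=2$, and a path going through $v_i^S$ would have length $2$ or $3$, so it is not shortest. Hence no shortest path from an $\alpha$-source properly passes through $v_i^S$ and ends in $R^S\cup W^S$. The only remaining option is a path ending at $v_i^S$ itself, and there $v_i^S$ is an endpoint, which only matters if $v_i^S$ is already selected; this is consistent with the set $A^{S^i}\cup\{v_i^S\}$ in the statement.

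The expected obstacle is the bookkeeping of the $\alpha$-arcs. The construction is stated with some index ambiguity ($\alpha_{A_i}v_j$ versus $\alpha_{A^{\Sigma^i}}v_i^{S'}$, and $\alpha^i_p r_j$). I would first fix the reading that $\alpha_{A^{\Sigma^i}}$ is adjacent to all of $R^{S'}$ for the relevant $S'$. If that fails for cross pairs $\Sigma\neq S$, I would instead argue that $R^{S}$ reached from such a source is already at distance $\le 2$ via the arcs $\alpha_{A^{\Sigma^i}}v_{i'}^{S}$ for other indices $i'$, which gives the shortcut.
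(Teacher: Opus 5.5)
Your strategy is the paper's. You determine what can reach $v_i^S$. You note that a path entering $v_i^S$ from $A^{S^i}$ must start in $A^{S^i}$, because the only alternative start, $\alpha_{A^{S^i}}$, has a direct arc to $v_i^S$. You then dispose of the $\alpha$-sources via their direct arcs into $R^S$, which give $d(\alpha,r_j^S)=1$ and $d(\alpha,w_j^S)=2$. For $\alpha_{A^{S^i}}$ this is exactly the paper's argument. Your value $d(\alpha_{A^{S^i}},v_i^S)=1$ is the right one; the paper's $2$ is a slip that does not affect the conclusion. Your handling of a path that merely ends at $v_i^S$ is also fine. Your list of in-neighbours omits $\alpha_{A^{S^{i'}}}$ for $i'\neq i$, since the arcs $\alpha_{A_{i'}}v_i$ are added for all $i',i$. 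These are handled identically, because each $\alpha_{A^{S^{i'}}}$ is adjacent to all of $R^S$.

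The genuine gap is the cross case $\alpha_{A^{\Sigma^i}}$ with $\Sigma\neq S$. You correctly spot it; the paper's proof silently ignores it by asserting that $\alpha_{A^{S^i}}$ is the only source reaching $v_i^S$. Nothing in the construction gives $\alpha_{A^{\Sigma^i}}$ an arc into $R^S$. The arcs $\alpha_{A^{S^i}}r_j^S$ are added inside a single copy, and the only cross arcs leaving $\alpha_{A^{\Sigma^i}}$ are $\alpha_{A^{\Sigma^i}}v_i^{S'}$, with the same index $i$.

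Your fallback fails on two counts. The arcs $\alpha_{A^{\Sigma^i}}v_{i'}^S$ with $i'\neq i$ are not present. And an alternative route of the \emph{same} length never stops a path from being shortest; you need a strictly shorter one.

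In fact, as written, the in-neighbours of $w_j^S$ are only $r_j^S$ and $u_j^S$, and neither is an out-neighbour of $\alpha_{A^{\Sigma^i}}$. So $d(\alpha_{A^{\Sigma^i}},w_j^S)=3$, and for $j\neq i$ the path $(\alpha_{A^{\Sigma^i}},v_i^S,r_j^S,w_j^S)$ is a shortest path between a source and a sink. Hence $v_i^S\in I(\mathcal S^0)$, and the observation is false for the literal construction.

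No argument can close this step; you must modify the reduction explicitly. One option is to drop the cross arcs $\alpha_{A^{\Sigma^i}}v_i^{S'}$ for $\Sigma\neq S'$, which is what the paper's proof tacitly assumes. They are not needed to block clause paths from $\alpha_{A^{\Sigma^i}}$, since the arcs $a^{\Sigma^i}_pu^{S'}_j$ already give a length-$3$ route to $w^{S'}_j$. The other option is to add the arcs $\alpha_{A^{\Sigma^i}}r_j^{S'}$ as you suggest. Either way, state it as a change to the construction rather than as a reading of it.
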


\begin{proof}
Consider the vertex $v_i^S$ for some $S\in \{X,Y,Z\}$ and some $i\in [n]$.  The only vertices that are able to reach $v_i^S$ in $\mathcal D$ are $\alpha_{A^{S^i}}$, $a_{p}^{S^i}$ for all $p\in [2^n]$ and $v_i^S$ itself. We show that no path starting from $\alpha_{A^{S^i}}$ can cover $v_i^S$. Indeed, in order to cover $v_i^S$, the path has to end in some vertex reachable from $v_i^S$, namely either a vertex of $R^S$ or $W^S$, except vertices $r_i^S$ and $w_i^S$ that are unreachable. Since the arc $\alpha_{A^{S^i}}r_{j}^S$ is present in $\mathcal D$ for all $j\in [n]$, neither the shortest paths from $\alpha_{A^{S^i}}$ to some $r_{j}^S$ nor the ones from $\alpha_{A^{S^i}}$ to $w_{j}^S$ are going through $v_i^S$ (the first ones are of length 1 and the latter are of length 2, while $d(\alpha_{A^{S^i}}, v_i^S) = 2$). 
\end{proof}

We also argue that $\mathcal D$ is an acyclic digraph by using another simple observation.

\begin{observation}\label{obs-layers}
    Digraph $\mathcal D$ can be decomposed into layers $L_1, \dots, L_5$ such that all arcs of $A(\mathcal D)$ are either internal to a layer and adjacent to a leaf vertex, or going from layer $L_i$ to layer $L_j$ with $i < j$. In particular, $\mathcal D$ is acyclic.
\end{observation}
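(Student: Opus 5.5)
The plan is to exhibit the five layers explicitly, then check every arc family of the construction against them. Since the construction is symmetric in $X,Y,Z$, each layer takes the union over $S\in\{X,Y,Z\}$ of the corresponding vertices. I would take:
\begin{itemize}
\item $L_1$: all assignment vertices $a^{S^i}_p$, together with all sources $\alpha_{A^{S^i}}$, $\alpha_{T^S}$, $\alpha_{F^S}$, $\alpha_{U^S}$;
\item $L_2$: $V^S\cup T^S\cup F^S$, together with the sinks $\beta_{A^{S^i}}$;
\item $L_3$: $R^S\cup C$, together with the sinks $\beta_{T^S},\beta_{F^S}$;
\item $L_4$: $U^S$;
\item $L_5$: $W^S$.
\end{itemize}

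Next I would go through the arc families in the order they are introduced and record their layer indices.
\begin{itemize}
\item $a\to\beta_A$ and $a\to v_i$ go from $L_1$ to $L_2$.
\item $v_i\to r_j$ goes from $L_2$ to $L_3$.
\item $\alpha_{A^i}\to v_j$ goes from $L_1$ to $L_2$, and $\alpha_{A^i}\to r_j$ goes from $L_1$ to $L_3$.
\item $a\to t_j$ and $a\to f_j$ go from $L_1$ to $L_2$.
\item $u_i\to w_i$ goes from $L_4$ to $L_5$, and $r_i\to w_i$ goes from $L_3$ to $L_5$.
\item $\alpha_T\to t_i$ goes from $L_1$ to $L_2$, $t_i\to\beta_T$ goes from $L_2$ to $L_3$, and $\alpha_U\to u_i$ goes from $L_1$ to $L_4$.
\item The clause arcs $t_j\to c_\ell$ and $f_j\to c_\ell$ go from $L_2$ to $L_3$, and $c_\ell\to u_i$ goes from $L_3$ to $L_4$.
\item The ``unwanted coverage'' arcs are $\alpha_{A^{S^i}}\to v^{S'}_i$ (from $L_1$ to $L_2$), $\alpha_{T^S}\to u^{S'}_i$ and $\alpha_{F^S}\to u^{S'}_i$ (from $L_1$ to $L_4$), and $a^{S^i}_p\to u^{S'}_j$ (from $L_1$ to $L_4$).
\end{itemize}
So every arc is strictly forward, except the arcs $\alpha_{A^{S^i}}\to a^{S^i}_p$, which lie inside $L_1$.

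The only delicate point, and the one I expect to be the real obstacle, is the status of these internal arcs. Their tail $\alpha_{A^{S^i}}$ is a source, hence extremal, but it has out-degree $2^n+2n$ or more, so it is not a leaf in the strict undirected-degree sense. Note that $\alpha_{A^{S^i}}\to a^{S^i}_p\to t_j\to c_\ell\to u_i\to w_i$ is a dipath on six vertices. Hence no five-layer decomposition can make every arc strictly forward, and some internal arcs are forced. I would therefore state that ``leaf'' in the observation has to be read as an extremal vertex (a source or a sink), and that under this reading the above layering satisfies the statement verbatim. I would not try to find a strictly-forward layering, since the six-vertex dipath shows that none exists with five layers.

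Acyclicity then follows directly. A directed cycle cannot use an arc leaving a source, since no arc enters the source. So every arc of a cycle would be strictly forward between layers, which strictly increases the layer index along the cycle and is impossible.
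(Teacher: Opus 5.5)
Your proposal is correct and follows essentially the paper's route: an explicit five-layer partition, a check that every arc goes forward except certain arcs incident to the $\alpha_S$/$\beta_S$ vertices, and acyclicity from the fact that these vertices are sources or sinks. The paper keeps each $\alpha_S,\beta_S$ in the layer of $S$ and orders sources first and sinks last, whereas your placement leaves only the arcs $\alpha_{A^{S^i}}a^{S^i}_p$ internal; your six-vertex dipath correctly shows that ``leaf'' must be read as ``extremal'', which is what the paper's proof relies on when it calls the (high-degree) $\alpha_S,\beta_S$ ``leaf vertices''.
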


\begin{proof}
    Consider the following layers:
    \begin{description}
        \item[$L_1$] is composed of vertices $\alpha_{A^{S^i}}, \beta_{A^{S^i}}$ and vertices in sets $A^{S^i}$ for all $i\in [n]$ and $S\in \{X,Y,Z\}$.
        \item[$L_2$] is composed of vertices in sets $V^S$, $T^S$, $F^S$ and vertices $\alpha_{T^S}$, $\alpha_{F^S}$,$\beta_{T^S}$, $\beta_{F^S}$ for all $S\in \{X,Y,Z\}$.
        \item[$L_3$] is composed of of vertices in sets $R^S$ and $C$ for all $S\in \{X,Y,Z\}$.
        \item[$L_4$] is composed of vertices in $U^S$ and vertices $\alpha_{U^S}$ and $\beta_{U^S}$ for all $S\in \{X,Y,Z\}$.
        \item[$L_5$] is composed of vertices in $W^S$ for all $S\in \{X,Y,Z\}$.     
    \end{description}

    All vertices of $\mathcal D$ are belonging to a unique layer. Furthermore, it can be verified clearly that any arc in $A(\mathcal D)$ internal to one of the layers is adjacent to either vertex $\alpha_S$ of a vertex $\beta_S$ for some vertex set $S$, which are leaf vertices in $\mathcal D$. Finally, by construction, no arc goes from a layer $L_i$ to a layer $L_j$ with $i > j$. This proves that we can define a topological ordering on $V(\mathcal D)$ by ordering first vertices $\alpha_S$, ordering last vertices  $\beta_S$, and considering in between the vertices based on the index of their layer.
    
\end{proof}

\begin{lemma}\label{lemma-DGS-to-SAT}
If $\mathcal D$ admits a geodetic set of size $k$, then $\mathcal I$ is satisfiable.
\end{lemma}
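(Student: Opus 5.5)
Let $\mathcal S$ be a geodetic set of $\mathcal D$ with $|\mathcal S|\le k=12n+15$. By \Cref{lem-sink-sources}, every source and sink lies in $\mathcal S$, so $\mathcal S^0\subseteq\mathcal S$. Since $|\mathcal S^0|=9n+15$, at most $3n$ further vertices remain. The plan has two stages: first force these $3n$ vertices to be exactly one assignment vertex per bucket, then show that the assignment they encode satisfies every clause.

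\textbf{Forcing the structure.} For each $S\in\{X,Y,Z\}$ and $i\in[n]$, the vertex $v_i^S$ must be covered. By \Cref{prop-cover-of-V}, this requires a vertex of $A^{S^i}\cup\{v_i^S\}$ in $\mathcal S$. These $3n$ sets are pairwise disjoint, so the budget forces $\mathcal S=\mathcal S^0\cup\{s_i^S\}$ with exactly one $s_i^S\in A^{S^i}\cup\{v_i^S\}$ for each pair $(S,i)$.

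\textbf{Covering a clause vertex.} Fix a clause vertex $c_\ell$. By \Cref{obs-C-not-covered}, it is not covered by two vertices of $\mathcal S^0$. Hence some shortest path through $c_\ell$ has an endpoint among the $s_i^S$.
\begin{itemize}
\item Since $v_i^S$ cannot reach $C$, an endpoint equal to $v_i^S$ is useless. By acyclicity (\Cref{obs-layers}), $c_\ell$ cannot be covered by two chosen vertices from $L_1\cup L_2$ when both lie before $c_\ell$ in the order.
\item So the path runs from some $a^{S^i}_p\in\mathcal S$, through $T^S\cup F^S$, to $c_\ell$, then to some $u_{i'}^{S'}$, and ends at $w_{i'}^{S'}$. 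Paths ending at another $s$ lying after $c_\ell$ are impossible, as all $s$ lie in layer $L_1$ or $L_2$.
\item Such a path has length $4$.
\end{itemize}

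\textbf{Reading off satisfaction.} We rule out the unwanted cases one by one.
\begin{itemize}
\item If $S'\neq S$, the arc $a^{S^i}_pu^{S'}_{i'}$ gives distance $2$, so the path is not shortest. Hence $S'=S$.
\item If $i'\neq i$, then $a^{S^i}_p\to v_i^S\to r_{i'}^S\to w_{i'}^S$ has length $3<4$, again contradicting shortestness. Hence $i'=i$.
\item The arc $c_\ell u_i^S$ exists only if $c_\ell$ contains a variable $S^i_j$ of bucket $S^i$.
\item The arc from $a^{S^i}_p$ into $T\cup F$ goes to the same index $j$, and the arc $t_j c_\ell$ or $f_j c_\ell$ exists only for the literal of that variable. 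By the partition property, $c_\ell$ has only one $S$-variable, so $c_\ell$ sees only $t_j$ or $f_j$ for that $j$.
\item Therefore the assignment encoded by $a^{S^i}_p$ sets $S^i_j$ so as to satisfy $C_\ell$.
\end{itemize}

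\textbf{Conclusion.} Define the global assignment from the chosen $a^{S^i}_p$. Buckets whose chosen vertex is $v_i^S$ are assigned arbitrarily; they cannot cover clause vertices, which is consistent with the argument above. Every clause is then satisfied.

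\textbf{Main obstacle.} The delicate step is the exhaustive verification that the only shortest paths through $c_\ell$ are the length-$4$ ones described above. This must exclude all of the following:
\begin{itemize}
\item paths ending at $\beta$-sinks, since $c_\ell$ does not reach them;
\item shortcut paths through $V^S$ when $i'=i$, since the arc $v_iw_i$ is missing;
\item alternative routes via the arcs $\alpha\to v$ and $\alpha\to u$.
\end{itemize}
I would handle this by computing $d(a^{S^i}_p,w^{S'}_{i'})$ explicitly in every case.
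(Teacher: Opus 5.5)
Your proof is correct and follows the same route as the paper: force $\mathcal S^0\subseteq\mathcal S$, use \Cref{prop-cover-of-V} and the budget to get exactly one extra vertex in each $A^{S^i}\cup\{v_i^S\}$, then use \Cref{obs-C-not-covered} to read a satisfying assignment off the chosen assignment vertices that cover the clause vertices. Your write-up is more careful than the paper's in two places: you use the arcs $a^{S^i}_p u^{S'}_{i'}$ and the shortcut $a^{S^i}_p\to v_i^S\to r_{i'}^S\to w_{i'}^S$ to force $S'=S$ and $i'=i$, which the paper leaves to ``by construction'', and you give arbitrary values to buckets whose chosen vertex is $v_i^S$, whereas the paper wrongly claims that every chosen vertex must lie in $A^{S^i}$.
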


\begin{proof}
Suppose that $\mathcal D$ admits a geodetic set $\mathcal S$ of size $k=12n+15$. 
By \Cref{lem-sink-sources}, all sources and sinks of $\mathcal D$ are necessarily in $\mathcal S$. The previous remark is equivalent to $\mathcal S^0\subset \mathcal S$. By \Cref{prop-cover-of-V}, there is at least one vertex from $A^{S^i}\cup\{v_i^S\}$ in $\mathcal S$ for all $i\in [n]$ and all $S\in \{X,Y,Z\}$. Since $|\mathcal S^0| = 9n+15$ and  $k = 12n+15$, $\mathcal S$ contains exactly one vertex from each of the sets $A^{S^i}\cup \{v_i^S\}$. By \Cref{obs-C-not-covered}, and since $\mathcal S$ is a geodetic set, vertices of $C$ must be covered by a path starting in $\mathcal S \setminus \mathcal S^0$. Furthermore, no vertex of $C$ is reachable from a vertex of $V^S$ in $\mathcal D$, so $\mathcal S$ contains exactly one vertex from each vertex set $A^{S^i}$ for $i\in [n],\; S\in \{X,Y,Z\}$.

Consider for some $\ell\in[m]$ the vertex $c_\ell$. This vertex is covered by some vertex  in set $$\hat  {\mathcal S}=  \mathcal S\cap\bigcup_{\substack{i\in [n]\\S\in\{X,Y,Z\}} }A^{S^i}.$$
Without loss of generality, say it is $a_{p}^{X^i}\in A^{X^i}$ for some $i\in [n]$ and some $p\in[2^{ n}]$. The remainder of the proof can be adapted easily to variable sets $Y$ and $Z$. By construction of $\mathcal D$, this means that there exists a vertex $\gamma_j^X\in T^X\cup F^X$ and a vertex $u_i^X\in U^X$ so that arcs $a_{p}^{X^i}\gamma_j^X$, $\gamma_j^Xc_\ell$ and $c_\ell u_i^X$ are present in $\mathcal D$. Because of the construction of $\mathcal D$, this means that the partial assignment of the variable $X^i_{j}$ corresponding to vertex $a_{p}^{X^i}$ satisfies the clause corresponding to $c_\ell$. Since this observation can be done for each vertex of $C$, we can define the assignment $\varphi : X \to \{\text{\TRUE, \FALSE}\}$ from the partial assignments associated with the vertices of $\hat {\mathcal S}$. This assignment is well defined because there is exactly one vertex of each $A^{S^i}$ in $\mathcal S$. As stated before, each vertex of $C$ is covered, thus the assignment $\varphi$ satisfies all the clauses in $\mathcal C$. 
\end{proof}

\begin{proof}[Proof of \Cref{thm-vc-algo} (optimality)]
The reduction presented above takes an instance $\mathcal I$ of \partsat and by \Cref{lemma-DGS-to-SAT,lemma-SAT-to-DGS} returns an equivalent instance $(\mathcal D,k)$ of \gsfull in time $2^{O(n)}$ with $|V(\mathcal D)| = 2^{O( n)}$. Furthermore, by \Cref{obs-layers}, $\mathcal D$ is a DAG. Note that taking all the vertices in layers $L_2, L_4$ and $L_5$ from the proof of \Cref{obs-layers} as well as vertices of $B$ results in a vertex cover of the underlying graph of $\mathcal D$. This means that 

$$\vcn(\mathcal D) \leq |B| + \sum_{S\in\{X,Y,Z\}}\left(|V^S| + |T^S| + |F^S| + |U^S| + |W^S| \right)= 21n+15.$$

We thus get that $\vcn(\mathcal D) + k = O(n)$. Now, suppose for a contradiction that ETH holds and that there exists an algorithm to solve \gsfull in time $2^{o(\vcn^2)}\cdot n^{O(1)}$. Consider the following algorithm to solve \partsat on the instance $\mathcal I$ on $3N$ variables. First, use the reduction described above to construct an equivalent instance $(\mathcal D,k)$ of \gsfull with $\vcn(\mathcal D) + k = O(n)$ and $|V(\mathcal D)| =2^{O(n)} = 2^{o(n^2)} = 2^{o(N)}$. Then, solve \gsfull on this new instance with the assumed algorithm in time

$$2^{o(\vcn^2)}\cdot |V(\mathcal D)|^{O(1)} = 2^{o(N)}\cdot 2^{o(N)} = 2^{o(N)}$$
and return the same answer for instance $\mathcal I$. This algorithm is correct by our  assumption and by \Cref{lemma-SAT-to-DGS,lemma-DGS-to-SAT}, and its total running time is $2^{o(N)}$, thus it contradicts \Cref{prop-ETH-3SAT}.
\end{proof}

%% file: images/reduc-vc-MFCS.tex
\newcounter{iSet}
\newcounter{leftNode}

\NewDocumentCommand{\drawISet}{m m m O{}}{%
    \stepcounter{iSet}
    \begin{scope}[mynode/.append style={ circle}, every node/.style={mynode}, #4]
        \node (n\theiSet_1) at ($(0,1.5) + (#1,#2)$) {};
        \node (n\theiSet_2) at ($(0,.5) + (#1,#2)$) {};
        \node (n\theiSet_3) at ($(0,-.5) + (#1,#2)$) {};
        \node (n\theiSet_4) at ($(0,-1.5) + (#1,#2)$) {};
        
        \node[fill=none, draw, shape=ellipse, fit=(n\theiSet_1) (n\theiSet_2) (n\theiSet_3) (n\theiSet_4)] (s\theiSet) [label=#3]{};
    \end{scope}}
    
\NewDocumentCommand{\drawLeftNodes}{m m m O{}}{%
    \stepcounter{leftNode}
    \begin{scope}[mynode/.append style={circle,fill}, every node/.style={mynode}, #4]
        \begin{scope}[mynode/.append style={rectangle}]
            \node (alpha#3) [label=above left:$\alpha_{A_{#3}}$] at ($(-3,1) + (#1,#2)$) {};
            \node (beta#3) [label=below left:$\beta_{A_{#3}}$] at ($(-3,-1) + (#1,#2)$) {};
        \end{scope}
    \end{scope}}
    
\begin{tikzpicture}[scale = .70,
    every edge/.append style = {draw}, >={Stealth[scale=1.2]},
    mynode/.style={inner sep=2pt,draw=black}]
    \begin{scope}[mynode/.append style={fill}]
    \drawISet{-4}{9.5}{$A^1$}
    \drawISet{-4}{3.5}{$A^2$}
    \drawISet{-4}{-3.5}{$A^3$}
    \drawISet{-4}{-9.5}{$A^4$}
    \end{scope}
    
    \drawISet{4}{7.5}{$V$}
    \begin{scope}[mynode/.append style={fill}]
    \drawISet{4}{0}{$T$}
    \node[mynode, label=right:$\alpha_T$] (alpha_t) at (5,2) {};
    \node[mynode, label=right:$\beta_T$] (beta_t) at (5,-2) {};
    \drawISet{4}{-7.5}{$F$}
    \node[mynode, label=right:$\alpha_F$] (alpha_f) at (5,-5.5) {};
    \node[mynode, label=right:$\beta_F$] (beta_f) at (5,-9.5) {};
    \drawISet{7.5}{7.5}{$R$}
    \drawISet{10}{0}{$U$}
    \node[mynode, label=left:$\alpha_U$] (alpha_u) at (9,2) {};
    \drawISet{13}{7.5}{$W$}[every node/.append style={rectangle}]
    \end{scope}

    \node[mynode/.append style={circle}, mynode, label={right:$c_\ell = x_{3,4}\vee \overline{y_{i,j}}\vee z_{i',j'}$}] (c) at (7.5,-4) {};

    \node[label=below left:$a^1_{p}$] at (n1_3) {};
    \node[label=below left:$a^3_{q}$] at (n3_3) {};

    \node[label=below left:$v_{1}$] at (n5_1) {};
    \node[label=left:$v_{3}$] at (n5_3) {};

    \node[label=below right:$r_{1}$] at (n8_1) {};
    \node[label=below right:$r_{3}$] at (n8_3) {};

    \node[label=right:$w_{1}$] at (n10_1) {};
    \node[label=right:$w_{3}$] at (n10_3) {};

    \node[label=below left:$t_{4}$] at (n6_4) {};
    \node[label=left:$u_{3}$] at (n9_3) {};
    

    \node[label={below:towards $Y$}] (y1) at (5.8,-7.5) {};
    \node[] (y2) at (6.2,-7.5) {};
    \node[label={below:towards $Z$}] (z1) at (8.8,-7.5) {};
    \node[] (z2) at (9.2,-7.5) {};

    \path[->] (c) edge[] (y2);
    \path[->] (y1) edge[bend left] (c);
    \path[->] (c) edge[] (z1);
    \path[->] (z2) edge[bend right] (c);

    \foreach \x/\y in {9.5/1, 3.5/2, -3.5/3, -9.5/4} {\drawLeftNodes{-4}{\x}{\y}[rectangle, fill]}

    \foreach \i in {1} {
        \path[->] (alpha\i) edge[bend left=70] (s5);
        \path[->] (alpha\i) edge[bend left=70] (s8);
    }

    \foreach \i in {1, ..., 4} {
        \path[] (n5_\i) edge[dotted] (n8_\i);
        \path[->] (n8_\i) edge (n10_\i);
        \path[->] (n9_\i) edge (n10_\i);
        \path[->] (s\i) edge (n5_\i);
    }

    \foreach \i in {1, 2, 4} {
        \path[->] (n5_3) edge (n8_\i);
    }

    \foreach \i in {2, 3, 4} {
        \path[->] (n5_1) edge (n8_\i);
    }

    \foreach \i/\j in {t/6,f/7} {
       \path[->] (alpha_\i) edge (s\j);
       \path[->] (s\j) edge (beta_\i);
    }
    \path[->] (alpha_u) edge (s9);
    \path[->] (alpha_t) edge (s10);

    \foreach \x in {1,...,4} {\path[->] (alpha\x) edge (s\x);}
    \foreach \x in {1,...,4} {\path[->] (s\x) edge (beta\x);}

    \foreach \x in {2,4} {\path[->, ForestGreen] (n3_3) edge (n6_\x);}
    \foreach \x in {1,3} {\path[->, red] (n3_3) edge (n7_\x);}

    \foreach \x in {1,3,4} {\path[->, ForestGreen] (n1_3) edge (n6_\x);}
    \foreach \x in {2} {\path[->, red] (n1_3) edge (n7_\x);}

    \path[->] (n6_4) edge (c);
    \path[->] (c) edge (n9_3);
 
\end{tikzpicture}

%% file: kernel-algo.tex
\subsection{A kernel and an algorithm}

We now prove the first part of Theorem~\ref{thm-kernel}. Let $D$ be an input digraph to \gsfull with maximum degree $\Delta$ and reachability diameter $\mfd$. Consider the partition of $D$ into strongly connected components $C_1,\ldots,C_t$. We define the digraph $H$ on vertex set $C_1,\ldots,C_t$, with an arc from $C_i$ to $C_j$ ($i\neq j$) if there exists an arc from a vertex in $C_i$ to a vertex in $C_j$ in $D$. Note that $H$ is a DAG, for otherwise some strongly connected component from the partition would not be maximal. Assume without loss of generality that $C_1,\ldots,C_p$ ($1\leq p\leq t$) are the sources of $H$. 

Let $v$ be any vertex of $D$, with $v\in C_i$ fir some $i\in [t]$. Then, there exists $C_j$ with $1\leq j\leq p$ (that is, $C_j$ is a source in $H$), such that $C_i$ is reachable from $C_j$ in $H$. For any $i$ with $1\leq i\leq p$, let $x_i$ be an arbitrary vertex of the source component $C_i$ and let $X=\{x_i \mid 1\leq i\leq p\}$. Then, every vertex in $D$ is reachable from at least one vertex in $X$.

Note that the maximum number of vertices reachable from any vertex of $D$ is at most $\sum_{i=0}^{\mfd}\Delta^i\leq \Delta^{\mfd+1}$. In any \yes-instance of \gsfull, we must have $p\leq k$, as any geodetic set must include some vertex in each strongly connected component that is a source of $H$. Hence, in a \yes-instance, the number of vertices must be at most $k\Delta^{\mfd+1}$.

The kernelization algorithm is hence straightforward: if $D$ has more than $k\Delta^{\mfd+1}$ vertices, we return a trivial \no-instance (such as the digraph with $k+1$ isolated vertices, $k$ being unchanged); otherwise, we return $(D,k)$ itself.

The algorithm is a simple brute-force algorithm over such a kernel, testing all the possible subsets of size at most $k$ as a potential geodetic set, and checking in polynomial time whether each of them is indeed a geodetic set or not. This runs in time at most $\binom{k\Delta^{\mfd+1}}{k}(k \Delta^{\mfd+1})^{O(1)}$ which is in $(k\Delta)^{O(\mfd\cdot k)}$.

%% file: fpt-lower-bound.tex
\subsection{\texorpdfstring{\ETH}{ETH}-based lower bound}
\label{subsec:eth-lower-bound}

In this subsection, we prove the second part of
Theorem~\ref{thm-kernel}. In particular, we show that the simple
algorithm obtained by applying brute force to the reduced instance
is optimal under the \ETH. 
To this end, we present a reduction from
\(\ell \times \ell\)-\textsc{Independent Set}, which we define formally
below and state the corresponding conditional lower bound.

\problemdc{\(\ell \times \ell\)-\textsc{Independent Set}}{A graph \(H\) together
with a partition of its vertex set
\(\langle V_1, V_2, \dots, V_\ell\rangle\), where each \(V_i\) is an
independent set of cardinality \(\ell\), for all \(i \in [\ell]\).}{
Determine whether there exists an independent set \(X \subseteq V(H)\)
such that \(|X \cap V_i| = 1\) for every \(i \in [\ell]\).}{1}

\begin{proposition}[Theorem 14.12 in \cite{DBLP:books/sp/CyganFKLMPPS15}]
\label{prop-eth-ind-set}
Unless the \ETH\ fails, \(\ell \times \ell\)-\textsc{Independent Set}
does not admit an algorithm running in time \(\ell^{o(\ell)}\).
\end{proposition}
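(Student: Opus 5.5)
The statement is the standard ETH lower bound for \(\ell \times \ell\)-\textsc{Independent Set}, and I would prove it by a direct grouping reduction from \textsc{3-Coloring}. Under the \ETH, \textsc{3-Coloring} on \(n\)-vertex graphs with \(O(n)\) edges has no \(2^{o(n)}\) algorithm. This follows from the sparsification lemma together with the linear-size reduction from \textsc{3-SAT}, and I would assume it as known.

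\textbf{Construction.} Let \(G\) be an \(n\)-vertex instance of \textsc{3-Coloring}. Choose \(\ell\) as the smallest integer with \(3^{\lceil n/\ell\rceil}\le \ell\). This gives \(\ell\log \ell = \Theta(n)\), so \(\ell = \Theta(n/\log n)\). Partition \(V(G)\) into \(\ell\) groups \(G_1,\dots,G_\ell\), each of size at most \(\lceil n/\ell\rceil\). Build \(H\) as follows.
\begin{itemize}
\item For each \(i\in[\ell]\), the class \(V_i\) contains one vertex for every proper \(3\)-coloring of \(G[G_i]\). There are at most \(3^{\lceil n/\ell\rceil}\le\ell\) of these.
\item Pad \(V_i\) with dummy vertices until it has exactly \(\ell\) vertices.
\item For \(i\neq j\), join a coloring vertex of \(V_i\) to a coloring vertex of \(V_j\) whenever the two partial colorings conflict, that is, some edge of \(G\) between \(G_i\) and \(G_j\) gets the same color at both ends.
\item Join every dummy vertex of \(V_i\) to all vertices of \(V_j\) for every \(j\neq i\).
\item Put no edges inside any \(V_i\), so each \(V_i\) is independent, as the problem requires.
\end{itemize}
The graph \(H\) has \(\ell^2\) vertices and can be built in time polynomial in \(\ell\), hence \(2^{o(n)}\).

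\textbf{Correctness.} Suppose \(G\) has a proper \(3\)-coloring. Picking its restriction to each \(G_i\) gives one vertex per class, and these are pairwise non-adjacent, so they form an independent set. Conversely, take an independent set with exactly one vertex per class. If it contained a dummy vertex, that vertex would be adjacent to the chosen vertex of every other class, so for \(\ell\ge 2\) it contains no dummy. The chosen partial colorings are then pairwise compatible across every edge between different groups. Each is proper inside its own group by construction. So their union is a proper coloring of \(G\).

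\textbf{Running time.} An \(\ell^{o(\ell)} = 2^{o(\ell\log\ell)}\) algorithm for \(\ell\times\ell\)-\textsc{Independent Set} would solve \textsc{3-Coloring} in time \(2^{o(n)}\), contradicting the \ETH.

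The only delicate point is this parameter arithmetic. The choice of \(\ell\) must keep \(3^{\lceil n/\ell\rceil}\le\ell\), so that every class fits into \(\ell\) vertices, while also keeping \(\ell\log\ell = O(n)\). I would check this explicitly by taking \(\ell \approx 2n/\log_3 n\), which makes \(n/\ell\approx \tfrac12\log_3 n\le\log_3\ell\) for large \(n\). Rounding and small \(n\) can be absorbed by padding \(G\) with isolated vertices.
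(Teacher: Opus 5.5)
Your proof is correct and is essentially the standard argument: the paper gives no proof of its own and simply invokes Theorem 14.12 of Cygan et al. That theorem is proved by the same grouping reduction from \textsc{3-Coloring} (groups of about $\log_3 \ell$ vertices, one row per group whose vertices are its partial colorings, giving $\ell\log\ell=\Theta(n)$), stated there for $k\times k$ \textsc{Clique} with compatibility edges; your construction is its complement, with the padding vertices correctly made adjacent to all other classes.
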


The reduction takes as input an instance
\((H,\langle V_1, V_2, \dots, V_\ell\rangle)\)
of \(\ell \times \ell\)-\textsc{Independent Set},
runs in polynomial time, and outputs an instance
\((D,k)\) of \textsc{Geodetic Set} such that
\(k = 3\ell + 1\), the maximum degree of \(D\) satisfies
\(\Delta \le \ell^4\), and \(\mfd = 2\).

\begin{figure}[t]
\centering
\includegraphics[scale=0.75]{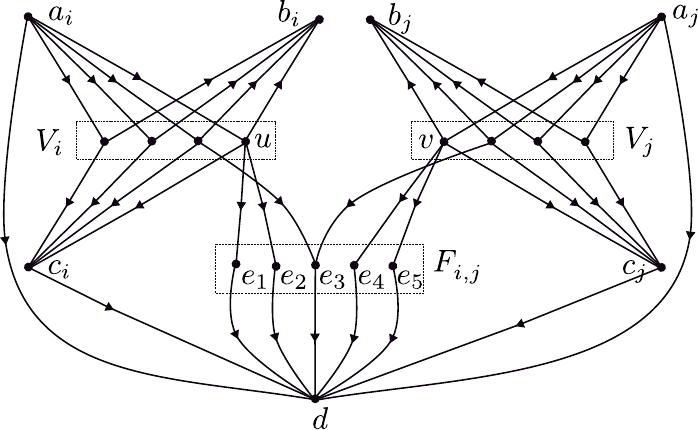}
\caption{Overview of the reduction from \(\ell \times \ell\)-\textsc{Independent Set} 
to \textsc{Geodetic Set}. 
Edges \(e_3, e_4, e_5\) are incident on \(u\), and 
edges \(e_1, e_2, e_3\) are incident on \(v\).
Note that for the sake of clarity, we do not show all the arcs.
\label{fig:eth-ind-set}}
\end{figure}

\begin{itemize}
  \item Initially, set
  \(V(D) := V(H)\) and \(A(D) := \emptyset\).

  \item For each part \(V_i\), introduce three new vertices
  \(a_i\), \(b_i\), and \(c_i\), and add the arcs
  \(a_iu\), \(ub_i\), and \(uc_i\) for every \(u \in V_i\).

  \item For every pair \(i \neq j \in [\ell]\), construct a set
  \(F_{i,j}\) as follows. For every edge \(uv \in E(H)\) with
  \(u \in V_i\) and \(v \in V_j\), add a corresponding
  vertex \(f_{uv}\) to \(F_{i,j}\).

  \item For every \(i \in [\ell]\) and every vertex \(w \in V_i\),
  add an arc \(wf_{uv}\) for every vertex
  \(f_{uv} \in F_{i,j}\) with \(j \in [\ell] \setminus \{i\}\),
  provided that \(w \notin \{u,v\}\).
  Equivalently, the arc \(wf_{uv}\) is added if and only if
  the edge \(uv\) is not incident to \(w\) in \(H\).

  \item Introduce a special vertex \(d\).
  For every \(i \neq j \in [\ell]\), add the arc \(f_{uv}d\)
  for every vertex \(f_{uv} \in F_{i,j}\).
  Moreover, for every \(i \in [\ell]\), add the arcs
  \(a_id\) and \(c_id\).
\end{itemize}

This completes the construction of the digraph \(D\).
The reduction outputs the instance \((D, k)\) with \(k = 3\ell + 1\)
of \textsc{Geodetic Set}.
In the following lemma, we prove the correctness of the reduction.

\begin{lemma}
\label{lemma-ind-set-geodetic-eth}
\((H,\langle V_1, V_2, \dots, V_\ell\rangle)\) is a \yes-instance of
\(\ell \times \ell\)-\textsc{Independent Set}
if and only if \((D,k)\) is a \yes-instance of \textsc{Geodetic Set}.
\end{lemma}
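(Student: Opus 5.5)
The plan is to pin down the vertices that every geodetic set of \(D\) must contain. Then a counting argument shows that the remaining budget allows exactly one vertex per part \(V_i\), and covering the vertices \(f_{uv}\) is then equivalent to independence. One remark on the construction first. I read the arcs into \(f_{uv}\), for an edge \(uv\) with \(u\in V_i\) and \(v\in V_j\), as coming exactly from the vertices of \(V_i\cup V_j\setminus\{u,v\}\), which is the only reading under which the gadget works.

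First, I list the extremal vertices of \(D\). The sources are the \(a_i\). The sinks are the \(b_i\) and \(d\). No vertex of \(V(H)\), no \(c_i\) and no \(f_{uv}\) is extremal. By \Cref{lem-sink-sources}, every geodetic set contains \(S_0=\{a_i,b_i : i\in[\ell]\}\cup\{d\}\), and \(|S_0|=2\ell+1\), leaving exactly \(\ell\) further vertices. Next, I compute the relevant distances, all of which are at most \(2\):
\begin{itemize}
\item \(d(a_i,d)=1\), since the arc \(a_id\) exists;
\item \(d(a_i,b_i)=2\), and every \(u\in V_i\) lies on a shortest \(a_i\)--\(b_i\) path;
\item \(d(u,d)=2\) for \(u\in V_i\), realised both via \(c_i\) and via any out-neighbour \(f_{uv}\).
\end{itemize}
In particular \(\mfd=2\), and \(\Delta\le\ell^4\) holds because \(|F_{i,j}|\le\ell^2\). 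From these distances, paths from \(S_0\) to \(S_0\) cover \(S_0\) and \(V(H)\) (all parts) but no \(c_i\) and no \(f_{uv}\). Indeed, the only shortest path from \(a_i\) to \(d\) is the arc \(a_id\), and \(a_i\) cannot reach \(b_j\) for \(j\neq i\).

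Second, I analyse coverage of \(c_i\). The only in-neighbours of \(c_i\) lie in \(V_i\), and its only out-neighbour is \(d\). So \(c_i\) is covered only if the solution contains a vertex of \(V_i\cup\{c_i\}\). With exactly \(\ell\) extra vertices available, the solution contains exactly one vertex \(x_i\in V_i\cup\{c_i\}\) for each \(i\), and it contains no \(f_{uv}\). Now consider \(f_{uv}\) with \(u\in V_i\) and \(v\in V_j\). Its unique out-neighbour is \(d\), and its in-neighbours are the vertices of \(V_i\cup V_j\setminus\{u,v\}\), each at distance \(1\). Hence \(f_{uv}\) is covered if and only if some selected \(x_i\) or \(x_j\) lies in \(V_i\cup V_j\setminus\{u,v\}\); such a path \(w\to f_{uv}\to d\) is shortest because there is no arc \(wd\).

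For the forward direction, given an independent set \(\{x_1,\dots,x_\ell\}\) with \(x_i\in V_i\), I take \(S_0\cup\{x_1,\dots,x_\ell\}\). Consider any \(f_{uv}\) with \(u\in V_i\), \(v\in V_j\). We cannot have both \(x_i=u\) and \(x_j=v\), so one of \(x_i,x_j\) is a valid predecessor of \(f_{uv}\), and \(f_{uv}\) is covered. Each \(c_i\) is covered by the path \(x_i\to c_i\to d\).

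The reverse direction is the main obstacle, because some \(x_i\) may equal \(c_i\). If every \(x_i\) lies in \(V_i\), the selected vertices are pairwise non-adjacent: an edge \(x_ix_j\) would leave \(f_{x_ix_j}\) uncovered. Suppose instead that \(x_i=c_i\). Then for every \(j\neq i\) with \(x_j\in V_j\), no edge \(uv\) with \(u\in V_i\) and \(v=x_j\) exists, since otherwise \(f_{uv}\) would have no selected predecessor. So \(x_j\) has no neighbour in \(V_i\). I will therefore replace every \(x_i=c_i\) by an arbitrary vertex of \(V_i\). The new vertices are non-adjacent to the original choices by the observation above. They are also non-adjacent to each other, since for \(i,j\) with \(x_i=c_i\) and \(x_j=c_j\), any edge between \(V_i\) and \(V_j\) would give an \(f\) with no selected predecessor at all. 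The result is an independent set meeting every part once, which concludes the equivalence.
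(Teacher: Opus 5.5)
Your proof is correct and follows the same outline as the paper's. You force \(S_0\) via \Cref{lem-sink-sources}, use coverage of each \(c_i\) to get exactly one extra vertex in each \(V_i\cup\{c_i\}\), replace any chosen \(c_i\) by an arbitrary vertex of \(V_i\), and read off independence from coverage of the vertices \(f_{uv}\), whose in-neighbourhood \(V_i\cup V_j\setminus\{u,v\}\) is the one the paper's proof implicitly uses. The only small difference is how the replacement is justified: the paper shows the modified set stays geodetic (\(c_i\) covers only itself and is recovered by \(u_i\to c_i\to d\)), whereas you argue directly from your characterization of when \(f_{uv}\) is covered, which works equally well.
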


\begin{proof}
(\(\Rightarrow\))
Let \(X=\{u_1,\dots,u_\ell\}\), where \(u_i\in V_i\) for each \(i\in[\ell]\),
be an independent set in \(H\).
Consider the set
\(S := X \cup \{a_1,\dots,a_\ell\} \cup \{b_1,\dots,b_\ell\} \cup \{d\}\).
Clearly, \(|S| = 3\ell + 1\).
We show that \(S\) is a geodetic set of \(D\).

Every vertex \(u_i \in V(H) \subseteq V(D)\) lies on a shortest path
from \(a_i\) to \(b_i\) by construction.
Consider any edge \(vw \in E(H)\).
Since \(X\) is an independent set, there exists a vertex
\(u_i \in X\) that is not incident to the edge \(vw\).
By the construction of \(D\), there is a shortest path
\((u_i, f_{vw}, d)\), which covers the vertex \(f_{vw}\).
Finally, for every \(i \in [\ell]\), the vertex \(c_i\) lies on a shortest
path from \(u_i\) to \(d\).
Thus, every vertex of \(D\) lies on a shortest path between some pair
of vertices in \(S\), and hence \(S\) is a geodetic set of \(D\).

(\(\Leftarrow\))
Suppose that \(S\) is a geodetic set of \(D\) of size \(3\ell + 1\).
It must contain the set
\(S_0 := \{a_1,\dots,a_\ell\} \cup \{b_1,\dots,b_\ell\} \cup \{d\},
\)
since these vertices are sources or sinks in the digraph \(D\)
and must therefore belong to any geodetic set.

As shown in the forward direction, the vertices in \(S_0\) together
cover all vertices in \(V_i\) for every \(i \in [\ell]\).
Hence, the remaining vertices of \(S\) are used to cover vertices
in the sets \(\{c_i \mid i \in [\ell]\}\) and in the sets
\(F_{i,j}\) for \(i \neq j \in [\ell]\).

By construction, no shortest path between any pair of vertices in
\(S_0\) covers the vertex \(c_i\) for any \(i \in [\ell]\).
Therefore, \(S\) must contain at least one vertex from
\(\{c_i\} \cup V_i\) for each \(i \in [\ell]\).
Suppose that for some \(i \in [\ell]\), the vertex \(c_i \in S\).
Let \(u_i\) be any vertex in \(V_i\), and define
\(S' := (S \cup \{u_i\}) \setminus \{c_i\}\).
Observe that no shortest path starting or ending at \(c_i\)
covers any vertex in \(F_{i,j}\) for \(j \neq i\).
Moreover, the shortest path \((u_i, c_i, d)\) covers \(c_i\),
and thus replacing \(c_i\) by \(u_i\) preserves the geodetic property.
Hence, \(S'\) is also a geodetic set of \(D\).
Applying this argument for every \(i \in [\ell]\), we obtain a geodetic
set \(S'\) such that \(S' \cap V_i \neq \emptyset\) for all \(i \in [\ell]\).
Since \(S' \setminus S_0\) contains exactly \(\ell\) vertices,
it follows that \(S' \cap V_i\) contains exactly one vertex for each
\(i \in [\ell]\).

Let \(X\) be the set of vertices in \(S'\) that belong to
\(\bigcup_{i=1}^\ell V_i\).
We claim that \(X\) is an independent set in \(H\).
Suppose, for the sake of contradiction, that \(X\) is not independent.
Then there exists an edge \(uv \in E(H)\) such that
\(u \in V_i\) and \(v \in V_j\) for some distinct \(i,j \in [\ell]\).
Consider the vertex \(f_{uv}\) in \(D\) corresponding to the edge \(uv\).
By construction, this vertex is not covered by any shortest path
between vertices in \(S_0\).
Furthermore, neither the shortest path from \(u\) to \(d\) nor the
shortest path from \(v\) to \(d\) covers \(f_{uv}\).
Additionally, \(f_{uv}\) is not reachable from any vertex in
\(V_{i'}\) for \(i' \in [\ell] \setminus \{i,j\}\).
Thus, \(f_{uv}\) cannot lie on any shortest path between a pair of
vertices in \(S'\), contradicting the assumption that \(S'\)
is a geodetic set of \(D\).
Therefore, \(X\) is an independent set in \(H\) containing exactly
one vertex from each \(V_i\), completing the proof.
\end{proof}

The conditional lower bound mentioned 
in Theorem~\ref{thm-kernel} follows
from \Cref{prop-eth-ind-set,lemma-ind-set-geodetic-eth}.

%% file: subcubic-solution-size-reduction.tex
\section{\texorpdfstring{$\W[2]$}{W[2]}-hardness for solution size on subcubic DAGs}
\label{sec-w2-hardness-sol-degree}

We restate the result proved in this section. 

\thmWhard*
The proof is adapted from the reduction from \textsc{Set Cover} to \gsfull in~\cite{AraujoA22}, that implies its \NP-hardness and $\W[2]$-hardness on DAGs (of unbounded maximum degree).
In fact, we find it notationally cleaner to first revisit the proof from
\textsc{Set Cover} to \gsfull mentioned in~\cite{AraujoA22}.
Then, we modify the reduced instance to ensure that it has bounded maximum degree. Given an instance $(U,\mathcal{F},k)$ of \textsc{Set Cover} with 
\(U=\{1,\ldots,n\}\), and \(\mathcal{F}=\{F_1,\ldots,F_m\},\)
we construct a directed acyclic graph $D$ as follows.
The vertex set of $D$ is \(V(D) = X \cup Y \cup \{s,t_1,t_2\}\),
where \(X=\{f_1,\ldots,f_m\}\) and  \(Y=\{u_1,\ldots,u_n\}\).
The arc set $A(D)$ contains the following arcs:
\((i)\) For every set $F_i \in\mathcal{F}$ and element $j \in U$, 
if $j \in F_i$ then add the arc \((f_i,u_j)\in A(D)\).
\((ii)\) For each set vertex $f_i \in X$, add arcs \(sf_i\),  and \(f_it_1\).
\((iii)\) For each vertex $u_j \in Y$, add the arc \(u_jt_2\).
\((iv)\) Finally add a single arc \(st_2\in A(D)\).
It is easy to see that $D$ is a DAG, since we orient all arcs consistently from 
\(\{s\}\) towards the sinks $\{t_1, t_2\}$. 

We construct a new digraph \(D'\) from the digraph \(D\) obtained in the last reduction 
by locally replacing high-degree vertices with directed binary trees. 
The transformation ensures: \(V(D) \subseteq V(D')\),
\(D'\) is a DAG, the maximum degree in \(D'\) is \(3\), and finally 
\((D,k+3)\) is a \yes-instance of \textsc{Geodetic Set} if and only if \((D',k+3)\) is.

Let \(\Delta=\Delta(D)\) denote the maximum total degree of \(D\).
We define a \emph{binary tree of height exactly \(\lceil\log \Delta\rceil\)} as a
rooted tree in which every internal vertex has at most two children,
every root-to-leaf path has length exactly \(\lceil\log \Delta\rceil\), and the tree has exactly \(L\le \Delta\)
leaves. All internal vertices have total degree at most~3.

\begin{itemize}

\item Delete the vertex \(t_1\) from \(D\) (it will be reintroduced later on).

\item Construct an outgoing binary tree \(T_s^{\mathrm{out}}\) with
\(|N^+(s)|\) leaves, whose arcs are oriented away from its root \(s^+\).
For every original arc \(sf_i\) where \(f_i\in X\), delete it and instead connect the 
corresponding leaf of \(T_s^{\mathrm{out}}\) to \(f_i\).  Add the arc \(ss^+\).

\item For each \(f_i\in X\), construct an outgoing binary tree
\(T_{f_i}^{\mathrm{out}}\) with \(|N^+(f_i)|\) leaves, oriented away from
its root \(f_i^{+}\).  For every original arc \(f_iu_j\), delete it.
Similarly, for each \(u_j\in Y\), construct an incoming binary tree
\(T_{u_j}^{\mathrm{in}}\) with \(|N^-(u_j)|\) leaves, whose arcs are
oriented toward its root \(u_j^{-}\).

For every original arc \(f_iu_j\) of \(D\), connect one leaf of
\(T_{f_i}^{\mathrm{out}}\) to one distinct leaf of \(T_{u_j}^{\mathrm{in}}\).
This replaces all arcs from \(X\) to \(Y\).

\item Construct an incoming binary tree \(T_{t_2}^{\mathrm{in}}\) whose arcs are
oriented toward its root \(t_2^{-}\).  
Attach every leaf of \(T_{u_j}^{\mathrm{out}}\) corresponding to an
original arc \(u_jt_2\) to a distinct leaf of \(T_{t_2}^{\mathrm{in}}\).  
Add the arc \(t_2^{-}t_2\).

\item Reintroduce vertex \(t_1\) by adding it to the graph.
For every \(u_j\in Y\), add the arc \(u_j^{-}t_1\).
Construct an incoming binary tree \(T_{t_1}^{\mathrm{in}}\) with leaves
corresponding to all arcs \((u_j^{-},t_1)\), and orient all arcs toward
its root \(t_1^{-}\).  
Replace each arc \(u_j^{-}t_1\) by an arc from \(u_j^{-}\) to a distinct
leaf of \(T_{t_1}^{\mathrm{in}}\), and add the arc \(t_1^{-}t_1\).
\end{itemize}

Let \(D'\) be the resulting digraph.
We remark that arc \(st_2\) has not been disturbed by the above modifications and it is present in \(D'\).
All added gadgets are directed trees with consistent orientation.
Since the original graph is a DAG, replacing arcs by tree paths cannot create a directed cycle. 
Hence \(D'\) is a DAG.
Finally, every internal node of every tree has total degree at most~3, and each
original vertex is incident to at most one incoming and two outgoing arcs
and hence the maximum degree of \(D\) is at most \(3\). We are now ready to prove Theorem~\ref{thm-W-hard}.

\begin{proof}[Proof of Theorem~\ref{thm-W-hard}]
To prove the theorem, it is sufficient to prove 
that both the reductions are safe.
We first argue that \((U,\mathcal{F},k)\) is a \yes-instance of \textsc{Set Cover} if and only if \((D,k+3)\) is  a \yes-instance of \textsc{Geodetic Set}.

In the forward direction, suppose there is a subfamily $\mathcal{F}'\subseteq\mathcal{F}$ 
with $|\mathcal{F}'|\le k$ and $\bigcup_{F\in\mathcal{F}'}F=U$.  
Let \(S = \{f_i : F_i\in\mathcal{F}'\}\cup\{s,t_1,t_2\}\).
Then $|S| \le k+3$. It remains to prove that $S$ is a geodetic set of $D$.
Note that any vertex, say \(f_i\) in \(X\), is on a shortest path from \(s\) to \(f_i\) to \(t_1\).
Indeed, every element $u_j\in Y$ lies on a directed path
\((f_i,u_j,t_2) \) where $f_i\in S$ because $j\in F_i$ for some $F_i\in\mathcal{F'}$.  

In the reverse direction, suppose $S\subseteq V(D)$ be a geodetic set with $|S|\le k+3$.  
Since $s$, $t_1$, and $t_2$ are respectively a source and sinks in $D$, 
they must belong to every geodetic set.  
Thus, at most $k$ vertices of $S$ can belong to $X \cup Y$.
If any $u_j\in Y$ is in $S$, we can replace it with $f_i \in X$ such that \((f_i, u_j)\) in \(A(D)\).
Note that even with this modifications, \(S\) is a geodetic set of \(D\).  
Hence, without loss of generality, assume $S\setminus\{s,t_1,t_2\} \subseteq X$.  
Let $\mathcal{F}'=\{F_i\mid f_i\in S\setminus\{s,t_1,t_2\}\}$.  
The argument above shows that for each $u_j \in Y$ there must be some $f_i\in S$ with 
$(f_i,u_j)\in A(D)$,  meaning $j\in F_i$.  
Therefore $\mathcal{F}'$ covers $U$, and $|\mathcal{F}'|\le k$.

This proves the correctness of the reduction.
Hence, \((U,\mathcal{F},k)\) is a \yes-instance of \textsc{Set Cover} if and only if \((D,k+3)\) is 
a \yes-instance of \textsc{Geodetic Set}.
In the remaining proof,  we claim that \((D,k+3)\) is a \yes-instance if and only if \((D',k+3)\) is.

In the forward direction, suppose that \(S\) is a geodetic set in \(D\).
As argued in the first stage of the reduction, it is safe to assume that
\(S\) contains \(s, t_1,\) and \(t_2\), since \(s\) is the unique source and
\(t_1, t_2\) are sinks in \(D\), and that \(S \setminus \{s, t_1, t_2\} \subseteq X\).
It is easy to verify that the shortest paths from \(s\) to \(t_1\) cover
all vertices in \(V(D')\) except those in \(Y\) and the incoming tree of
\(t_2\).
For the remaining vertices, consider an arbitrary vertex \(u_j \in Y\).
Since this vertex is covered in \(D\), there exists a vertex \(f_i \in S\)
such that \((f_i, u_j) \in A(D)\), and hence path \(f_i\) to \(u_j\) to \(t_2\)
covers \(u_j\).
It is straightforward to check that the path from \(f_i\) to \(t_2\) via
\(u_j\) is a shortest path in \(D'\).
Consequently, this path covers \(u_j\) as well as all vertices on the path
from the leaf of the incoming tree \(T_{t_2}^{\mathrm{in}}\) adjacent to \(u_j\) to \(t_2\).
As \(u_j\) was chosen arbitrarily, the same argument applies to every
vertex in \(Y\). Hence, all vertices in \(Y\) and all vertices of the incoming binary tree
\(T_{t_2}^{\mathrm{in}}\) are covered.

In the reverse direction, suppose that \(S'\) is a geodetic set of \(D'\)
with \(|S'| \le k+3\). We show that \((D,k+3)\) is a \yes-instance of \textsc{Geodetic Set}.
As \(s\) is the unique source and \(t_1,t_2\) are sinks in \(D'\), 
every geodetic set of \(D'\) must contain \(s,t_1,\) and \(t_2\).
Therefore, \(S' \setminus \{s,t_1,t_2\}\) contains at most \(k\) vertices.

We first observe that the shortest directed paths from \(s\) to \(t_1\)
cover all vertices of \(V(D')\) except those corresponding to vertices in
\(Y\) and the vertices of the incoming binary tree \(T_{t_2}^{\mathrm{in}}\).
Indeed, every vertex outside these sets lies on a directed path obtained
by replacing an original arc of the form \(sf_i\) or \(f_it_1\) by
its corresponding tree path, and all such paths have equal length.
Thus, these vertices are covered independently of the choice of
\(S' \setminus \{s,t_1,t_2\}\).

We now modify \(S'\) to obtain a geodetic set that intersects only the set
\(X\), besides \(s,t_1,t_2\).
If \(S'\) contains a vertex \(u_j \in Y\) or a vertex in the gadget
corresponding to \(u_j\), we replace it by some vertex \(f_i \in X\) such
that \(f_iu_j\) was an arc in the original graph \(D\).
Such a vertex \(f_i\) exists, since otherwise \(u_j\) would be unreachable
from \(s\), contradicting the fact that \(S'\) is geodetic.
This replacement preserves the geodetic property, because every shortest
path from \(f_i\) to \(t_2\) passes through \(u_j\) and the corresponding
vertices of the incoming tree of \(t_2\).
Finally, if there is any vertex of \(S'\) in out-tree of \(f_i\), then we 
replace it by \(f_i\). 
Note that this does not change the vertices in
\(Y\) and the vertices of the incoming binary tree \(T_{t_2}^{\mathrm{in}}\),
that were earlier covered by vertices in \(S'\).
After exhaustively applying this operation, we obtain a geodetic set
\(S\) with \(|S| \le k+3\) such that \(S \setminus \{s,t_1,t_2\} \subseteq X\).
It is easy to see that \(S\) covers all vertices of \(D\).
Finally, projecting \(S\) onto \(V(D)\) yields a geodetic set of \(D\) of
size at most \(k+3\). 

This implies \((D,k+3)\) is a \yes-instance if and only if \((D',k+3)\) is.
By pipelining the previous reduction, this implies that \textsc{Geodetic Set}
remains \W[2]-hard for solution size \(k\), even on DAGs with 
maximum degree 3.
\end{proof}

%% file: mfd+max-degree-reduction.tex
\section{\NP-hardness for subcubic DAGs with small reachability diameter}
\label{sec-np-hard-max-deg-mfd}

We restate the result proved in this section. 

\thmmaxdeg*

We prove the theorem by reducing from a variant of \textsc{3-SAT} where the number of occurrences of a variable in clauses is at most three. 

\problemdc{\textsc{3-Occ 3-SAT}}{A set $X$ of $n$ variables and a set $C$ of $m$ clauses such that any variable appears at most three times and each clause contains either two or three variables.}{Find a truth assignment $\varphi$ of the variables of $X$ such that it satisfies all clauses of $C$.}{1}

\begin{theorem}[\cite{tovey1984}, Theorem 2.1]
    \textsc{3-Occ 3-SAT} is \NP-complete.
\end{theorem}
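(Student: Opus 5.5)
Membership in \NP\ is immediate, since a truth assignment is a certificate that can be checked in time linear in the size of the formula. For hardness I would follow Tovey's reduction from \textsc{3-SAT}, which splits every variable into one copy per occurrence and then forces all copies to take the same value using a cycle of binary implication clauses.

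\emph{Normalising the source instance.} I start from an arbitrary \textsc{3-SAT} instance and first make every clause contain two or three literals. A clause with a single literal $\ell$ forces $\ell$ to be \TRUE. I propagate this by deleting every clause containing $\ell$ and removing $\overline{\ell}$ from every clause containing it. I repeat this until no unit clause remains. If the empty clause appears, I output a fixed unsatisfiable \textsc{3-Occ 3-SAT} instance, for example $(x\vee y),(x\vee\overline y),(\overline x\vee y),(\overline x\vee\overline y)$, in which each variable occurs four times as a literal over four clauses. That is too many occurrences, so I would replace it by a fixed small unsatisfiable gadget built with the cycle trick below. If every clause is deleted, I output a trivially satisfiable instance. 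I also remove clauses containing both $x$ and $\overline x$, and merge duplicate literals. Propagation runs in polynomial time and preserves satisfiability.

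\emph{Splitting variables.} For each variable $x$ with $k$ occurrences, I introduce fresh variables $x_1,\dots,x_k$ and replace the $i$-th occurrence of $x$ (with its sign kept) by $x_i$. If $k\ge 2$, I add the cyclic clauses
\[
(\overline{x_1}\vee x_2),\ (\overline{x_2}\vee x_3),\ \dots,\ (\overline{x_k}\vee x_1).
\]
These encode $x_1\Rightarrow x_2\Rightarrow\cdots\Rightarrow x_k\Rightarrow x_1$, so in any satisfying assignment all copies are equal. Each $x_i$ then occurs once in an original clause and once in each of two cycle clauses, which is exactly three occurrences. If $k=1$, no cycle is needed. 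The original clauses still have two or three literals, and the new clauses have two, so the output is a valid \textsc{3-Occ 3-SAT} instance.

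\emph{Equivalence.} A satisfying assignment of the original formula extends by $x_i:=x$, which satisfies all cycle clauses and all rewritten clauses. Conversely, a satisfying assignment of the new formula makes all copies of $x$ equal by the cycle, so setting $x:=x_1$ satisfies every original clause.

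The only delicate point is degenerate clauses: unit clauses and the unsatisfiable fallback instance, which must itself respect the bound of three occurrences per variable. I would handle the fallback by applying the splitting step to the small unsatisfiable formula above, which yields a bounded-occurrence unsatisfiable formula with clauses of size two. Everything else in the argument is routine.
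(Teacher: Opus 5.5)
Your proposal is correct and is essentially Tovey's original argument: split each variable into one copy per occurrence and chain the copies with a cycle of binary implication clauses. The paper gives no proof of its own and simply cites \cite{tovey1984} for this argument; your extra handling of unit clauses by propagation, and of the unsatisfiable fallback by splitting it, is sound, though you could avoid it by starting from \textsc{3-SAT} with exactly three distinct literals per clause, or by replacing a unit clause $(\ell)$ with $(\ell\vee z)\wedge(\ell\vee\overline{z})$ for a fresh variable $z$ before splitting.
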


We consider an instance $\mathcal I = (X,C)$ of \textsc{3-Occ 3-SAT} and construct a digraph $D =(V,A)$ based on $\mathcal I$. Without loss of generality, we order the set of variables $X$ such that $X=\{x_1, \dots, x_n\}$. We also order the set of clauses $C$ such that $C = \{c_1, \dots, c_m\}$.

Before describing the structure of $D$, we introduce \emph{link gadgets} that will be used throughout the construction of $D$ to ensure the maximum degree of $D$ remains low. Those gadgets are similar to the binary trees used in \Cref{sec-w2-hardness-sol-degree}. A \emph{simple link gadget} $L$ is a binary tree of height exactly $2$ with three leaves. Denote its vertex set as $V^L = \{m_1^L, m^L_2, m^L_3\}\cup \{a^L_1, a^L_2\}\cup \{u^L\}$, where the vertex $u^L$ is the root, the three vertices $m_i^L$ for $1\leq i\leq 3$ the leaves, and vertices $a^L_1$ and $a^L_2$ are the inner vertices of said binary tree. Furthermore, the orientation of the arcs is either from $u^L$ to vertices $m_i^L$ (in this case, the linking gadget is said to be \emph{dividing}) or from vertices $m_i^L$ to $u^L$ (and the linking gadget is said to be \emph{grouping}).
    
Finally, we also define an \emph{extended} version of both types of gadgets. To do so, subdivide the arcs respectively from $a_1^L$ and $a_2^L$ to $u^L$ by respectively adding the vertices $b_1^L$ and $b_2^L$ and orient the new arcs in the direction of the subdivided arcs. To distinguish the extended version of the gadgets from the original one, we will denote the latter with a superscript bar, so that the notation $L$ corresponds to a simple linking gadget, while $\bar L$ corresponds to an extended one. \Cref{fig-delta+mfd-linkgadget} depicts one simple and one extended link gadget, and introduces a compact representation for both types of link gadgets.

\begin{figure}
    \begin{subfigure}{.4\textwidth}
        \begin{tikzpicture}[
            vertex/.style={circle, fill,minimum size=2mm, inner sep=0pt},
            >={Stealth[scale=1.5]},
            scale=.8
        ]
        
        \node[vertex, label=$u^L$] (uL) {};
        \node[vertex, above right= .7cm of uL, label=$a_1^L$] (a1L) {};
        \node[vertex, below right= .7cm of uL, label=$a_2^L$] (a2L) {};
        \coordinate (xright) at ($(uL) + (2,.25)$);
        \node[vertex, label=$m_1^L$] (m1L) at ($(xright |- a1L)+(0,.75)$){};
        \node[vertex, label=$m_2^L$] (m2L) at (xright) {};
        \node[vertex, label=$m_3^L$] (m3L) at ($(xright |- a2L)$) {};
        
        \draw[->] (uL) -- (a1L);
        \draw[->] (uL) -- (a2L);
        \draw[->] (a1L) -- (m1L);
        \draw[->] (a1L) -- (m2L);
        \draw[->] (a2L) -- (m3L);
        
        \end{tikzpicture}
        \subcaption{A simple dividing link gadget $L$.}
        \label{fig-linking-simple}
    \end{subfigure}
    \hfill
    \begin{subfigure}{.4\textwidth}
    \centering
        \begin{tikzpicture}[
            vertex/.style={circle, fill,minimum size=2mm, inner sep=0pt},
            >={Stealth[scale=1.5]},
            scale=.8
        ]
        
        \node[vertex, label=above right:$u^{\bar L}$] (uL) {};
        \node[vertex, above left= .7cm of uL, label=$b_1^{\bar L}$] (b1L) {};
        \node[vertex, below left= .7cm of uL, label=$b_2^{\bar L}$] (b2L) {};
        \node[vertex, left=of b1L, label=$a_1^{\bar L}$] (a1L) {};
        \node[vertex, left=of b2L, label=$a_2^{\bar L}$] (a2L) {};
        \coordinate (xleft) at ($(uL) + (-3.5,.25)$);
        \node[vertex, label=$m_1^L$] (m1L) at ($(xleft |- a1L)+(0,.75)$){};
        \node[vertex, label=$m_2^L$] (m2L) at (xleft) {};
        \node[vertex, label=$m_3^L$] (m3L) at ($(xleft |- a2L)$) {};
        
        \draw[<-] (uL) -- (b1L);
        \draw[<-] (uL) -- (b2L);
        \draw[<-] (b1L) -- (a1L);
        \draw[<-] (b2L) -- (a2L);
        \draw[<-] (a1L) -- (m1L);
        \draw[<-] (a1L) -- (m2L);
        \draw[<-] (a2L) -- (m3L);
        
        \end{tikzpicture}
        \subcaption{An extended grouping link gadget $\bar L$.}
        \label{fig-linking-extended}
    \end{subfigure}

    \par\medskip
    
    \begin{subfigure}{\textwidth}
        \centering
        \begin{tikzpicture}[scale=.7, every label/.style={label distance=-3pt},
        >={Stealth[scale=1.5]}]
        
        \draw[rounded corners=12pt]
          (0,0) rectangle (3,3);
        
        \node[label=right:$u^L$] at (0,1.5) {$\bullet$};
        
        \node[label=left:$m_1^L$] at (3,2.2) {$\bullet$};
        \node[label=left:$m_2^L$] at (3,1.5) {$\bullet$};
        \node[label=left:$m_3^L$] at (3,0.8) {$\bullet$};
    
        \node at (1.5,1.5) {$L$};
        
        \draw[->, thick] (1.1,1.9) -- (1.9,1.9);
        
        \begin{scope}[xshift=7cm]
        
        \draw[rounded corners=12pt]
          (0,0) rectangle (3,3);
        
        \node[label=right:$m_1^{\bar L}$] at (0,2.2) {$\bullet$};
        \node[label=right:$m_2^{\bar L}$] at (0,1.5) {$\bullet$};
        \node[label=right:$m_3^{\bar L}$] at (0,0.8) {$\bullet$};
        
        \node[label=left:$u^{\bar L}$] at (3,1.5) {$\bullet$};
        
        \node[label=below:$b^{\bar L}_1$] at (2.2,3) {$\bullet$};
        
        \node[label=$b^{\bar L}_2$] at (2.2,0) {$\bullet$};
        
        \node at (1.5,1.5) {$\bar L$};
        
        \draw[->, thick] (1.1,1.9) -- (1.9,1.9);
        
        \end{scope}
        
        \end{tikzpicture}
        \subcaption{The corresponding compact representations of $L$ and $\bar L$. Note that the labeling of the boundary vertices of each gadget may be omitted in further representations.}
        \label{fig-linking-compact}
    \end{subfigure}
    \caption{Representations of link gadgets.}
    \label{fig-delta+mfd-linkgadget}
\end{figure}

\begin{description}
    \item[Variable gadgets.] For each variable $x\in X$, add in $V$ a set $V^x$ of vertices such that $V^x = \{s^x, t_1^x, t_2^x, a^x\}\cup\{\mu^x_i\mid \mu\in \{\nu, \pi\},\;i\in[7]\}$. We also add:
    \begin{itemize}
        \item three arcs from $s^x$ respectively to $\pi^x_0$, $\nu^x_0$ and $t_2^x$;
        \item for all $\mu\in \{\pi, \nu\}$ and all $i\in[0,6]$, an arc from $\mu^x_i$ to $\mu^x_{i+1}$;
        \item for all $\mu\in \{\pi, \nu\}$, two arcs from $\mu_7^x$ respectively to $t_1^x$ and $a^x$;
        \item an arc from $a^x$ to $t_2^x$.
    \end{itemize}
    
    \item[Clause gadgets.] For any clause $c\in C$, add one extended grouping gadget $\bar L^c$ and a dividing gadget $L^c$, along with two vertices $t_1^c$ and $t_2^c$. Call $V^c$ the set of vertices of a clause gadget, such that $V^c = V^{L^c}\cup V^{\bar{L^c}}\cup\{t_1^c, t_2^c\}$. Add also the following arcs:
    \begin{itemize}
        \item two arcs respectively from $b_1^{\bar L^c}$ and $b_2^{\bar L^c}$ to $t_1^c$.
        \item an arc from $u^{\bar L^c}$ to $u^{L^c}$.
        \item three arcs respectively from $m^{L^c}_1,\;m^{L^c}_2$ and $m^{L^c}_3$ to $t^c_2$.
    \end{itemize}

    \item[Linking variables and clauses.] Consider a variable $x\in X$ and denote by $C(x)$ the set of clauses $\{c_i, c_j, c_k\}$ such that $x$ occurs in $c_i,\; c_j$ and $c_k$, and $i\leq j\leq k$. Note that the set $C(x)$ can also be of size 1 or 2, in which case we respectively consider that $i=j=k$ and $j=k$. Define also $C_0(x) = c_i$, $C_1(x) = c_j$ and $C_2(x) = c_k$. We consider similar notions for a clause $c\in C$, so that the set of variables appearing in $c$ is denoted by $X(c) = \{x_i, x_j,x_k\}$ with $i\leq j\leq k$. We also define $X_0(c) = x_i$, $X_1(c) = x_j$ and $X_2(c) = x_k$. We will add to $A$ some arcs by following the next procedure:
    \begin{itemize}
        \item for $1\leq p\leq |C(x)|$:
        \begin{itemize}
            \item Let $c = C_p(x)$ and $q$ be the smallest integer such that $X_q(c) = x$,
            \item if $x$ appears as a positive literal in $c$, then add the arcs from $\pi^x_{p}$ to $m^{L^c}_q$ and from $\pi^x_{3+p}$ to $m^{\bar L^c}_q$, otherwise add the arc from $\nu^x_{p}$ to $m^{L^c}_q$ and from $\nu^x_{3+p}$ to $m^{\bar L^c}_q$.
        \end{itemize}
    \end{itemize}
\end{description}

This concludes the construction of $D$. A partial representation of $D$ displaying the gadgets corresponding to a variable $x$ appearing in a clause $c$ can be found in \cref{fig-delta+mfd-reduction}. We first prove that the reachability diameter and the feedback vertex number of the obtained graph are constant. 

\begin{observation}\label{obs-delta+mfd-deg}
    The digraph $D$ has maximum degree equal to 3
    and reachability diameter equal to 14.
\end{observation}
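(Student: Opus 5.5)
The proof has two independent parts: a local count of degrees, and a longest-path computation in the DAG $D$. Acyclicity is clear, since all arcs go from variable gadgets into clause gadgets, and inside each gadget the arcs are consistently oriented towards $t_1^x,t_2^x$ or $t_1^c,t_2^c$. In a DAG the distance between two vertices is at most the length of a longest directed path, so the upper bound on the reachability diameter will come from bounding longest paths.

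\textbf{Degrees.} I will go through the vertex types.
\begin{itemize}
\item In a variable gadget, $s^x$ has out-degree $3$.
\item Each chain vertex $\mu^x_i$ has one in-arc and one out-arc along the chain. It gets at most one extra arc to a clause gadget, since only the indices $p$ and $3+p$ with $p\le 3$ are used and each index is used once.
\item $\mu^x_7$ sends arcs to $t_1^x$ and $a^x$, so it has degree $3$.
\item $a^x$ has in-degree $2$ and out-degree $1$.
\item $t_1^x$ and $t_2^x$ have in-degree $2$.
\item In the link gadgets, the root and internal vertices have degree at most $3$ by construction. The root $u^{\bar L^c}$ has in-degree $2$ plus one arc to $u^{L^c}$.
\item Each leaf $m^L_q$ receives exactly one arc from a variable chain, because $q$ indexes a distinct variable of $c$. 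It also has its arc inside the gadget and, for $L^c$, the arc to $t_2^c$. That gives degree at most $3$.
\item $b_i^{\bar L^c}$ has degree $3$ thanks to the arc to $t_1^c$.
\item $t_2^c$ has in-degree $3$.
\end{itemize}
The maximum is attained, for instance at $s^x$, so the maximum degree is exactly $3$. I also need a caveat: when $x$ occurs fewer than three times, the convention $i=j=k$ must not create duplicate arcs. I would note that the loop runs over $p\le |C(x)|$ only, so no duplicates arise.

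\textbf{Reachability diameter.} The longest directed path starts at $s^x$. It runs along a chain $\mu^x_0,\dots,\mu^x_p$ (the statement writes $[7]$, but arcs from $s^x$ to $\mu^x_0$ are used, so indices run over $[0,7]$). It then exits at $\mu^x_{3+p}$ into the leaf $m^{\bar L^c}_q$ of the extended grouping gadget. From there it goes $m\to a\to b\to u^{\bar L^c}\to u^{L^c}\to a\to m\to t_2^c$, which is $7$ arcs. The count is $1+(3+p)+1+7$, which is maximised at $p=3$ (or $p=2$ if indices start at $0$), giving exactly $14$ under the indexing where $p$ ranges over $\{1,2,3\}$. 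Any other route is shorter:
\begin{itemize}
\item going through the simple gadget $L^c$ directly, $s^x\to\mu^x_p\to m^{L^c}_q\to t_2^c$;
\item staying inside the variable gadget, $s^x\to\dots\to\mu^x_7\to a^x\to t_2^x$, which has length $10$;
\item ending at $t_1^c$ via $b$.
\end{itemize}
No path visits two variable gadgets, because clause gadgets have no arcs back into variable gadgets.

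\textbf{Lower bound and main obstacle.} The value $14$ must be attained as a true distance, not just as the length of some path. The concern is whether $s^x$ reaches $t_2^c$ by a shorter route through $\mu^x_p\to m^{L^c}_q$. Since both chain exits of $x$ into the same clause $c$ come from the same chain $\mu$, that route exists and is short. So I will instead take the pair $(\mu^x_{3+p},t_2^c)$ or $(s^x,t_1^c)$ and compute the exact distance. Alternatively I would find a pair such as $(\mu^x_{p+1},t_2^c)$ whose only route is the long one through $\bar L^c$.

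The bookkeeping of indices in this step is where I expect the difficulty. Pinning down exactly which pair realises distance $14$ requires carefully enumerating the routes from a chain into $L^c$ versus into $\bar L^c$.
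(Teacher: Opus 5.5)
Your degree count is correct. It is just a detailed version of the paper's ``by construction'', with one small fix: a leaf $m^{L^c}_q$ receives \emph{at most} one arc from a chain, since a clause may have only two variables.

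The reachability-diameter part, however, has a genuine gap on both sides of the equality.
\begin{itemize}
\item \emph{Upper bound.} Bounding distances by the longest directed path cannot give $14$. With the arc list's indexing ($s^x\to\mu^x_0$), which you adopt, your own count $1+(3+p)+1+7$ equals $15$ at $p=3$, not $14$. The path $s^x\to\mu^x_0\to\cdots\to\mu^x_6\to m^{\bar L^c}_q\to\cdots\to t_2^c$ has $15$ arcs. With the figure's indexing ($s^x\to\mu^x_1$) the longest path does have $14$ arcs, but then no pair is at distance $14$ (the maximum is $13$). So in neither reading does your argument give equality.
\item \emph{Lower bound.} You never exhibit a pair at distance $14$. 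The fallback pairs you propose are too close: $d(\mu^x_{3+p},t_2^c)=8$, $d(s^x,t_1^c)=p+8\le 11$, and $d(\mu^x_{p+1},t_2^c)=10$.
\end{itemize}

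The missing idea is to aim at the vertices of $L^c$ that $s^x$ can reach only through the bridge $u^{\bar L^c}\to u^{L^c}$. These are $a^{L^c}_1$, $a^{L^c}_2$, and the leaves $m^{L^c}_{q'}$ with $q'\neq q$. Such a leaf's only other possible in-neighbour lies on another variable's chain, which is unreachable from $s^x$. Hence every path from $s^x$ to such a leaf enters $\bar L^c$ at $\mu^x_{3+p}$ and crosses the bridge, so $d(s^x,m^{L^c}_{q'})=d(s^x,\mu^x_{3+p})+7=p+11$. This equals $14$ when $c=C_3(x)$ in the arc-list indexing. It requires some variable with three occurrences; otherwise you only get ``at most $14$''.

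For the matching upper bound, argue on distances rather than path lengths. Every directed path with at least $15$ arcs runs from $s^x$ to $t_2^c$ with $c=C_3(x)$. That pair is at distance only $6$, thanks to the shortcut $\mu^x_3\to m^{L^c}_q\to t_2^c$ that you correctly spotted.

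Your suspicion is well founded. The paper's one-line proof uses exactly the pair $(s^x,t_2^c)$ through $\pi^x_6$ and overlooks this shortcut, so it does not supply a valid witness either.
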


\begin{proof}
    By the previous construction, no vertex is adjacent to more than three arcs. Furthermore, it can be checked that the longest shortest path in the graph is of size at most 13. Indeed, consider some variable $x$ adjacent to some clause $c$, and suppose that $C_3(x) = c$. Then, there exists a path from $s^x$ to $t_2^c$ going though the vertex $\pi_6^x$ which has length 14 (in fact, all such paths in $D$ have length 14).
\end{proof}

Recall \Cref{lem-sink-sources} which states that any extremal vertex of a given digraph is a part of any geodetic sets of said digraph. Consider the set $V_0$ of extremal vertices of $D$. Then we have $V_0 = \{s^x, t^x_1, t^x_2\mid x\in X\}\cup\{t^c_1, t^c_2\mid c\in C\}$, and the only sources of $D$ are the vertices $s^x$ for $x\in X$. \Cref{prop-delta+mfd-extremalcover} describes which vertices of $D$ are not covered by $V_0$.

\begin{proposition}\label{prop-delta+mfd-extremalcover}
    Let $V_0$ be the set of extremal vertices of $D$. It holds that
    $$I(V_0) = V\setminus\left(\{a^x \mid x\in X\}\cup\{u^{L^c}, u^{\bar L^c}, a^{L^c}_i \mid c\in C,\; i\in[2]\}\right).$$
\end{proposition}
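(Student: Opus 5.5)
The plan is to compute $I(V_0)$ directly, using the fact that the only sources are the vertices $s^x$. Every shortest path between two vertices of $V_0$ therefore goes from some $s^x$ to a sink: $t_1^x$, $t_2^x$, $t_1^c$ or $t_2^c$. So I would list, for each vertex $s^x$, the sinks it reaches and the set of vertices lying on the shortest $s^x$-to-sink paths. I would then check two things: every vertex outside the claimed exceptional set appears on one of these paths, and no exceptional vertex does. The construction is local to each variable gadget plus the (at most three) clause gadgets attached to it, so fixing one variable $x$ and one clause $c\in C(x)$ suffices.

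\textbf{Covered vertices.} For the variable gadget, I would use that $t_1^x$ is reachable from $s^x$ only through $\pi^x_7$ or $\nu^x_7$. Both routes, $s^x\pi^x_0\cdots\pi^x_7t_1^x$ and the analogous $\nu$-route, have length $9$; any shortcut through a clause gadget cannot come back to $t_1^x$. Hence both chains $\pi^x_0,\dots,\pi^x_7$ and $\nu^x_0,\dots,\nu^x_7$ are covered. For the clause gadget, suppose the literal of $x$ in $c=C_p(x)$ uses the chain $\mu\in\{\pi,\nu\}$ at position $q$.
\begin{itemize}
\item The path $s^x\mu^x_0\cdots\mu^x_p\,m^{L^c}_q\,t_2^c$ has length $p+3$. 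I would check it is shortest, since every other route to $t_2^c$ is longer. It covers $m^{L^c}_q$.
\item The path $s^x\mu^x_0\cdots\mu^x_{3+p}\,m^{\bar L^c}_q\,a^{\bar L^c}_i\,b^{\bar L^c}_i\,t_1^c$ is forced, because $t_1^c$ is entered only through $b_1^{\bar L^c}$ or $b_2^{\bar L^c}$. It covers the leaves, the inner vertices $a^{\bar L^c}_i$, and the subdivision vertices $b^{\bar L^c}_i$ of $\bar L^c$ that are fed by variable arcs.
\end{itemize}
Since every leaf of each gadget receives an arc from some variable chain (under the indexing convention of the construction), all leaves of $L^c$ and $\bar L^c$, all $a^{\bar L^c}_i$ and all $b^{\bar L^c}_i$ get covered.

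\textbf{Uncovered vertices.} For $a^x$, the only sink reachable from it is $t_2^x$. But $s^x t_2^x$ is an arc, so $d(s^x,t_2^x)=1$, while any path from $s^x$ through $a^x$ has length $10$. Thus $a^x$ lies on no shortest path between extremal vertices. For $u^{\bar L^c}$, $u^{L^c}$ and $a^{L^c}_i$, the only sink reachable from them is $t_2^c$. Any $s^y$ reaching them must do so via $s^y\mu^y_0\cdots\mu^y_{3+p}\,m^{\bar L^c}_{q}\,a\,b\,u^{\bar L^c}\,u^{L^c}\,a^{L^c}_i\,m^{L^c}_{r}\,t_2^c$, which has length at least $13$. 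The same chain $\mu^y$ also contains $\mu^y_p$, which has a direct arc to $m^{L^c}_q$, giving an $s^y$-to-$t_2^c$ path of length $p+3\le 6$. So none of these vertices lies on a shortest path. Finally, the sinks themselves are in $V_0$ and are trivially covered.

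\textbf{Main obstacle.} The delicate step is the first item: the bookkeeping showing that each of the three leaves of both $L^c$ and $\bar L^c$ actually receives an arc from a variable chain, including clauses with only two variables. I would first re-read the indexing of $X_q(c)$ and $C_p(x)$ and confirm that the positions used cover all three leaves. I would also double-check that the direct arc $s^x t_2^x$ and the distance comparisons for $t_2^c$ remain correct when several variables share a clause; this should hold because the distance argument is identical for every source $s^y$.
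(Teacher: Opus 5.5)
Your proposal takes the same route as the paper: it enumerates the shortest paths from the sources $s^x$ to the sinks $t_1^x,t_2^x,t_1^c,t_2^c$. Your non-coverage argument (the arc $s^xt_2^x$ for $a^x$, and the shortcut $\mu^y_p\to m^{L^c}_q$ that beats every route through $u^{\bar L^c}$) is more explicit than the paper's.

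The leaf bookkeeping you flag is taken for granted by the paper too, which simply asserts that the $s^x$ are the only sources. It is fine for three-variable clauses. For a two-variable clause, however, the stated construction feeds only two of the three leaves, so the remaining leaf of $\bar L^c$ has no in-arc and becomes an extra source. Its shortest paths to $t_2^c$ then cover $u^{\bar L^c}$, $u^{L^c}$ and $a^{L^c}_i$. That case needs an amendment of the construction, not a re-reading of the indices.
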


\begin{proof}
    We consider all pairs of sources and vertices of $V_0$ to enumerate exhaustively the vertices that are covered by $V_0$. It can be verified that a source $s^x$ cannot reach any vertex of $V^{x'}$ if $x\not=x'$, as well as any vertex of $V^c$ if $x$ does not appear in $c$. This means that the only pairs to consider are of the form $(s^x, t^x_i)$ for some $x\in X$ and $(s^x, t_i^c)$ for some $c\in C$ containing some vertex $x$ as a litteral. It holds that: 
    \begin{enumerate}
        \item for all variables $x\in X$, the shortest paths between the source $s^x$ and the sink $t_1^x$ cover the set of vertices $\{\mu_i^x\mid \mu\in\{\pi, \nu\},\; i\in [7]\}$. $s^x$ is also adjacent to $t^x_2$, so no vertices are covered by this latter pair;\label{item1}
        \item for all $c\in C$, shortest paths between the sources $s^x$ for $x\in X(c)$ and sink $t^c_1$ cover all vertices of $V^{\bar L^c}$ except $u^{\bar L^c}$, and some vertices stated as covered in \cref{item1}. The shortest paths between the source $s^x$ for $x\in X(c)$ and $t^c_2$ cover vertices of $\{m^{L^c}_i\mid i\in [3]\}$ (and again some vertices stated as covered in \cref{item1}).
    \end{enumerate}

    We considered exhaustively all shortest paths between pairs of vertices of $V_0$, an the only vertices that are not covered by those paths are exactly $\{a^x \mid x\in X\}\cup\{u^{L^c}, u^{\bar L^c}, a^{L^c}_i \mid c\in C,\; i\in[2]\}$, thus concluding the proof.
\end{proof}

\begin{lemma}\label{lemma-delta+mfd-clausecovered}
    Let $c\in C$ and $v,w\in V$ such that $u^{\bar L^c}\in I(v,w)$. If $v\not\in \{s^x\}\cup \{\mu^x_i\mid \mu\in\{\pi, \nu\},\;i\in[3]\}$ for all variables $x$ contained in $c$, then for all vertices $u'\in  V^{L^c}\cup \{u^{\bar L^c}, s^c_2\}$, it holds that $u'\in I(v, t^c_2)$.
\end{lemma}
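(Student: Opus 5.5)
The plan is to show that, under the hypothesis on $v$, every directed path from $v$ into the dividing gadget $L^c$ must pass through $u^{\bar L^c}$. Once that holds, the distance $d(v,t^c_2)$ splits as $d(v,u^{\bar L^c})+d(u^{\bar L^c},t^c_2)$, and the uniform length of the paths through $L^c$ gives the conclusion. I read the vertex $s^c_2$ in the statement as $t^c_2$, since no vertex $s^c_2$ is defined in the construction.

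\textbf{Step 1: possible positions of $v$.} Since $u^{\bar L^c}\in I(v,w)$, the vertex $v$ either equals $u^{\bar L^c}$ or can reach it. I will list the in-neighbourhoods in the construction. Every vertex of $\bar L^c$ other than its leaves is reached only from the leaves $m^{\bar L^c}_q$ of $\bar L^c$. Those leaves have in-arcs only from vertices $\mu^x_{3+p}$ with $x\in X(c)$ and $p\in[3]$. These $\mu$-vertices are reached only along the chain $\mu^x_0,\dots,\mu^x_7$ from $s^x$. Hence $v$ is one of the following: $s^x$; some $\mu^x_i$ with $i\le 6$ and $x\in X(c)$; a vertex of $\bar L^c$ other than $b^{\bar L^c}_1$ and $b^{\bar L^c}_2$ (which are needed only on the route to $t^c_1$, so they are handled the same way); or $u^{\bar L^c}$ itself. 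The hypothesis removes $s^x$ and $\mu^x_1,\mu^x_2,\mu^x_3$, and I will also check that $\mu^x_0$ must be excluded or treated alongside them. This leaves $\mu^x_i$ with $i\ge 4$, vertices of $\bar L^c$, and $u^{\bar L^c}$.

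\textbf{Step 2: no bypass.} The in-arcs of $V^{L^c}\cup\{t^c_2\}$ that come from outside this set are exactly the arc $u^{\bar L^c}u^{L^c}$ and the arcs $\mu^x_p m^{L^c}_q$ with $p\le 3$. From any remaining candidate $v$ I cannot reach any $\mu^x_p$ with $p\le 3$, for three reasons. The $\mu$-chains only move forward in their index. The vertices of the $\bar L$ gadgets only lead to $t^{c}_1$ or into $L^{c}$. The vertex $a^x$ only leads to $t^x_2$. Therefore every $v$–$t^c_2$ path, and every path from $v$ to a vertex of $V^{L^c}$, passes through $u^{\bar L^c}$. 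This step is the main obstacle. It requires a careful reachability check over the whole digraph $D$, in particular confirming that a vertex $\mu^x_i$ with $i\ge 4$ cannot enter another clause gadget and come back. I expect this to follow from the acyclic, layered shape of the construction.

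\textbf{Step 3: conclude.} Inside $L^c$, every path from $u^{\bar L^c}$ to $t^c_2$ has the form $u^{\bar L^c}\,u^{L^c}\,a^{L^c}_i\,m^{L^c}_j\,t^c_2$ of length $4$, and each vertex of $V^{L^c}$ lies on one of these paths. Concatenating a shortest $v$–$u^{\bar L^c}$ path with any of them gives a shortest $v$–$t^c_2$ path, by Step 2. So every $u'\in V^{L^c}\cup\{u^{\bar L^c},t^c_2\}$ lies in $I(v,t^c_2)$.
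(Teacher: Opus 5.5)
Your proposal is correct and takes essentially the same route as the paper. The paper's proof also argues that, under the hypothesis, $v$ can reach $V^{L^c}\cup\{u^{\bar L^c},t^c_2\}$ only through $u^{\bar L^c}$, that all paths from $u^{\bar L^c}$ to $t^c_2$ have length $4$ and together cover this set, and then concatenates; your Steps~1--2 just spell out the reachability check that the paper leaves implicit. Your worry about $\mu^x_0$ is moot, since the vertex set contains only $\mu^x_i$ for $i\in[7]$ and the figure starts each chain with the arc $s^x\pi^x_1$, so the index $0$ is a typo. Also, $b^{\bar L^c}_1$ and $b^{\bar L^c}_2$ do lie on the routes into $u^{\bar L^c}$, but your Step~2 already covers them.
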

\begin{proof}
    Denote by $U^x$ the set $\{s^x\}\cup \{\mu^x_i\mid \mu\in\{\pi, \nu\},\;i\in[3]\}$. Suppose that $u^{\bar L^c}\in I(v,w)$ with $v\not\in U^x$ for all $x$ contained in $c$. Let $W = V^{L^c}\cup \{u^{\bar L^c}, s^c_2\}$. Since $v \not = s^x$, $v$ can only reach vertices of $W$ using paths going through $u^{\bar L^c}$. Furthermore, the vertices reachable from $u^{\bar L^c}$ are exactly the vertices in $W$. One can also verify that, by definition of $D$, $I(u^{\bar L^c}, t^c_2) = W$. Indeed, all paths from $u^{\bar L^c}$ to $t^c_2$ have length 4, and together span $W$. This means that a shortest path from $v$ to $u^{\bar L^c}$, combined with one of the shortest path between $u^{\bar L^c}$ and $t^c_2$ forms a shortest path from $v$ to $t^c_2$. All such combined path span $W$, thus ensuring that for all $u'\in W$, $u'\in I(v, t^c_2)$.
\end{proof}

\begin{observation}\label{obs-delta+mfd-nota}
    Let $S$ be a geodetic set of $D$ such that $a^x\in S$ for some $x\in X$. Let $v\in V^x\setminus(V^0\cup\{a^x\})$ Then, $(S\cup\{v\})\setminus \{a^x\}$ is a geodetic set of $D$
\end{observation}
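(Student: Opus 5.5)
The plan is to show directly that $S' = (S\cup\{v\})\setminus\{a^x\}$ is geodetic. I will use two facts: every vertex that needed $a^x$ as an \emph{endpoint} of a covering shortest path is already covered by $V_0$, and $a^x$ itself gets covered by a shortest path from $v$ to $t_2^x$. Here $V^0$ in the statement is the set $V_0$ of extremal vertices. Since $a^x$ is neither a source nor a sink, $V_0\subseteq S\setminus\{a^x\}\subseteq S'$. By \Cref{prop-delta+mfd-extremalcover}, every vertex of $I(V_0)$ is therefore covered by $S'$.

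First, I describe the neighbourhood of $a^x$. By construction, $N^-(a^x)=\{\pi^x_7,\nu^x_7\}$ and $N^+(a^x)=\{t^x_2\}$. The vertices that can reach $a^x$ are exactly $s^x$ and the chain vertices $\mu^x_i$, $\mu\in\{\pi,\nu\}$; the only vertex reachable from $a^x$ is $t^x_2$. Let $y\neq a^x$ be a vertex covered by $S$ only through a shortest path having $a^x$ as an endpoint. I split into two cases.

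\begin{itemize}
\item If $a^x$ is the tail of the path, the head is $t^x_2$ and the path is the single arc $a^xt^x_2$. It has no inner vertex, so $y=t^x_2\in V_0\subseteq S'$.
\item If $a^x$ is the head, the path starts at some $w\in S$ with $w\in\{s^x\}\cup\{\mu^x_i\}$. Its inner vertices lie among $\{\mu^x_i\mid \mu\in\{\pi,\nu\},\,i\in[7]\}$. By \cref{item1} in the proof of \Cref{prop-delta+mfd-extremalcover}, these are already covered by the pair $(s^x,t^x_1)$. The tail $w$ either lies in $S'$ or is itself a chain vertex, hence covered.
\end{itemize}

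So every vertex other than $a^x$ remains covered by $S'$. The vertex $v$ needs no covering since $v\in S'$.

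It remains to cover $a^x$. Since $v\in V^x\setminus(V_0\cup\{a^x\})$, we have $v=\mu^x_i$ for some $\mu\in\{\pi,\nu\}$ and $i\in[7]$. I claim $a^x\in I(v,t^x_2)$. The step needing most care is checking that no shortcut from $v$ to $t^x_2$ avoids $a^x$. The in-neighbours of $t^x_2$ are $s^x$ and $a^x$ only, and $s^x$ is a source, so it is not reachable from $v$. Every path from $v$ to $t^x_2$ therefore enters through $a^x$. The arcs leaving the chain towards clause gadgets lead only into $V^{L^c}\cup V^{\bar L^c}\cup\{t^c_1,t^c_2\}$, from which $V^x$ is unreachable. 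Consequently the path $\mu^x_i\mu^x_{i+1}\cdots\mu^x_7a^xt^x_2$ is a shortest path from $v$ to $t^x_2$, with $t^x_2\in S'$, and it contains $a^x$. Hence $S'$ is a geodetic set of $D$, and $|S'|\le|S|$.
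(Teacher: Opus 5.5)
Your proof is correct and follows essentially the same route as the paper. The paper argues that, since $a^x$ is reached only from $s^x$ and the chain vertices and reaches only $t^x_2$, \Cref{prop-delta+mfd-extremalcover} implies that removing $a^x$ affects only the coverage of $a^x$ itself, and that the unique path from $v$ to $t^x_2$ passes through $a^x$; your case analysis spells out the first step, which the paper states in one line.
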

\begin{proof}
    Since $a^x$ can only reach vertex $t^x_2$ and can only be reached by vertices from $V^x$, and by \Cref{prop-delta+mfd-extremalcover}, removing it from $S$ can only affect the coverage of $a^x$ itself. Furthermore, it is easy to see that any vertex $v$ from $V^x\setminus(V^0\cup\{a^x\})$ in has a unique path to $t^x_2$, and that this path covers $a^x$.
\end{proof}

We now prove the next two main lemmas of this section, that will justify the soundness of the reduction. We then end the section by proving \Cref{thm-maxdeg}.

\begin{lemma}\label{lemma-delta+mfd-sat2gs}
    If the instance $\mathcal I=(X,C)$ is a \yes-instance for \textsc{3-Occ 3-SAT}, then $D$ has a geodetic set of size at most $4n+2m$
\end{lemma}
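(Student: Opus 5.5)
Given a satisfying assignment $\varphi$, we build an explicit set $S$ and check that it is geodetic. We start from the set $V_0$ of extremal vertices, which has $3n+2m$ vertices ($s^x,t_1^x,t_2^x$ for each variable and $t_1^c,t_2^c$ for each clause). We then add one vertex per variable: $\pi^x_0$ if $\varphi(x)=\TRUE$ and $\nu^x_0$ otherwise. This gives $|S|=4n+2m$. By \Cref{prop-delta+mfd-extremalcover}, it only remains to cover the vertices $a^x$ for $x\in X$, and $u^{L^c}$, $u^{\bar L^c}$, $a^{L^c}_1$, $a^{L^c}_2$ for $c\in C$.

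For each $a^x$: say $\mu^x_0\in S$ with $\mu\in\{\pi,\nu\}$. In the variable gadget, the out-reach of $\mu^x_0$ is the $\mu$-chain followed by $t_1^x$, $a^x$, $t_2^x$, plus the arcs into clause gadgets. The vertex $t_2^x$ can only be reached from $\mu^x_0$ through $\mu^x_7$ and $a^x$. Hence $(\mu^x_0,\dots,\mu^x_7,a^x,t_2^x)$ is the unique, and therefore shortest, path from $\mu^x_0$ to $t_2^x$, and it covers $a^x$. This is the same argument as in \Cref{obs-delta+mfd-nota}.

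For each clause $c$: pick a literal of $c$ made true by $\varphi$, say on variable $x$, with sign $\mu$ and with $c=C_p(x)$ and index $q$ as in the construction. The chosen vertex $\mu^x_0\in S$ has an arc $\mu^x_{3+p}\to m^{\bar L^c}_q$. This gives a path $\mu^x_0\to\dots\to\mu^x_{3+p}\to m^{\bar L^c}_q\to a\to b\to u^{\bar L^c}\to u^{L^c}\to a^{L^c}_i\to m^{L^c}_j\to t^c_2$. We must check that such paths are shortest from $\mu^x_0$ to $t^c_2$ and that together they cover $u^{\bar L^c}$, $u^{L^c}$ and both $a^{L^c}_i$. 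Once $u^{\bar L^c}$ lies on a shortest $\mu^x_0$–$t^c_2$ path, the whole of $V^{L^c}$ is covered, exactly as in the argument of \Cref{lemma-delta+mfd-clausecovered}: all paths from $u^{\bar L^c}$ to $t^c_2$ have the same length 4 and span $V^{L^c}$.

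The main obstacle is comparing the length through $\bar L^c$ with the competing route from $\mu^x_0$ directly into $L^c$. That route is $\mu^x_p\to m^{L^c}_q\to t^c_2$, of length $p+2$. The long route has length $(3+p)+1+3+4=p+11$, so the long route is not shortest and this choice of endpoint fails. So we must instead end the path at $t_1^c$ or at a different vertex, or find another endpoint. Since $u^{\bar L^c}$ only leads into $L^c$ and then to $t^c_2$, the direct shortcut always exists whenever it is reachable. The design intent must therefore be that the correct literal is chosen so that $\mu^x_0$ does not reach $m^{L^c}_q$ at all. This would mean using the opposite-sign chain: a true literal with sign $\mu$ should correspond to selecting the other chain's source vertex, or selecting a vertex $\mu^x_i$ with $i>p$ and $i\le 3+p$ (say $\mu^x_4$), which reaches $\bar L^c$ but not $L^c$ directly. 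I would therefore take $S$ to contain $\mu^x_4$ for the true sign $\mu$, rather than $\mu^x_0$. From $\mu^x_4$, the arc into $L^c$ at $\mu^x_p$ with $p\le3$ is no longer reachable. Hence the unique routes to $t^c_2$ pass through $\mu^x_{3+p}$, $\bar L^c$ and $u^{\bar L^c}$, so they are shortest and cover everything. The vertex $\mu^x_4$ still covers $a^x$ via the unique path to $t_2^x$.

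Finally, we verify that the added vertices do not need to cover anything else, since all other vertices are already in $I(V_0)$. This completes the count $|S|=4n+2m$.
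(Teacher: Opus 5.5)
Your final construction, $S=V_0\cup\{\mu^x_4 : x\in X\}$ with $\mu\in\{\pi,\nu\}$ the sign chosen by $\varphi$, matches the paper's proof exactly. It covers $a^x$ via the unique path $\mu^x_4\to\dots\to\mu^x_7\to a^x\to t^x_2$, and each clause via the arc $\mu^x_{3+p}\to m^{\bar L^c}_q$, so that every path to $t^c_2$ passes through $u^{\bar L^c}$ and \Cref{lemma-delta+mfd-clausecovered} applies. You are right that a chosen vertex at position at most $3$ fails because of the shortcut $\mu^x_p\to m^{L^c}_q\to t^c_2$. Your passing idea of switching to the opposite-sign chain would not work, since both arcs for a literal leave the same chain, so a clean write-up should simply start from $\mu^x_4$.
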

\begin{proof}
    Let $\varphi$ by a satisfying assignment of $I$. Define a vertex set $S$ as follows. Add all extremal vertices to $S$, so that $V^0\subseteq S$. Furthermore, if $x$ is assigned positively in $\varphi$, then add vertex $\pi^x_4$ to $S$. Otherwise, add vertex $\nu_4^x$ to $S$. The constructed set $S$ is of size $|V^0| + n = 4n+2m$. Since $V^0\subseteq S$, all vertices in $I(V^0)$ are covered by $S$. Let us verify that all other vertices belong to $I(S)$. Recall that, by \Cref{prop-delta+mfd-extremalcover}, $V\setminus I(V^0) = \{a^x \mid x\in X\}\cup\{u^{L^c}, u^{\bar L^c}, a^{L^c}_i \mid c\in C,\; i\in[2]\}$.
    
    Let $x\in X$ and consider vertex $a^x$. By construction of $S$, either $\pi^x_4$ or $\nu^x_4$ belongs to $S$. Call this  selected vertex $v$. There exist a unique path from $v$ to $t^x_2$, and this path goes through $a^x$, thus $a^x$ belongs to $I(S)$.  

    Now, let $c$ be some clause of $C$. Since $\varphi$ is satisfying, there exist some variable $x\in X$ contained in $C$ that satisfies $c$. Let $p$ and $q$ be integers such that $C_p(x) = c$ and $X_q(c) = x$. If $x$ is present as a positive literal in $c$, then by construction of $S$, $\pi_4^x\in S$. Furthermore, by construction of $D$, the arc $\pi_{3+p}^x m_q^{\bar L^c}$ belongs to $D$. This means that there exists a unique path in $D$ between $\pi_{3+p}^x$ and $s_2^c$, that goes through $u^{\bar L^c}$. Then, by \Cref{lemma-delta+mfd-clausecovered}, all vertices of $V^{L^c}$ and $u^{\bar L^c}$ belong to $I(S)$. If $x$ is present as a negative literal in $c$, the same reasoning holds when replacing the letter $\pi$ by $\nu$. This concludes the proof since all vertices of $D$ belong to $I(S)$.
\end{proof}

\begin{lemma}\label{lemma-delta+mfd-gs2sat}
    If the digraph $D$ has a geodetic set of size at most $4n+2m$, then $\mathcal I$ is a \yes-instance for \textsc{3-Occ 3-SAT}.
\end{lemma}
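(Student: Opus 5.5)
Let $S$ be a geodetic set of $D$ with $|S|\le 4n+2m$. By \Cref{lem-sink-sources}, $V_0\subseteq S$, and $|V_0|=3n+2m$. So $S$ contains at most $n$ vertices outside $V_0$. By \Cref{prop-delta+mfd-extremalcover}, the uncovered vertices $a^x$ ($x\in X$) and $u^{L^c},u^{\bar L^c},a^{L^c}_i$ ($c\in C$) must be covered using these extra vertices.

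\textbf{Step 1: one extra vertex per variable gadget.} The only vertices that reach $a^x$ are those of $V^x$. Also $a^x$ is not covered by a pair inside $V_0$. Hence $S$ contains at least one vertex of $V^x\setminus V_0$ for every $x$. These $n$ sets are disjoint and there are at most $n$ extra vertices, so $S\setminus V_0$ contains \emph{exactly} one vertex $v_x\in V^x\setminus V_0$ for each $x$, and no other vertices. In particular, no vertex of any clause gadget is in $S\setminus V_0$. If $v_x=a^x$, \Cref{obs-delta+mfd-nota} lets us replace it by another vertex of $V^x\setminus V_0$, so we may assume $v_x=\mu^x_i$ for some $\mu\in\{\pi,\nu\}$ and $i\in[0,7]$. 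We define $\varphi(x)=\TRUE$ if $\mu=\pi$ and $\FALSE$ if $\mu=\nu$.

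\textbf{Step 2: covering $u^{\bar L^c}$ forces a satisfying literal.} Fix $c\in C$. The vertex $u^{\bar L^c}$ lies on some $I(v,w)$ with $v,w\in S$. All clause vertices of $S$ are sinks, and $u^{\bar L^c}$ is not a sink, so $v$ must lie in some $V^x$ with $x\in X(c)$. Since $s^x$–$t$ pairs do not cover it, $v=v_x=\mu^x_i$. The only paths from $V^x$ into $\bar L^c$ use the arc $\mu^x_{3+p}m^{\bar L^c}_q$, where $c=C_p(x)$ and $\mu$ is the sign of $x$ in $c$.

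\begin{itemize}
\item Consider the other route into the clause gadget, via $\mu^x_p m^{L^c}_q$ into $L^c$. This entrance is downstream of $u^{\bar L^c}$, so it cannot lead through $u^{\bar L^c}$.
\item Therefore $\mu$ is exactly the polarity with which $x$ occurs in $c$, and $i\le 3+p$.
\item Hence $\varphi$ sets $x$ so as to satisfy $c$.
\end{itemize}

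The main obstacle is making this step airtight. We must show that if $v_x=\mu^x_i$ with $i\le p$, the path from $v_x$ through $u^{\bar L^c}$ to a sink is genuinely never a shortest path. The danger is that the shorter route $\mu^x_p\to m^{L^c}_q\to\cdots\to t^c_2$ bypasses $u^{\bar L^c}$. Checking the lengths would show this: from $\mu^x_p$, the direct route to $t^c_2$ has length $4$, while the route via $\bar L^c$ is longer by the $+3$ offset plus the extended gadget's length. The only sink reachable through $u^{\bar L^c}$ is $t^c_2$.

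This argument needs care. For $i\le p$ the detour via $L^c$ is shorter, so $u^{\bar L^c}$ would not be covered. Thus a covering actually forces $p<i\le 3+p$, but the polarity conclusion suffices either way. The role of the offsets $p$ and $3+p$, with $p\le 3$, is precisely to make the two entries distinguishable for all three occurrences simultaneously.

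\textbf{Conclusion.} Since $\varphi$ is well defined by Step 1 (exactly one $v_x$ per variable) and satisfies every clause by Step 2, $\mathcal I$ is a \yes-instance.
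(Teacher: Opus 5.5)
Your proof is correct and follows the paper's argument: the extremal vertices are forced, covering each $a^x$ forces exactly one extra vertex per variable gadget (moved off $a^x$ via \Cref{obs-delta+mfd-nota}), and a pure reachability argument then shows that the chain containing that vertex must carry the literal satisfying each clause. The only differences are cosmetic: the paper uses $u^{L^c}$ as the witness vertex where you use $u^{\bar L^c}$, which is immaterial since any path from a variable gadget to either vertex must use an arc of the form $\mu^x_{3+p}m^{\bar L^c}_q$; your ``main obstacle'' paragraph is unnecessary for this direction, since only reachability and not path lengths is needed, and the direct route from $\mu^x_p$ to $t^c_2$ has length $2$, not $4$.
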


\begin{proof}
    Suppose $D$ has a geodetic of size at most $4n+2m$, and call this set $S$. 
    By \Cref{lem-sink-sources}, we have that $V^0\subseteq S$. Furthermore, by \Cref{prop-delta+mfd-extremalcover}, $a^x \not\in I(V^0)$ for all $x\in X$. However, $a^x$ is only reachable from vertices of $V^x$, and the only vertex it can reach is $t^x_2$. This means that $S\cap \left(V^x\setminus V^0\right)\not=\emptyset$, otherwise $a^x$ could not be covered by $S$. Since there are exactly $n$ different variables in $X$, and $|V^0| = 3n+2m$, it means that $|S\cap \left(V^x\setminus V^0\right)| = 1$, and all non-extremal vertices of $S$ lie in variable gadgets. By \Cref{obs-delta+mfd-nota}, one can suppose that the non-extremal vertex of $S$ in $V^x$ is not $a^x$. Call this vertex $v(x)$.
    
    We define an assignment $\varphi$ over the variables of $X$ as follows: for any variable $x\in X$, $\varphi(x) = \TRUE$ if $v(x) = \pi^x_i$ for some $i\in [7]$, otherwise $\varphi(x) = \FALSE$. We claim that $\varphi$ satisfies all clauses of $C$. Indeed, consider some clause $c\in C$. Since $S$ is a geodetic set, there exist two vertices $s$ and $t$ of $S$ such that $u^{L^c}$ is lying on a shortest path from $s$ to $t$. Furthermore, by \Cref{prop-delta+mfd-extremalcover}, both $s$ and $t$ cannot belong to $V^0$. Since $u^{L^c}$ can only reach vertices of its clause gadget, $s$ has to be one non-extremal vertex of $S$. This means that there exist some variable $x$ for which the non-extremal solution vertex in its associated gadget covers $u^{L^c}$. Suppose without loss of generality that $s = \pi^x_i$ for some $i\in [7]$. There exists by construction of $D$ a path from $s$ to $t$ if and only if $x$ is contained as a positive literal in $c$, and thus the assignment of $x$ in $\varphi$ satisfies $c$. The same applies when $s = \nu^x_i$, and this holds for all $c\in C$, so $\varphi$ is a satisfying assignment.
\end{proof}

\begin{proof}[Proof of \Cref{thm-maxdeg}]
    \gsfull is known to be \NP-complete for general (di)graphs, it thus belongs to \NP. It cans also be verified that $D$ can be constructed in polynomial time from $\mathcal I$. Furthermore, by \Cref{lemma-delta+mfd-gs2sat,lemma-delta+mfd-sat2gs}, $\mathcal I$ is a \yes instance for \textsc{3-Occ 3-SAT} if and only if $D$ has a geodetic set of size less than $4n+2m$. Finally, \Cref{obs-delta+mfd-deg} confirms that the reduction holds for graphs of maximum degree 3 and reachability diameter 14.
\end{proof}
\begin{figure}
    \centering

    \begin{tikzpicture}[
    vertex/.style={circle, fill, minimum size=2mm, inner sep=0pt},
    >={Stealth[scale=1.5]}, scale=.8, node distance=.8cm]

        \node[vertex, label=$s^x$, rectangle] (s) {};
        \node [vertex, label=$\pi^x_1$, above right=of s] (p1) {};
        \node [vertex, label=below:$\nu^x_1$, below right=of s] (n1) {};
    
        \draw [->] (s) -- (p1);
        \draw[->] (s) -- (n1);
    
        \def\prev{1}
        \foreach \i in {2,...,7} {
            \node[vertex, label=$\pi^x_\i$, right=of p\prev] (p\i) {};
            \node[vertex, label=below:$\nu^x_\i$, right=of n\prev] (n\i) {};
            \draw[->] (p\prev) -- (p\i);
            \draw [->] (n\prev) -- (n\i);
            \xdef\prev{\i}
        }
    
        \node[vertex, below right=of p7, label=below right:$t_1^x$, rectangle] (tx1) {};
        \node [vertex, below right=of p1, label=$t^x_2$, rectangle] (tx2) {};
        \node [vertex, below left =of p7, label=$a^x$] (ax) {};
    
        \draw[->] (s) -- (tx2);
        \draw[->] (p7) -- (ax);
        \draw[->] (p7) -- (tx1);
        \draw[->] (n7) -- (ax);
        \draw[->] (n7) -- (tx1);
        \draw[->] (ax) -- (tx2);

        \begin{scope}[shift={(13,2)}, rotate=-90, shift={(-1.5,-1.5)}]

        \draw[rounded corners=12pt]
          (0,0) rectangle (3,3);
        
        \node[vertex] at (0,2.2) {};
        \node[vertex] at (0,1.5) {};
        \node[vertex, label=$m^{\bar L^c}_3$] (mt3) at (0,0.8) {};
        
        \node[vertex] (ut) at (3,1.5) {};
        
        \node[vertex] (b1) at (2.2,3) {};
        \node[vertex] (b2) at (2.2,0) {};
        
        \draw[->, thick] (0.9,1.8) -- (2.1,1.8);
        
        \end{scope}
        
        \node at (13,2) {$\bar L^c$};
        \node[vertex, label=$t_1^c$, rectangle] (t1c) at (15.5,0) {};

        \draw[->] (b1) -- (t1c);
        \draw[->] (b2) -- (t1c);
        
        \begin{scope}[shift={(13,-2)}, rotate=-90, shift={(-1.5, -1.5)}]
        
        \draw[rounded corners=12pt]
          (0,0) rectangle (3,3);
        
        \node[vertex] (ub) at (0,1.5) {};
        
        \node[vertex] (mbl) at (3,2.2) {};
        \node[vertex] (mbc) at (3,1.5) {};
        \node[vertex, label=below left:$m^{L^c}_3$] (mbr) at (3,0.8) {};
        
        \draw[->, thick] (0.9,1.8) -- (2.1,1.8);
        
        \end{scope}
        
        \node at (13,-2) {$L^c$};
        \node[vertex, rectangle, label=right:$t^c_2$, below=of mbc] (t2c) {};
        \draw[->] (mbl) -- (t2c);
        \draw[->] (mbc) -- (t2c);
        \draw[->] (mbr) -- (t2c);
        \draw[->] (ut) -- (ub);

        \coordinate[right=of p7] (temp);
        \draw[->] (p2) to[out=45, in=135] (temp) to[out=-45, in=90] (mbr);

        \draw[->] (p5) to[in=150] (mt3);
        
    \end{tikzpicture}

    \caption{The gadgets of $D$ corresponding to a variable $x\in X$ appearing as a positive literal in clause $c\in C$ such that $C_2(x) = c$ and $X_3(c) = x$. The sources and sinks of $D$ are represented with squares, and link gadgets are represented using their compact representation.}
    \label{fig-delta+mfd-reduction}
\end{figure}

%% file: conclusion.tex
\section{Conclusion and further work}\label{sec-conclusion}

In this work, we studied some structural parameters that, combined together, give an optimal (under \ETH) \FPT\ algorithm to solve \gsfull. In particular, we considered the influence of the maximum degree of a digraph, which can be seen as some measure of sparsity. The parameterized complexity of \gsfull has been studied recently for other sparse digraphs, namely ``tree-like'' directed (acyclic) graphs \cite{DBLP:conf/caldam/Foucaud26}. A natural extension of this work would be to consider the parameterized complexity of \gsfull with respect to other parameters capturing some notion of sparsity, or other classes of digraphs that are sparse. One could also study the existence of other kernels for \gsfull on directed graphs, especially for cases where the problem is known to be \FPT. In particular, to the best of our knowledge, no polynomial size kernel for \gsfull on digraphs is known.

%% file: references1.bib
@book{DBLP:books/sp/CyganFKLMPPS15,
  author    = {M. Cygan and F. V. Fomin and L. Kowalik and D. Lokshtanov and
               D. Marx and M. Pilipczuk and M. Pilipczuk and S. Saurabh},
  title     = {Parameterized Algorithms},
  publisher = {Springer},
  year      = {2015}
}

@inproceedings{floISAAC20,
  author    = {D. Chakraborty and S. Das and F. Foucaud and H. Gahlawat and
               D. Lajou and B. Roy},
  title     = {Algorithms and complexity for geodetic sets on planar and chordal graphs},
  booktitle = {ISAAC 2020},
  series    = {LIPIcs},
  volume    = {181},
  pages     = {7:1--7:15},
  year      = {2020}
}

@article{DBLP:journals/tcs/ChakrabortyGR23,
  author  = {D. Chakraborty and H. Gahlawat and B. Roy},
  title   = {Algorithms and complexity for geodetic sets on partial grids},
  journal = {Theoretical Computer Science},
  volume  = {979},
  pages   = {114217},
  year    = {2023}
}

@inproceedings{floCALDAM20,
  author    = {D. Chakraborty and F. Foucaud and H. Gahlawat and
               S. K. Ghosh and B. Roy},
  title     = {Hardness and Approximation for the Geodetic Set Problem in Some Graph Classes},
  booktitle = {CALDAM 2020},
  series    = {Lecture Notes in Computer Science},
  volume    = {12016},
  pages     = {102--115},
  year      = {2020}
}

@book{bookGC,
  author    = {I. M. Pelayo},
  title     = {Geodesic Convexity in Graphs},
  publisher = {Springer},
  year      = {2013}
}

@book{BOOKaraujo2025,
  author    = {J. Araújo and M. C. Dourado and F. Protti and R. M. Sampaio},
  title     = {Introduction to Graph Convexity: An Algorithmic Approach},
  publisher = {Springer},
  year      = {2025}
}

@article{KK22,
  author  = {L. Kellerhals and T. Koana},
  title   = {Parameterized Complexity of Geodetic Set},
  journal = {Journal of Graph Algorithms and Applications},
  volume  = {26},
  number  = {4},
  pages   = {401--419},
  year    = {2022}
}

@article{harary1993,
  author  = {F. Harary and E. Loukakis and C. Tsouros},
  title   = {The geodetic number of a graph},
  journal = {Mathematical and Computer Modelling},
  volume  = {17},
  number  = {11},
  pages   = {89--95},
  year    = {1993}
}

@article{dourado2010,
  author  = {M. C. Dourado and F. Protti and D. Rautenbach and J. L. Szwarcfiter},
  title   = {Some remarks on the geodetic number of a graph},
  journal = {Discrete Mathematics},
  volume  = {310},
  number  = {4},
  pages   = {832--837},
  year    = {2010}
}

@inproceedings{ekim2012,
  author    = {T. Ekim and A. Erey and P. Heggernes and P. van’t Hof and D. Meister},
  title     = {Computing minimum geodetic sets of proper interval graphs},
  booktitle = {LATIN 2012},
  series    = {Lecture Notes in Computer Science},
  volume    = {7256},
  pages     = {279--290},
  year      = {2012}
}

@article{mezzini2018,
  author  = {M. Mezzini},
  title   = {Polynomial time algorithm for computing a minimum geodetic set in outerplanar graphs},
  journal = {Theoretical Computer Science},
  volume  = {745},
  pages   = {63--74},
  year    = {2018}
}

@article{wellpart,
  author  = {J. Ahn and L. Jaffke and O. Kwon and P. T. Lima},
  title   = {Well-partitioned chordal graphs},
  journal = {Discrete Mathematics},
  volume  = {345},
  number  = {10},
  pages   = {112985},
  year    = {2022}
}

@inproceedings{DIT21,
  author    = {T. Davot and L. Isenmann and J. Thiebaut},
  title     = {On the Approximation Hardness of Geodetic Set and Its Variants},
  booktitle = {COCOON 2021},
  series    = {Lecture Notes in Computer Science},
  volume    = {13025},
  pages     = {76--88},
  year      = {2021}
}

@article{chartrand2000geodetic,
  author  = {G. Chartrand and P. Zhang},
  title   = {The geodetic number of an oriented graph},
  journal = {European Journal of Combinatorics},
  volume  = {21},
  number  = {2},
  pages   = {181--189},
  year    = {2000}
}

@article{lu2007geodetic,
  author  = {C. Lu},
  title   = {The geodetic numbers of graphs and digraphs},
  journal = {Science in China Series A: Mathematics},
  volume  = {50},
  number  = {8},
  pages   = {1163--1172},
  year    = {2007}
}

@article{chang2004geodetic,
  author  = {G. J. Chang and L. D. Tong and H. T. Wang},
  title   = {Geodetic spectra of graphs},
  journal = {European Journal of Combinatorics},
  volume  = {25},
  number  = {3},
  pages   = {383--391},
  year    = {2004}
}

@article{dong2009upper,
  author  = {L. Dong and C. Lu and X. Wang},
  title   = {The upper and lower geodetic numbers of graphs},
  journal = {Ars Combinatoria},
  volume  = {91},
  pages   = {401--409},
  year    = {2009}
}

@article{Chartrand2003,
  author  = {G. Chartrand and J. F. Fink and P. Zhang},
  title   = {The hull number of an oriented graph},
  journal = {International Journal of Mathematics and Mathematical Sciences},
  volume  = {2003},
  number  = {36},
  pages   = {2265--2275},
  year    = {2003}
}

@article{Farrugia05,
  author  = {A. Farrugia},
  title   = {Orientable convexity, geodetic and hull numbers in graphs},
  journal = {Discrete Applied Mathematics},
  volume  = {148},
  number  = {3},
  pages   = {256--262},
  year    = {2005}
}

@article{AraujoA22,
  author  = {J. Araújo and P. S. M. Arraes},
  title   = {Hull and geodetic numbers for some classes of oriented graphs},
  journal = {Discrete Applied Mathematics},
  volume  = {323},
  pages   = {14--27},
  year    = {2022}
}

@inproceedings{BDM24,
  author    = {B. Bergougnoux and O. Defrain and F. Mc Inerney},
  title     = {Enumerating Minimal Solution Sets for Metric Graph Problems},
  booktitle = {WG 2024},
  series    = {Lecture Notes in Computer Science},
  volume    = {14760},
  pages     = {50--64},
  year      = {2024}
}

@inproceedings{floICALP24,
  author    = {F. Foucaud and E. Galby and L. Khazaliya and S. Li and
               F. Mc Inerney and R. Sharma and P. Tale},
  title     = {Problems in NP Can Admit Double-Exponential Lower Bounds When Parameterized by Treewidth or Vertex Cover},
  booktitle = {ICALP 2024},
  series    = {LIPIcs},
  volume    = {297},
  pages     = {66:1--66:19},
  year      = {2024}
}

@inproceedings{floSTACS25,
  author    = {F. Foucaud and E. Galby and L. Khazaliya and S. Li and
               F. Mc Inerney and R. Sharma and P. Tale},
  title     = {Metric Dimension and Geodetic Set Parameterized by Vertex Cover},
  booktitle = {STACS 2025},
  series    = {LIPIcs},
  volume    = {327},
  pages     = {33:1--33:20},
  year      = {2025}
}

@inproceedings{T25,
  author    = {P. Tale},
  editor       = {A. Agrawal and
                  E. Jan van Leeuwen},
  title        = {Geodetic Set on Graphs of Constant Pathwidth and Feedback Vertex Set
                  Number},
  booktitle    = {{IPEC} 2025},
  series       = {LIPIcs},
  volume       = {358},
  pages        = {28:1--28:14},
  publisher    = {Schloss Dagstuhl - Leibniz-Zentrum f{\"{u}}r Informatik},
  year         = {2025},
}

@article{tovey1984,
  author  = {C. A. Tovey},
  title   = {A Simplified NP-complete Satisfiability Problem},
  journal = {Discrete Applied Mathematics},
  volume  = {8},
  number  = {1},
  pages   = {85--89},
  year    = {1984}
}

@inproceedings{DBLP:conf/caldam/Foucaud26,
  author       = {F. Foucaud and
                  N. Ghareghani and
                  L. Lorieau and
                  M. Mohammad Noori and
                  R. Parvini Oskuei and
                  P. Tale},
  editor       = {N. Misra and
                  A. Pandey},
  title        = {Algorithms and Hardness for Geodetic Set on Tree-Like Digraphs},
  booktitle    = {CALDAM 2026},
  series       = {Lecture Notes in Computer Science},
  volume       = {16445},
  pages        = {179--193},
  publisher    = {Springer},
  year         = {2026},
  bibsource    = {dblp computer science bibliography, https://dblp.org}
}

@article{DBLP:journals/iandc/ChakrabortyDFGL26,
  author       = {D. Chakraborty and
                  S. Das and
                  F. Foucaud and
                  H. Gahlawat and
                  D. Lajou},
  title        = {Algorithms and complexity for geodetic sets on interval and chordal
                  graphs},
  journal      = {Information and Computation},
  volume       = {311},
  pages        = {105456},
  year         = {2026},
  bibsource    = {dblp computer science bibliography, https://dblp.org}
}

@article{bus,
  author  = {C. Wang and Y. Song and G. Fan and H. Jin and L. Su and
             F. Zhang and X. Wang},
  title   = {Optimizing Cross-Line Dispatching for Minimum Electric Bus Fleet},
  journal = {IEEE Transactions on Mobile Computing},
  volume  = {22},
  number  = {4},
  pages   = {2307--2322},
  year    = {2023}
}

@article{DBLP:journals/algorithmica/FoucaudGPS21,
  author  = {F. Foucaud and B. Gras and A. Perez and F. Sikora},
  title   = {On the Complexity of Broadcast Domination and Multipacking in Digraphs},
  journal = {Algorithmica},
  volume  = {83},
  number  = {9},
  pages   = {2651--2677},
  year    = {2021}
}

@inproceedings{ETH1999,
  author    = {R. Impagliazzo and R. Paturi},
  title     = {Complexity of k-SAT},
  booktitle = {Proceedings of the Fourteenth Annual IEEE Conference on Computational Complexity},
  pages     = {237--240},
  year      = {1999}
}

@inproceedings{harris_et_al2024,
  author    = {D. G. Harris and N. S. Narayanaswamy},
  title     = {A Faster Algorithm for Vertex Cover Parameterized by Solution Size},
  booktitle = {STACS 2024},
  series    = {LIPIcs},
  volume    = {289},
  pages     = {40:1--40:18},
  year      = {2024}
}

@article{lampis2025,
author = {Lampis, M. and Melissinos, N. and Vasilakis, M.},
title = {Parameterized Max Min Feedback Vertex Set},
journal = {SIAM Journal on Discrete Mathematics},
volume = {39},
number = {3},
pages = {1587-1620},
year = {2025}
}

@inproceedings{DBLP:conf/sosa/HaeuplerJS26,
  author       = {B. Haeupler and
                  Y. Jiang and
                  T. Saranurak},
  editor       = {S. Assadi and
                  E. Rotenberg},
  title        = {Reducing Shortcut and Hopset Constructions to Shallow Graphs},
  booktitle    = {{SOSA} 2026},
  pages        = {385--393},
  publisher    = {{SIAM}},
  year         = {2026},
  bibsource    = {dblp computer science bibliography, https://dblp.org}
}

@article{bals2026revisitingdiameterdirectedgraphs,
      title={Revisiting Diameter in Directed Graphs}, 
      author={B. Bals and J. Blikstad and D. Dadush and Y. Nazari and J. Schmidt},
      year={2026},
      journal = {arXiv eprint},
      pages={2606.08217},
      note = {\url{https://arxiv.org/abs/2606.08217}}, 
}
